\newcommand{\hypercolor}{blue}
\newcommand{\InsertPDF}[2]{\iffigure\includegraphics[scale=#1]{#2}\fi}
\newtheorem{thm}{Theorem}
\newtheorem{proposition}[thm]{Proposition}
\newtheorem{remark}[thm]{Remark}
\newtheorem{define}{Definition}
\newtheorem{postulateno}{Postulate}
\newtheorem{lemma}[thm]{Lemma}
\newtheorem{cor}[thm]{Corollary}
\newtheorem{ex}{Example}
\newcounter{proof}
\NewDocumentEnvironment{proof}{o}
 {
  \par\medskip
  \noindent
  \textbf{Proof~}
 }
 {\QED\par\smallskip}
\newcounter{postulate}
\renewcommand{\thepostulate}{\arabic{postulate}}
\NewDocumentEnvironment{postulate}{oo}
 {
  \refstepcounter{postulate}
  \begin{postulateno}
  \textbf{\hspace{-0.5em}\IfNoValueTF{#2}{\thepostulate}{#2} ~(\IfNoValueTF{#1}{}{#1})}
 }
 {
  \end{postulateno}
 }
\newcommand{\ha}{\hat{a}}
\newcommand{\hc}{\hat{c}}
\newcommand{\hPi}{\hat{\Pi}}
\newcommand{\hPhi}{\hat{\Phi}}
\newcommand{\hPsi}{\hat{\Psi}}
\newcommand{\hLambda}{\hat{\Lambda}}
\newcommand{\hrho}{\hat{\rho}}
\newcommand{\hsigma}{\hat{\sigma}}
\newcommand{\mC}{\mathcal{C}}
\newcommand{\mD}{\mathcal{D}}
\newcommand{\mG}{\mathcal{G}}
\newcommand{\mH}{\mathcal{H}}
\newcommand{\mI}{\mathcal{I}}
\newcommand{\mN}{\mathcal{N}}
\newcommand{\mP}{\mathcal{P}}
\newcommand{\mS}{\mathcal{S}}
\newcommand{\mT}{\mathcal{T}}
\newcommand{\mX}{\mathcal{X}}
\newcommand{\mZ}{\mathcal{Z}}
\newcommand{\PS}{P_{\rm S}}
\newcommand{\PE}{P_{\rm E}}
\newcommand{\PI}{P_{\rm I}}
\newcommand{\ident}{\hat{1}}
\newcommand{\Real}{\mathbb{R}}
\newcommand{\Complex}{\mathbb{C}}
\newcommand{\Integer}{\mathbb{Z}}
\newcommand{\POVM}{\mM}
\newcommand{\QED}{\hspace*{0pt}\hfill $\blacksquare$}
\DeclareMathOperator{\argmax}{argmax}
\DeclareMathOperator{\argmin}{argmin}
\renewcommand{\i}{\mathbf{i}}
\newcommand{\T}{\mathsf{T}}
\DeclareMathOperator{\Tr}{Tr}
\newcommand{\Trp}[1]{\mathop{\mathrm{Tr}_{#1}}}
\DeclareMathAlphabet{\mymathbb}{U}{BOONDOX-ds}{m}{n}
\newcommand{\zero}{\mymathbb{0}}
\newcommand{\kket}[1]{|#1\rangle\!\rangle}
\newcommand{\bbra}[1]{\langle\!\langle#1|}
\newcommand{\bigop}[1]{\mathop{\vphantom{\sum}\mathchoice{\vcenter{\hbox{\huge $#1$}}}{\vcenter{\hbox{\Large $#1$}}}{#1}{#1}}\displaylimits}
\newcommand{\opt}{\star}
\newcommand{\summ}{\sum_{m=0}^{M-1}}
\newcommand{\sumj}{\sum_{j=0}^{J-1}}
\newcommand{\sumr}{\sum_{r=0}^{R-1}}
\newcommand{\sumg}{\sum_{g \in \mG}}
\newcommand{\sumk}{\sum_{k=0}^{K-1}}
\renewcommand{\ident}{\mathbbm{1}}
\newcommand{\I}{I}
\newcommand{\V}{V}
\newcommand{\W}{W}
\newcommand{\X}{X}
\newcommand{\tV}{{\tilde{\V}}}
\newcommand{\Vt}{{\V_t}}
\newcommand{\Wt}{{\W_t}}
\newcommand{\VT}{{\V_T}}
\newcommand{\WT}{{\W_T}}
\newcommand{\WVT}{{\WT \ot \VT}}
\newcommand{\WV}{{\W \ot \V}}
\newcommand{\WVt}{{\Wt \ot \Vt}}
\newcommand{\NV}{N_\V}
\newcommand{\NW}{N_\W}
\newcommand{\Pos}{\mathsf{Pos}}
\newcommand{\Chn}{\mathsf{Chn}}
\newcommand{\Den}{\mathsf{Den}}
\newcommand{\DenP}{\Den^\mathsf{P}}
\newcommand{\Her}{\mathsf{Her}}
\newcommand{\Uni}{\mathsf{Uni}}
\newcommand{\inter}{\mathrm{int}}
\newcommand{\cE}{\mathcal{E}}
\newcommand{\tg}{\tilde{g}}
\renewcommand{\th}{\tilde{h}}
\newcommand{\hcE}{\hat{\cE}}
\newcommand{\tcE}{\tilde{\cE}}
\renewcommand{\POVM}{\mathsf{POVM}}
\newcommand{\Tester}{\mathsf{Test}}
\newcommand{\ot}{\otimes}
\newcommand{\cross}{\times}
\newcommand{\tst}[1]{{\textstyle #1}}
\newcommand{\Lin}{\mathsf{Lin}}
\newcommand{\Ot}[2]{\tst{\bigotimes_{#1}^{#2}}}
\newcommand{\OtT}{\Ot{t=1}{T}}
\newcommand{\astT}{\tst{\bigast_{t=1}^T}}
\newcommand{\astTPos}{\astT \Pos(\Vt,\Wt)}
\newcommand{\astTHer}{\astT \Her(\Vt,\Wt)}
\newcommand{\astTChn}{\astT \Chn(\Vt,\Wt)}
\newcommand{\astTp}{\tst{\bigast_{t=1}^{T+1}}}
\newcommand{\astTpPos}{\astTp \Pos(\W_{t-1},\Vt)}
\newcommand{\astTpHer}{\astTp \Her(\W_{t-1},\Vt)}
\newcommand{\Pro}[2]{\tst{\prod_{#1}^{#2}}}
\newcommand{\Ad}{\mathrm{Ad}}
\newcommand{\mTh}{\hat{\mT}}
\newcommand{\mTG}{{\mT_\G}}
\newcommand{\mTGh}{\mTh_\G}
\newcommand{\G}{\mathrm{G}}
\renewcommand{\S}{\mS}
\newcommand{\SG}{{\S_\G}}
\renewcommand{\P}{\mP}
\newcommand{\PG}{\P_\G}
\newcommand{\D}{\mD}
\newcommand{\DG}{\D_\G}
\newcommand{\mCG}{{\mC_\G}}
\newcommand{\Prob}{\mathsf{Prob}}
\newcommand{\Probmax}{\Prob_\mathrm{max}}
\newcommand{\ts}{\tilde{s}}
\newcommand{\tit}{\tilde{t}}
\newcommand{\Choi}{\mathsf{C}}
\newcommand{\tChoi}{\tilde{\Choi}}
\newcommand{\tPhi}{\tilde{\Phi}}
\newcommand{\tPsi}{\tilde{\Psi}}
\newcommand{\tTheta}{\tilde{\Theta}}
\newcommand{\lambdaS}{\lambda_\S}
\newcommand{\lambdaSG}{\lambda_\SG}
\newcommand{\lambdamax}{\lambda_\mathrm{max}}
\newcommand{\tU}{\tilde{U}}
\newcommand{\tLambda}{\tilde{\Lambda}}
\newcommand{\Endash}{\tst{\textendash}}
\renewcommand{\c}{\circ}
\newcommand{\g}{{(g)}}
\let\b\relax
\DeclareMathOperator{\b}{\bullet}
\renewcommand{\d}{\diamond}
\renewcommand{\ol}{\overline}
\DeclareMathOperator{\co}{\mathsf{co}}
\DeclareMathOperator{\clco}{\ol{\co}}
\DeclareMathOperator{\coni}{\mathsf{coni}}
\DeclareMathOperator{\clconi}{\ol{\coni}}
\newcommand{\inv}[1]{\bar{#1}}
\let\ast\relax
\DeclareMathOperator{\ast}{\circledast}
\newcommand{\bigast}{\bigop{\ast}}
\newcommand{\Sa}{S_{\!\mathrm{a}}}
\newcommand{\pinc}{p_\mathrm{inc}}
\newcommand{\pNP}{p_\mathrm{false}}
\newcommand{\x}{\mathrm{x}}
\newcommand{\y}{\mathrm{y}}
\newcommand{\z}{\mathrm{z}}
\newcommand{\range}[2]{\{#1,\dots,#2\}}
\newcommand{\termdef}{\textit}
\setlist[enumerate]{label=\arabic*), leftmargin=3em, itemsep=0pt, parsep=0pt, labelwidth=5em}
\let\protect\relax
  \xdef\Craket{\protect\expandafter\noexpand\csname Craket \endcsname}
\gdef\csname Craket \endcsname#1{\begingroup
     \ifx\SavedDoubleVert\relax
       \let\SavedDoubleVert\|\let\|\BraDoubleVert
     \fi
     \mathcode`\|32768\let|\BraVert
     \left({#1}\right)\endgroup}
\definecolor{memo}{RGB}{128,0,255}
\definecolor{gray}{RGB}{128,128,128}
\newcommand{\Discard}[1]{}
\begin{document}

\preprint{APS/123-QED}

\title{Generalized quantum process discrimination problems}

\affiliation{%
 Quantum Information Science Research Center, Quantum ICT Research Institute, Tamagawa University,
 Machida, Tokyo 194-8610, Japan
}%

\author{Kenji Nakahira}
\affiliation{%
 Quantum Information Science Research Center, Quantum ICT Research Institute, Tamagawa University,
 Machida, Tokyo 194-8610, Japan
}%

\author{Kentaro Kato}
\affiliation{%
 Quantum Information Science Research Center, Quantum ICT Research Institute, Tamagawa University,
 Machida, Tokyo 194-8610, Japan
}%

\date{\today}

\begin{abstract}
 We study a broad class of quantum process discrimination problems
 that can handle many optimization strategies
 such as the Bayes, Neyman-Pearson, and unambiguous strategies,
 where each process can consist of multiple time steps and can have an internal memory.
 Given a collection of candidate processes, our task is to find a discrimination strategy,
 which may be adaptive and/or entanglement-assisted, that maximizes a given objective function
 subject to given constraints.
 Our problem can be formulated as a convex problem.
 Its Lagrange dual problem with no duality gap and necessary and sufficient conditions
 for an optimal solution are derived.
 We also show that if a problem has a certain symmetry and at least one optimal solution exists,
 then there also exists an optimal solution with the same type of symmetry.
 A minimax strategy for a process discrimination problem is also discussed.
 As applications of our results, we provide some problems in which
 an adaptive strategy is not necessary for optimal discrimination.
 We also present an example of single-shot channel discrimination
 for which an analytical solution can be obtained.
\end{abstract}

\pacs{03.67.Hk}
\keywords{quantum information; quantum process discrimination; generalized criteria;
          convex optimization}
\maketitle



\section{Introduction}

A quantum process, which is a mathematical object that models the probabilistic description of quantum phenomena, plays a fundamental role in quantum information theory.
Identifying a quantum process is of great importance to characterize the behavior of quantum devices.
We focus on the situation in which a process is known to belong to a given finite collection of processes; our goal is to determine which one is used.
This problem often arises, e.g., in quantum communication, quantum metrology, and quantum cryptography.

Quantum states can be regarded as a special case of quantum processes.
Since the seminal works of Helstrom, Holevo, and Yuen {\it et al.}
\cite{Hel-1969,Hol-1973,Yue-Ken-Lax-1975} appeared in the end of the 1960's
and 1970's, quantum state discrimination has been extensively investigated
\cite{Ban-Kur-Mom-Hir-1997,Bar-2001,Jez-Reh-Fiu-2002,Cho-Hsu-2003,
Eld-For-2001,Kat-Hir-2003,Eld-Meg-Ver-2004,Tys-2010,Bae-2013,Nak-Usu-2013-group}.
This problem can be formulated as a semidefinite programming (SDP) problem
(e.g., \cite{Bel-1975,Eld-Meg-Ver-2003}),
which allows us to easily analyze properties of optimal discrimination.
Many optimization strategies can be considered,
among which it is necessary to choose a suitable one depending on the problem being solved.
Possibly the simplest practical strategy is
to find discrimination maximizing the average success probability,
which is often called minimum-error discrimination.
The Bayes strategy \cite{Hol-1973,Yue-Ken-Lax-1975,Hel-1976} and
the Neyman-Pearson strategy \cite{Hel-1976,Hol-1982-Prob,Par-1997} are also frequently used.
As other strategies, discrimination maximizing the average success probability
has been investigated subject to several constraints: for example,
errors are not allowed \cite{Iva-1987,Che-Bar-1998}
(which is called optimal unambiguous discrimination),
the average error probability does not exceed a fixed value
\cite{Tou-Ada-Ste-2007,Hay-Has-Hor-2008,Sug-Has-Hor-Hay-2009},
and the average inconclusive (or failure) probability is fixed
\cite{Che-Bar-1998-inc,Eld-2003-inc,Fiu-Jez-2003}
(which is referred to as optimal inconclusive discrimination).
In the case in which the prior probabilities of the states are unknown,
to optimize discrimination, several strategies based on the minimax criterion have been investigated
\cite{Hir-Ike-1982,Osa-Ban-Hir-1996,Dar-Sac-Kah-2005,Kat-2012,Nak-Kat-Usu-2013-minimax}.
Moreover, a generalized state discrimination problem,
which can handle all of the above mentioned strategies, was proposed
\cite{Nak-Kat-Usu-2015-general}.
In these studies, necessary and sufficient conditions for optimal discrimination
have been formulated.
These results help us to find analytical and/or numerical optimal solutions.

A quantum process discrimination problem is more general and
often more difficult to solve than a state discrimination problem.
States, effects, measurements, channels, and superchannels are all special cases of quantum processes.
In this paper, we are concerned with the task of discriminating quantum processes
each of which can consist of multiple time steps and can have an internal memory.
Process discrimination (in particular in the cases of single-shot and multi-shot channels,
including measurements) has been an active area of research for at least the past two decades.
Discrimination of two quantum processes with maximum average success probability
has been widely studied \cite{Aci-2001,Sac-2005,Sac-2005-EB,Li-Qiu-2008,Mat-Pia-Wat-2010,Sed-Zim-2014,
Pir-Lup-2017,Puc-Paw-Kra-Kuk-2018,Pir-Lau-Lup-Per-2019}.
Optimal unambiguous discrimination \cite{Wan-Yin-2006,Zim-Hei-2008,Zim-Sed-2010,Reh-Far-Jeo-Shi-2018},
optimal inconclusive discrimination \cite{Sed-Zim-2014},
and the Neyman-Pearson strategy \cite{Maf-Kou-Gha-Par-2019,Hir-2021} have also been investigated.
It is well known that the problem of finding minimum-error discrimination between two channels
can be formulated as an SDP problem \cite{Gil-Lan-Nie-2005,Gut-Wat-2007,Wat-2009}.
In the more general case of more than two processes that can consist of multiple time steps
with or without memory, the problem has been shown to be formulated as an SDP problem \cite{Chi-2012}
(see also \cite{Zim-2008,Jen-Pla-2016} for the case of single-step processes).
Note that such a problem can handle adaptive (feedback-assisted)
and/or entanglement-assisted discrimination.
However, in particular in the case of multi-step processes,
only a few optimization strategies have ever been reported;
these results cannot readily be applied to many other optimization strategies.
Moreover, the properties of optimal discrimination are not known except for some special cases.

In this paper, we address generalized process discrimination problems,
which are applicable to a broad class of optimization strategies
including all of the above mentioned ones.
Our approach can significantly reduce the required efforts for analyzing
this class of process discrimination problems
compared to analyzing these problems separately.
We show that our discrimination problems are formulated as convex problems,
which are a generalization of SDP problems.
Convex problems are well-understood, and thus our formulation allows us to
easily investigate the properties of optimal discrimination.
Note that the problems addressed in this paper can be interpreted as an extension of
generalized state discrimination problems treated in Ref.~\cite{Nak-Kat-Usu-2015-general}.
However, the techniques used in Ref.~\cite{Nak-Kat-Usu-2015-general} cannot
directly be used for our problems; process discrimination problems are much harder to
analyze than state discrimination problems.

The paper is organized as follows.
In Sec.~\ref{sec:channel}, we provide a generalized process discrimination problem,
which is formulated as a convex problem with a so-called quantum tester.
In Sec.~\ref{sec:opt}, we provide its Lagrange dual problem and
show that the optimal values of the primal and dual problems coincide.
Also, necessary and sufficient conditions for a tester to be optimal are given.
Moreover, we derive necessary and sufficient conditions that the optimal value
remain unchanged even when a certain additional constraint is imposed.
In Sec.~\ref{sec:symmetry}, it is shown that if a problem has a certain symmetry
and an optimal solution exists, then there also exists an optimal solution
having the same type of symmetry.
In Sec.~\ref{sec:minimax}, we introduce a minimax version of a process discrimination problem.
In Sec.~\ref{sec:example}, some examples are given to demonstrate
how to apply our results to solve a problem.

\section{Process discrimination problems} \label{sec:channel}

\subsection{Notation}

We first introduce some notation.
$\Real$, $\Real_+$, and $\Complex$ denote, respectively,
the sets of all real, nonnegative real, and complex numbers.
The complex conjugate of $z \in \Complex$ is denoted by $z^*$.
For each finite-dimensional complex Hilbert space
(which we also call a system) $\V$, let $\NV$ be its dimension.
We will identify a one-dimensional system with $\Complex$.
For each matrix $X$ on $\V$, let $X^\dagger$ and $X^\T$ be, respectively,
the Hermitian transpose and the transpose of $X$ (in the standard basis of $\V$).
Let $\Her_\V$ and $\Pos_\V$ be, respectively, the sets of all Hermitian and
positive semidefinite matrices on $\V$.
$\Her_\V$ is an $\NV^2$-dimensional real Hilbert space with the inner product defined by
$\braket{X,Y} \coloneqq \Tr(XY)$ $~(X,Y \in \Her_\V)$.
A positive semidefinite matrix is called pure if it has rank one.
We will denote by $\Her(\V,\W)$ the set of all linear maps from $\Her_\V$ to $\Her_\W$,
every element of which is called Hermitian-preserving.
Let $\Pos(\V,\W)$ and $\Chn(\V,\W)$ be, respectively, the sets of all completely positive (CP) maps
and all trace-preserving CP maps from $\Her_\V$ to $\Her_\W$.
Moreover, let $\Den_\V$ be the set of all positive semidefinite matrices with unit trace
(i.e., density matrices) on $\V$
and $\DenP_\V$ be the set of all pure elements in $\Den_\V$.
For a set $\mX$ in a real vector space, let $\Lin(\mX)$ be the smallest real vector space
containing $\mX$.
Obviously, we have $\Chn(\V,\W) \subset \Pos(\V,\W) \subset \Her(\V,\W)$,
$\DenP_\V \subset \Den_\V \subset \Pos_\V \subset \Her_\V$,
$\Lin[\Pos(\V,\W)] = \Her(\V,\W)$, and $\Lin(\Pos_\V) = \Her_\V$.
We can identify $\Chn(\Complex,\V)$ with $\Den_\V$, $\Pos(\Complex,\V)$ with $\Pos_\V$,
and $\Her(\Complex,\V)$ with $\Her_\V$.
$\I_\V$ and $\ident_\V$, respectively, denote the identity matrix on $\V$
and the identity map on $\Her_\V$.
$\zero$ denotes a zero matrix.
In quantum theory, each single-step process is described by a CP map.
In particular, a single-step process described by a trace-preserving CP map is called a quantum channel.
Any quantum state, which is described by a density matrix,
and any quantum measurement, which is described by a positive operator-valued measure (POVM),
can be regarded as special cases of quantum channels.
Fix a natural number $M \ge 2$ and denote by $\POVM_\V$ the set of all POVMs with $M$
elements on a system $\V$.
Throughout this paper, we consider only measurements with a finite number of outcomes.
Given a set $\mX$, let $\inter(\mX)$, $\mX^*$, $\co \mX$, and $\coni \mX$
be the interior, the dual cone, the convex hull
[i.e., $\co \mX \coloneqq \{ \sum_i p_i x_i : p_i \in \Real_+, ~\sum_i p_i = 1, ~x_i \in \mX \}$],
and the (convex) conical hull
[i.e., $\coni \mX \coloneqq \{ \sum_i p_i x_i : p_i \in \Real_+, ~x_i \in \mX \}$] of $\mX$.
We denote the closure of $\mX$ by $\ol{\mX}$, $\ol{\co \mX}$ by $\clco \mX$,
and $\ol{\coni \mX}$ by $\clconi \mX$.
For a given natural number $T$, let $\tV \coloneqq \WVT \ot \cdots \ot \W_1 \ot \V_1$.
For any $X, Y \in \Her_\V$, let $X \ge Y$ (or $Y \le X$) denote $X - Y \in \Pos_\V$.
For any natural number $n$, let $\mI_n \coloneqq \range{0}{n-1}$.
$\delta_{n,n'}$ denotes the Kronecker delta.
Let $\Uni_\V$ be the set of all unitary and anti-unitary operators on $\V$.
For any $U \in \Uni_\V$, the linear map $\Ad_U \in \Her(\V,\V)$ is defined as%
\footnote{$U$ is an anti-unitary operator on $\V$ if and only if
there exists a unitary operator $\tU \in \Uni_\V$ such that
$\Ad_U(X) = \Ad_{\tU}(X^\T)$ $~(X \in \Her_\V)$.
If $U$ is anti-unitary, then $\Ad_U$ is not CP.}%
\begin{alignat}{2}
 \Ad_U(X) &\coloneqq U X U^\dagger, &\quad &X \in \Her_\V.
\end{alignat}
$\Trp{\V}$ denotes the partial trace over $\V$.

\subsection{Quantum processes, testers, and combs}

\subsubsection{Processes and testers}

We shall introduce a quantum process (or a quantum network) and a quantum tester
\cite{Chi-Dar-Per-2008,Chi-Dar-Per-2009,Chi-2012}
(see also a quantum strategy \cite{Gut-Wat-2007}).
Let us consider the connection of
$T$ linear maps
$\{ \hc^{(t)} \in \Her(\W'_{t-1} \ot \Vt, \W'_t \ot \Wt) \}_{t=1}^T$
as shown in Fig.~\ref{fig:tester},
where $\W'_0 \coloneqq \Complex$ and $\W'_T \coloneqq \Complex$.
We mathematically express this process as%
\footnote{Although a linear map $\hat{x} \in \Her(\V,\W)$ is not CP in general,
we will, by abuse of language, refer to $\hat{x}$ as a (single-step) process.
Also, we refer to $\hc$ as a process.}%
\begin{alignat}{1}
 \hc &\coloneqq \hc^{(T)} \ast \hc^{(T-1)} \ast \cdots \ast \hc^{(1)},
 \label{eq:network}
\end{alignat}
where $\ast$ denotes the connection of processes,
which is called the link product \cite{Chi-Dar-Per-2008}.
$\hc$ has definite causal order; for any $t$ and $t'$ with $t < t'$,
signalling from $\hc^{(t')}$ to $\hc^{(t)}$ is impossible
[i.e., $\hc^{(t)}$ is not in the causal future of $\hc^{(t')}$].
$\W'_1,\dots,\W'_{T-1}$ are internal systems of process $\hc$.
Any memoryless process can be expressed in the form of Eq.~\eqref{eq:network}
with $\W'_1 = \dots = \W'_T = \Complex$.
Let $\astTHer$ be the set of all processes $\hc$ expressed in the form of Eq.~\eqref{eq:network}.
As a special case, if $\hc^{(1)} = \cdots = \hc^{(T)}$ holds, then
$\hc$ of Eq.~\eqref{eq:network} is denoted by $[\hc^{(1)}]^{\ast T}$.
Also, let $\astTPos$ and $\astTChn$ be, respectively,
the sets of all processes $\hc$ expressed in the form of Eq.~\eqref{eq:network}
with $\hc^{(t)} \in \Pos(\W'_{t-1} \ot \Vt, \W'_t \ot \Wt)$ and
$\hc^{(t)} \in \Chn(\W'_{t-1} \ot \Vt, \W'_t \ot \Wt)$ for each $t \in \range{1}{T}$.
$\astTChn \subset \astTPos \subset \astTHer$ obviously holds.

A collection of processes expressed in the form
\begin{alignat}{1}
 \hPhi &\coloneqq \{ \hPhi_m \}_{m=0}^{M-1}, \nonumber \\
 \hPhi_m &\coloneqq \hPi_m \ast \hsigma_T \ast \hsigma_{T-1} \ast \cdots \ast \hsigma_1
 \label{eq:hPhi}
\end{alignat}
with $T$ channels
$\{ \hsigma_t \in \Chn(\W_{t-1} \ot \V'_{t-1}, \Vt \ot \V'_t) \}_{t=1}^T$
(where $\W_0 \coloneqq \Complex$ and $\V'_0 \coloneqq \Complex$) and
a measurement $\hPi \coloneqq \{ \hPi_m \}_{m=0}^{M-1} \in \POVM_{\WT \ot \V'_T}$
is called a \termdef{quantum tester}.
It follows that $\hPhi_m \in \astTpPos$ holds, where $\V_{T+1} \coloneqq \Complex$.
Let $\mTGh$ be the set of all testers $\hPhi$ representable in the form of Eq.~\eqref{eq:hPhi}.
We will call each element $\hPhi_m$ of a tester $\hPhi$ a \termdef{tester element}.
In the special case of $T = 1$, a tester is often referred to as a process POVM \cite{Zim-2008}.
A process $\hc$ and a tester element $\hPhi_m$ can be connected as in Fig.~\ref{fig:tester},
which is mathematically expressed by
\begin{alignat}{1}
 \braket{\hPhi_m, \hc} &\coloneqq
 \hPi_m \c [\hc^{(T)} \ot \ident_{\V'_T}] \c \cdots \c [\hc^{(2)} \ot \ident_{\V'_2}]
 \nonumber \\
 &\quad \c [\ident_{\W'_1} \ot \hsigma_2] \c [\hc^{(1)} \ot \ident_{\V'_1}] \c \hsigma_1 \in \Real,
 \label{eq:inner_prod}
\end{alignat}
where $\c$ denotes the map composition.
\begin{figure}[bt]
 \centering
 \InsertPDF{1.0}{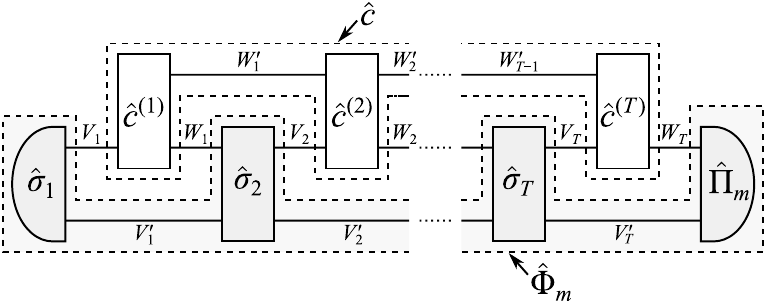}
 \caption{Quantum process
 $\hc \coloneqq \hc^{(T)} \ast \hc^{(T-1)} \ast \cdots \ast \hc^{(1)}$ and tester
 $\hPhi \coloneqq \{ \hPhi_m \coloneqq \hPi_m \ast \hsigma_T \ast \hsigma_{T-1} \ast \cdots
 \ast \hsigma_1 \}_{m=0}^{M-1}$.}
 \label{fig:tester}
\end{figure}

For any two processes $\hc, \hc' \in \astTHer$ and $q, q' \in \Real$,
$q \hc + q' \hc'$ is the element of $\astTHer$ uniquely characterized by
\begin{alignat}{2}
 \braket{\hPhi_m, q \hc + q' \hc'} &= q \braket{\hPhi_m, \hc} + q' \braket{\hPhi_m, \hc'}
\end{alignat}
for any tester element $\hPhi_m$.
Thus, $\astTHer$ can be considered as a real Hilbert space;
$\astTpHer$ is its dual space.

\subsubsection{Choi-Jamio{\l}kowski representations}

Quantum processes and testers can be conveniently mathematically described
in the so-called Choi-Jamio{\l}kowski representations
\cite{Cho-1975,Jam-1972,Jia-Luo-Fu-2013,Chi-Dar-Per-2008-memory}.
Specifically, the Choi-Jamio{\l}kowski representation
of a process $\hc \in \astTHer$, denoted by $\Choi_{\hc}$,
is given as Fig.~\ref{fig:Choi_process_tester}(a),
where $\hPsi_t \coloneqq \kket{\I_\Vt}\bbra{\I_\Vt} \in \Pos_{\Vt \ot \Vt}$,
$\kket{\I_\Vt} \coloneqq \sum_{i=1}^{N_\Vt} \ket{i} \ot \ket{i} \in \Vt \ot \Vt$
($\{ \ket{i} \}_{i=1}^{N_\Vt}$ is the standard basis of $\Vt$),
and $\bbra{\I_\Vt} \coloneqq \kket{\I_\Vt}^\dagger$.
Also, the Choi-Jamio{\l}kowski representation of a tester element
$\hPhi_m \in \astTpPos$, denoted by $\tChoi_{\hPhi_m}$,
is given as Fig.~\ref{fig:Choi_process_tester}(b),
where $\hPsi_t^\dagger \coloneqq \bbra{\I_\Vt} \cdot \Endash \cdot \kket{\I_\Vt}
\in \Pos(\Vt \ot \Vt,\Complex)$.
Both $\Choi$ and $\tChoi$ are well-defined as linear maps.
We can see that $\Choi:\astTHer \to \Her_\tV$ and
$\tChoi:\astTpHer \to \Her(\tV,\Complex)$ are surjective.
For each system $\V$, we often identify any $X \in \Her_\V$ with
$\braket{X,\Endash} \in \Her(\V,\Complex)$%
\footnote{As an example, we consider a POVM element $\Pi_0 \in \Pos_\V$.
In quantum theory, $\Pi_0$ is often identified with the linear map
$\braket{\Pi_0,\Endash} = \Tr(\Pi_0 \cdot \Endash) \in \Her(\V,\Complex)$.},
in which case $\tChoi$ can be regarded as
a map from $\astTpHer$ to $\Her_\tV$.
For the sake of brevity, we will denote the Choi-Jamio{\l}kowski representations
of processes and testers as the same letter without the hat symbol,
e.g., for each $\hc \in \astTHer$ and $\hPhi_m \in \astTpPos$, let
\begin{alignat}{2}
 c &\coloneqq \Choi_{\hc}, &\quad \Phi_m &\coloneqq \tChoi_{\hPhi_m}.
\end{alignat}
For convenience and without confusion,
we will also call $c$ and $\Phi$ a process and a tester, respectively.
We can easily verify
\begin{alignat}{1}
 \braket{\Phi_m, c} &= \braket{\hPhi_m, \hc}.
 \label{eq:Phi_c}
\end{alignat}
In the special case of $\W'_1 = \dots = \W'_{T-1} = \Complex$,
it follows that the Choi-Jamio{\l}kowski representation of each quantum process
$\hc \coloneqq \hc^{(T)} \ast \hc^{(T-1)} \ast \cdots \ast \hc^{(1)}$
$~[\hc^{(t)} \in \Chn(\Vt, \Wt)]$
is written by $c = c^{(T)} \ot c^{(T-1)} \ot \cdots \ot c^{(1)}$
[where $c^{(t)}$ is the Choi-Jamio{\l}kowski representation of $\hc^{(t)}$].
\begin{figure}[bt]
 \centering
 \InsertPDF{1.0}{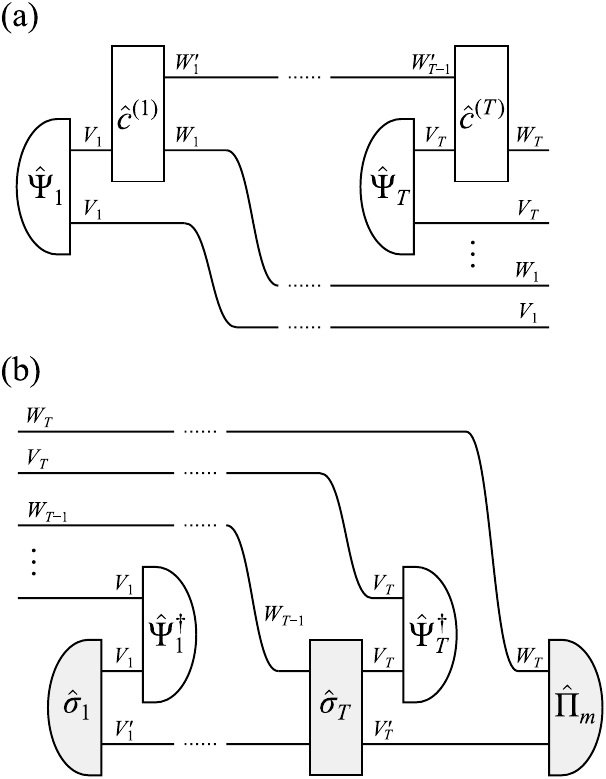}
 \caption{Choi-Jamio{\l}kowski representations of (a) a quantum process
 $\hc \coloneqq \hc^{(T)} \ast \hc^{(T-1)} \ast \cdots \ast \hc^{(1)}$ and
 (b) a quantum tester element
 $\hPhi_m \coloneqq \hPi_m \ast \hsigma_T \ast \hsigma_{T-1} \ast \cdots \hsigma_1$,
 where $\hPsi_t \coloneqq \kket{\I_\Vt}\bbra{\I_\Vt}$
 and $\hPsi_t^\dagger \coloneqq \bbra{\I_\Vt} \cdot \Endash \cdot \kket{\I_\Vt}$.}
 \label{fig:Choi_process_tester}
\end{figure}

\subsubsection{Combs}

Each element of $\astTChn$ is called a \termdef{quantum comb} \cite{Chi-Dar-Per-2008}
(also known as a supermap or a quantum strategy \cite{Gut-Wat-2007}).
For each comb $\hc$, we will also call $c \coloneqq \Choi_{\hc}$ a comb.
Let $\OtT \Chn_\WVt$ (or simply, $\Chn_\tV$) be the set of all combs $c \in \Pos_\tV$, i.e.,
\begin{alignat}{1}
 \Chn_\tV &\coloneqq \OtT \Chn_\WVt \coloneqq \left\{ \Choi_{\hc} : \hc \in \astTChn \right\}.
\end{alignat}
We can identify $\Chn_{\V \ot \Complex}$ with $\Den_\V$
and $\Chn_{\Complex \ot \V}$ with $\{ \I_\V \}$.
$c \in \Pos_\tV$ is a comb if and only if there exists
$\left\{ c_t \in \Pos_{\WVt \ot \cdots \ot \W_1 \ot \V_1} \right\}_{t=1}^T$
such that \cite{Chi-Dar-Per-2008,Chi-Dar-Per-2009}
\begin{alignat}{2}
 c_T &= c, \nonumber \\
 \Trp{\Wt} c_t &= \I_\Vt \ot c_{t-1}, &\quad &\forall t \in \range{2}{T}, \nonumber \\
 \Trp{\W_1} c_1 &= \I_{\V_1}.
 \label{eq:Comb}
\end{alignat}
For each comb $c$, $\{ c_t \}_{t=1}^T$ satisfying Eq.~\eqref{eq:Comb} is
uniquely determined by $c_T \coloneqq c$ and
\begin{alignat}{1}
 c_t &\coloneqq \frac{1}{N_{\V_{t+1}}} \Trp{\W_{t+1} \ot \V_{t+1}} c_{t+1}, \quad t \in \range{1}{T-1}.
\end{alignat}

Let
\begin{alignat}{1}
 \mTG &\coloneqq \left\{ \{ \tChoi_{\hPhi_m} \}_{m=0}^{M-1} : \hPhi \in \mTGh \right\}, \nonumber \\
 \SG &\coloneqq \Ot{t=1}{T+1} \Chn_{\Vt \ot \W_{t-1}}.
 \label{eq:TGSG}
\end{alignat}
Note that $\SG = \left\{ \I_\WT \ot \tau : \tau \in \OtT \Chn_{\Vt \ot \W_{t-1}} \right\}$ holds
from $\V_{T+1} = \Complex$.
$\Phi \coloneqq \{ \Phi_m \}_{m=0}^{M-1} \subset \Pos_\tV$ is in $\mTG$
if and only if $\summ \Phi_m \in \SG$ \cite{Gut-Wat-2007,Chi-Dar-Per-2008}.
Thus, we have
\begin{alignat}{1}
 \mTG &= \left\{ \Phi \in \mCG : \summ \Phi_m \in \SG \right\}, \nonumber \\
 \mCG &\coloneqq \Pos_\tV^M.
 \label{eq:TG}
\end{alignat}
We can easily verify
\begin{alignat}{2}
 \braket{\varphi,c} &= 1, &\quad &\forall c \in \Chn_\tV, ~ \varphi \in \SG,
 \label{eq:Phim_c_1}
\end{alignat}
which implies that, for every $c \in \Chn_\tV$ and $\Phi \in \mTG$,
$\{ \braket{\Phi_m, c} \}_{m=0}^{M-1}$ is a probability distribution.
Thus, $\Phi \in \mTG$ can be regarded as a map from combs to probability distributions.

\subsection{Discrimination problems}

To simplify the discussion, we first restrict ourselves to $T$-shot channel discrimination problems.
Let us consider the problem of determining which of $R$ known quantum channels,
$\{ \hLambda_r \}_{r=0}^{R-1} \subset \Pos(\V,\W)$, is used.
This problem is depicted as Fig.~\ref{fig:tester_channel_discrimination},
which can be seen as a special case of Fig.~\ref{fig:tester},
where $\hLambda_r$ is a given channel and $\Vt \coloneqq \V$ and $\W_t \coloneqq \W$
for each $t \in \range{1}{T}$.
To discriminate the channels, we first prepare an input state
$\hsigma_1 \in \Den_{\V_1 \ot \V'_1}$,
and then the channels $\hLambda_r \ot \ident_{\V'_1}, \hsigma_2,
\hLambda_r \ot \ident_{\V'_2}, \dots, \hsigma_T,
\hLambda_r \ot \ident_{\V'_T}$ are sequentially applied.
We finally perform a measurement
$\hPi \coloneqq \{ \hPi_m \}_{m=0}^{M-1} \in \POVM_{\W_T \ot \V'_T}$.
There exist many criteria for discriminating quantum channels.
When using the minimum-error criterion, we set $M \coloneqq R$ and
try to find a tester
$\hPhi \coloneqq \{ \hPhi_m \coloneqq \hPi_m \ast \hsigma_T \ast \cdots \ast \hsigma_1 \}_{m=0}^{M-1}
\in \mTGh$ that maximizes the average success probability
$\PS(\hPhi) \coloneqq \sum_{r=0}^{R-1} p_r \braket{\hPhi_r, \hLambda_r^{\ast T}}$,
where $p_r$ is the prior probability of the channel $\hLambda_r$.
This problem can be written as
\begin{alignat}{1}
 \begin{array}{ll}
  \mbox{maximize} & \displaystyle \PS(\hPhi) \\
  \mbox{subject~to} & \hPhi \in \mTGh. \\
 \end{array}
 \label{prob:PPi_success}
\end{alignat}
\begin{figure}[bt]
 \centering
 \InsertPDF{1.0}{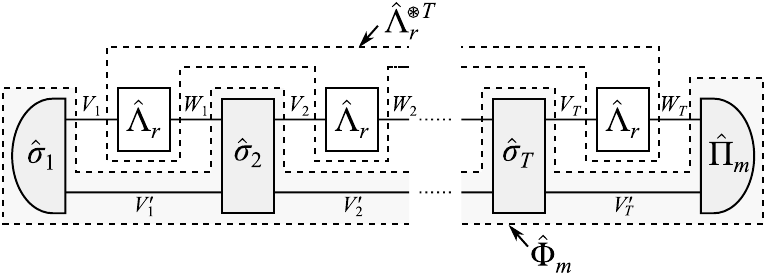}
 \caption{$T$-shot channel discrimination.
 $\hLambda_r$ is a channel and $\{\hPhi_m\}_{m=0}^{M-1}$ is a tester.}
 \label{fig:tester_channel_discrimination}
\end{figure}

The above discussion easily extends to discrimination of more general processes,
e.g., multi-shot subchannel discrimination or discrimination of processes
each of which consists of multiple time steps.
We give three typical examples.

\begin{ex} \label{ex:memoryless_comb}
 The first example is the problem of discriminating quantum memoryless combs
 $\{ \hcE_r \}_{r=0}^{R-1}$,
 where each $\hcE_r$ is characterized by the connection of $T$ channels
 $\hLambda^{(1)}_r,\dots,\hLambda^{(T)}_r$, i.e.,
 \begin{alignat}{1}
  \hcE_r &\coloneqq \hLambda^{(T)}_r \ast \cdots \ast \hLambda^{(1)}_r \in \astTChn,
 \end{alignat}
 where $\hLambda^{(t)}_r \in \Chn(\Vt, \Wt)$.
 One can see that $T$-shot discrimination of quantum channels $\{ \hLambda_r \}_{r=0}^{R-1}$
 is a special case of this model with $\hLambda^{(t)}_r = \hLambda_r$.
 Another special case is quantum change point problems
 (see Refs.~\cite{Sen-Bag-Cal-Chi-2016,Sen-Mar-Mun-2017} in the case of $\hLambda^{(t)}_r$ being a state,
 i.e., $\Vt = \Complex$).
 In change point problems, a channel $\hLambda_0 \in \Chn(\V,\W)$ is prepared
 until some unspecified point $r$,
 after which another channel $\hLambda_1 \in \Chn(\V,\W)$ is prepared.
 We want to determine the change point $r$ as accurately as possible.
 This situation corresponds to the case in which
 $\Vt = \V$, $\Wt = \W$, $R = T + 1$, and $\hLambda^{(t)}_r = \hLambda_{\iota_r(t)}$
 $~(r \in \mI_R)$ hold, where $\iota_r(t) = 1$ for $t > r$, else $0$.
 A third special case is discrimination of the order in which the channels
 $\hLambda_1,\dots,\hLambda_T \in \Chn(\V,\W)$ are applied.
 Assume that each of the channels is applied once and only once; then,
 this situation corresponds to the case
 $\Vt = \V$, $\Wt = \W$, $R = T!$, and $\hLambda^{(t)}_r = \hLambda_{\gamma_r(t)}$,
 where $\gamma_r$ is the permutation on $\range{1}{T}$ determined by $r \in \mI_R$.
\end{ex}

\begin{ex}[Comparison of quantum channels] \label{ex:Comp}
 The second example is the problem of comparing quantum channels,
 which is an extension of quantum state comparison
 \cite{Ber-Hil-2005,Kle-Kam-Bru-2005,Pan-Wu-2011,Hay-Has-Hor-2018}
 and quantum measurement comparison \cite{Zim-Hei-Sed-2009}.
 Suppose that $K$ unknown quantum channels are given, each of which is
 randomly chosen from $L$ known channels $\hLambda_0,\dots,\hLambda_{L-1}$
 with the probabilities $u_0,\dots,u_{L-1}$.
 We want to determine whether they are identical or not.
 This problem is reduced to the problem of
 discriminating the following two channels
 \begin{alignat}{1}
  \tLambda_0 &\coloneqq p_0^{-1} \sum_{l=0}^{L-1} (u_l\hLambda_l)^{\ot K}, \nonumber \\
  \tLambda_1 &\coloneqq p_1^{-1} \left[ \left( \sum_{l=0}^{L-1} u_l\hLambda_l \right)^{\ot K}
  - \sum_{l=0}^{L-1} (u_l\hLambda_l)^{\ot K} \right],
 \end{alignat}
 where $p_0 \coloneqq \sum_{l=0}^{L-1} u_l^K$ and $p_1 \coloneqq 1 - p_0$ are the prior probabilities
 of $\tLambda_0$ and $\tLambda_1$.
\end{ex}

\begin{ex}[Discrimination of patterns]
 The third example is the problem of
 discriminating spatial and temporal patterns encoded in quantum channels.
 Assume that a comb
 \begin{alignat}{1}
  \hcE_x &\coloneqq \left[ \bigotimes_{k=1}^K \hLambda_{x^{(T)}_k} \right] \ast \cdots
  \ast \left[ \bigotimes_{k=1}^K \hLambda_{x^{(1)}_k} \right]
 \end{alignat}
 is given, where $\hLambda_0,\dots,\hLambda_{L-1} \in \Chn(\V,\W)$ are some channels
 and $\hcE_x$ is uniquely determined by a two-dimensional pattern
 $x \coloneqq \{ x^{(t)}_k \}_{(t,k)=(1,1)}^{(T,K)}$,
 each of whose entries $x^{(t)}_k$ is in $\mI_L$ (see Fig.~\ref{fig:pattern}).
 Also, assume that $x$ belongs to one of $R$ mutually exclusive subsets
 $\mX_0,\dots,\mX_{R-1}$ of $\mI_L^{TK}$.
 We want to determine which of $\mX_0,\dots,\mX_{R-1}$ the pattern $x$ belongs to.
 One can see this problem as the problem of discriminating $R$ channels
 $\{ \sum_{x \in \mX_r} p_x \hcE_x \}_{r=0}^{R-1}$,
 where $p_x$ is the prior probability of $\hcE_x$.
 This problem can be applied to various spatial and temporal patterns.
 The memoryless comb discrimination shown in Example~\ref{ex:memoryless_comb}
 can be seen as an example of this problem with $K \coloneqq 1$.
 One can easily see that quantum comb comparison, shown in Example~\ref{ex:Comp}
 is also an example of this problem,
 which corresponds to $\mX_0 \coloneqq
 \{ x \in \mI_L^{TK} : \exists l \in \mI_L, ~x^{(t)}_k = l ~(\forall t,k) \}$,
 $\mX_1 \coloneqq
 \{ x \in \mI_L^{TK} : x^{(1)}_k = x^{(2)}_k = \dots = x^{(T)}_k ~(\forall k) \} \setminus \mX_0$
 (where $\setminus$ is the set difference operation),
 and $p_x \coloneqq \prod_{k=0}^{K-1} u_{x^{(1)}_k}$.
 A third example is the problem of discriminating pulse-position modulated channels
 \cite{Zhu-Pir-2020-entangle},
 which corresponds to $R \coloneqq K$ and
 $\mX_r \coloneqq \{ x \in \mI_L^{TK} : x^{(t)}_k = \delta_{k,r+1} ~(\forall t,k) \}$
 $~(r \in \mI_R)$.
 A fourth example is the problem of determining whether $\hLambda_0$ has occurred or not,
 which corresponds to $R \coloneqq 2$,
 $\mX_0 \coloneqq \{ x \in \mI_L^{TK} : \exists t,k, ~x^{(t)}_k = 0 \}$,
 and $\mX_1 \coloneqq \mI_L^{TK} \setminus \mX_0$.
 \begin{figure}[bt]
  \centering
  \InsertPDF{1.0}{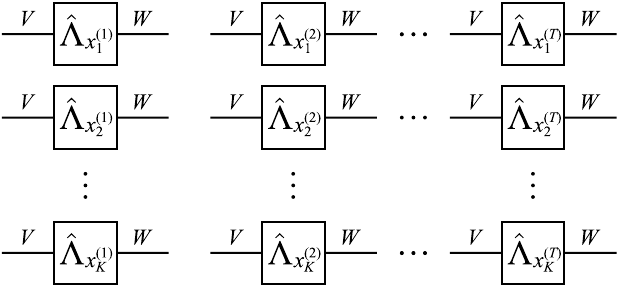}
  \caption{Spatial and temporal pattern
  $x \coloneqq \{ x^{(t)}_k \in \mI_L \}_{(t,k)=(1,1)}^{(T,K)}$
  encoded in quantum channels.
  $\hLambda_0,\dots,\hLambda_{L-1}$ are channels.}
  \label{fig:pattern}
 \end{figure}
\end{ex}

\subsection{Formulation}

\subsubsection{Unrestricted testers}

In this paper, to analyze a wide range of process discrimination problems,
we consider a problem written in the following form:
\begin{alignat}{1}
 \begin{array}{ll}
  \mbox{maximize} & \displaystyle \summ \braket{\hPhi_m, \hc_m} \\
  \mbox{subject~to} & \hPhi \in \mTGh, \quad
   \displaystyle \summ \braket{\hPhi_m, \ha_{j,m}} \le b_j ~(\forall j \in \mI_J), \\
 \end{array}
 \label{prob:PPi}
\end{alignat}
where $\{ \hc_m \}_{m=0}^{M-1}, \{ \ha_{j,m} \}_{(j,m)=(0,0)}^{(J-1,M-1)} \subset \astTHer$
and $\{ b_j \}_{j=0}^{J-1} \in \Real^J$ are constants determined by the problem.
$J$ is a nonnegative integer.
Problem~\eqref{prob:PPi_success} is obviously the special case of Problem~\eqref{prob:PPi}
with $M \coloneqq R$, $J \coloneqq 0$, and $\hc_m \coloneqq p_r \hLambda_r^{\ast T}$.
In the special case of $T = 1$ and $\V_1 = \Complex$,
it follows from $\astTHer \cong \Her_{\W_1}$ that
Problem~\eqref{prob:PPi} is the generalized quantum state discrimination problem
described in Ref.~\cite{Nak-Kat-Usu-2015-general}.
Throughout this paper, for simplicity of discussion,
in any optimization problem that maximizes (respectively, minimizes) an objective function,
the optimal value is set to $-\infty$ (respectively, $\infty$) if there is no feasible solution.

We often use
\begin{alignat}{2}
 c_m &\coloneqq \Choi_{\hc_m} \in \Her_\tV, &\quad
 a_{j,m} &\coloneqq \Choi_{\ha_{j,m}} \in \Her_\tV,
\end{alignat}
instead of $\hc_m$ and $\ha_{j,m}$,
which enables us to simplify the formulation of process discrimination problems.
Let
\begin{alignat}{1}
 \PG &\coloneqq \left\{ \Phi \in \mTG : \eta_j(\Phi) \le 0 ~(\forall j \in \mI_J)
 \right\},
 \label{eq:Testerc}
\end{alignat}
where
\begin{alignat}{1}
 \eta_j(\Phi) &\coloneqq \summ \braket{\Phi_m,a_{j,m}} - b_j \in \Real.
 \label{eq:eta}
\end{alignat}
Problem~\eqref{prob:PPi} is rewritten by the following SDP problem:
\begin{alignat}{1}
 \begin{array}{ll}
  \mbox{maximize} & P(\Phi) \coloneqq
   \displaystyle \summ \braket{\Phi_m,c_m} \\
  \mbox{subject~to} & \Phi \in \PG.
 \end{array}
 \tag{$\mathrm{P_G}$} \label{prob:PG}
\end{alignat}

\subsubsection{Restricted testers}

We are often concerned with a process discrimination problem in which
the available testers are restricted to belong to a certain subset of all possible testers
in quantum mechanics.
Very recently, a general formulation of restricted problems of finding minimum-error testers
has been discussed in Ref.~\cite{Nak-2021-restricted}.
For examples of such restricted problems, the reader can refer to Ref.~\cite{Nak-2021-restricted}.
We will extend this work to a broad class of optimization criteria.
We impose the additional constraint $\Phi \in \mT$,
where $\mT$ is a nonempty convex subset of $\mTG$.
This problem is formulated as%
\footnote{We do not assume that $\mT$ is closed,
which is inspired by the fact there exists an important subset of all possible testers that is not closed,
e.g., the set of local operations and classical communication \cite{Chi-Cui-Lo-2012}.
While an optimal solution to Problem~\eqref{prob:P} may not exist,
its optimal value, $\sup_{\Phi \in \P} P(\Phi)$, is always uniquely determined.}
\begin{alignat}{1}
 \begin{array}{ll}
  \mbox{maximize} & P(\Phi) \\
  \mbox{subject~to} & \Phi \in \P,
 \end{array}
 \tag{$\mathrm{P}$} \label{prob:P}
\end{alignat}
where $\P \coloneqq \PG \cap \mT$, i.e.,
\begin{alignat}{1}
 \P &\coloneqq \left\{ \Phi \in \mT : \eta_j(\Phi) \le 0 ~(\forall j \in \mI_J) \right\}.
 \label{eq:TestercC}
\end{alignat}
Problem~\eqref{prob:PG} can be viewed as the special case of Problem~\eqref{prob:P}
with $\mT \coloneqq \mTG$.
Problem~\eqref{prob:P} is not an SDP problem in general,
but is a convex problem since $\P$ is convex.
The assumption of the convexity of $\mT$ implies that
any probabilistic mixture of any pair of testers $\Phi^{(1)},\Phi^{(2)} \in \mT$,
$\{ p \Phi^{(1)}_m + (1-p) \Phi^{(2)}_m \}_{m=0}^{M-1}$
$~(\forall 0 < p < 1)$, is in $\mT$.
In this paper, we also assume
\begin{alignat}{1}
 \ol{\P} &= \left\{ \Phi \in \ol{\mT} : \eta_j(\Phi) \le 0 ~(\forall j \in \mI_J) \right\}.
 \label{eq:TestercC2}
\end{alignat}
If $\mT$ is closed, then Eq.~\eqref{eq:TestercC2} always holds.
These assumptions hold in many practical situations.
Let us choose a closed convex cone $\mC$ and a closed convex set $\S$ such that%
\footnote{Such $\mC$ and $\S$ always exist.
Indeed,
$\mC \coloneqq \{ \{ p \Phi_m \}_{m=0}^{M-1} : p \in \Real_+, \Phi \in \ol{\mT} \}$ and
$\S \coloneqq \{ \summ \Phi_m : \Phi \in \ol{\mT} \}$
satisfy Eq.~\eqref{eq:T}.}
\begin{alignat}{1}
 \ol{\mT} &= \left\{ \Phi \in \mC : \summ \Phi_m \in \S \right\},
 \quad \mC \subseteq \mCG, \quad \S \subseteq \SG.
 \label{eq:T}
\end{alignat}
Equation~\eqref{eq:TG} is the special case of Eq.~\eqref{eq:T} with
$\mC = \mCG$ and $\S = \SG$.
Note that if the feasible set $\P$ is not empty, then
at least one optimal solution exists.

\subsubsection{Examples}

We provide three simple examples of Problem~\eqref{prob:P}.
For more information, see Sec.~II of Ref.~\cite{Nak-Kat-Usu-2015-general},
which provides several other examples in the case of state discrimination.
\begin{ex}[Optimal inconclusive discrimination] \label{ex:inconclusive}
 The first example is the problem of finding optimal inconclusive discrimination
 of quantum combs.
 This is an extension of the problem of finding optimal inconclusive state discrimination
 \cite{Iva-1987,Die-1988,Per-1988}.
 In this problem, we want to discriminate $R$ combs
 $\hcE_0,\dots,\hcE_{R-1} \in \astTChn$
 with maximum average success probability subject to the constraint that
 the average inconclusive probability is equal to a constant value $\pinc$ with $0 \le \pinc \le 1$.
 We try to find an optimal tester $\hPhi \coloneqq \{ \hPhi_m \}_{m=0}^{M-1} \in \mTGh$
 with $M \coloneqq R + 1$.
 The element $\hPhi_r$ with $r < R$ corresponds to the identification
 of the comb $\hcE_r$, whereas $\hPhi_R$ corresponds to the inconclusive answer.
 The average success and inconclusive probabilities are, respectively, written as
 \begin{alignat}{2}
  \PS(\hPhi) &\coloneqq \sumr p_r \mathrm{Pr}(r|\hcE_r), &\quad
  \PI(\hPhi) &\coloneqq \sumr p_r \mathrm{Pr}(R|\hcE_r),
 \end{alignat}
 where $\mathrm{Pr}(m|\hcE_r) \coloneqq \braket{\hPhi_m, \hcE_r}$
 and $p_r$ is the prior probability of the comb $\hcE_r$.
 Thus, the problem is formulated as
 \begin{alignat}{1}
  \begin{array}{ll}
   \mbox{maximize} & \PS(\hPhi) \\
   \mbox{subject~to} & \hPhi \in \mTGh, ~\PI(\hPhi) = \pinc. \\
  \end{array}
  \tag{$\mathrm{P_{inc}}$} \label{prob:PPiInc}
 \end{alignat}
 From Eq.~\eqref{eq:Phi_c}, we have $\mathrm{Pr}(m|\hcE_r) = \braket{\Phi_m, \cE_r}$.
 The optimal value of the problem does not change if we replace
 the constraint $\PI(\hPhi) = \pinc$ by $\PI(\hPhi) \ge \pinc$;
 indeed, in this case, we can easily verify that any optimal solution $\hPhi$
 must satisfy $\PI(\hPhi) = \pinc$.
 Therefore, this problem is rewritten as Problem~\eqref{prob:PG} with
 \begin{alignat}{1}
  M &\coloneqq R + 1, \nonumber \\
  J &\coloneqq 1, \nonumber \\
  c_m &\coloneqq
  \begin{dcases}
   p_m \cE_m, & m < R, \\
   \zero, & m = R, \\
  \end{dcases} \nonumber \\
  a_{0,m} &\coloneqq
  \begin{dcases}
   \zero, & m < R, \\
   - \sumr p_r \cE_r, & m = R, \\
  \end{dcases} \nonumber \\
  b_0 &\coloneqq - \pinc.
  \label{eq:inc_abc}
 \end{alignat}
 $\PG$ is not empty for any $0 \le \pinc \le 1$.
 In the case of $T = 1$ and $\V_1 = \Complex$,
 this problem reduces to the SDP problem given by Ref.~\cite{Eld-2003-inc}.

 In the special case of $\pinc = 0$, Problem~\eqref{prob:PPiInc} is
 equivalent to the problem of finding minimum-error discrimination,
 i.e., $\hPhi \in \mTGh$ that maximizes $\PS(\hPhi)$,
 in which case, without loss of generality, we can assume $\hPhi_R = \zero$.
 Thus, this problem is written as Problem~\eqref{prob:PG} with
 $M \coloneqq R$, $J \coloneqq 0$, and $c_r \coloneqq p_r \cE_r$ $~(r \in \mI_R)$.

 In another special case in which $\pinc$ is sufficiently large,
 the average error probability, $\PE(\hPhi) \coloneqq 1 - \PS(\hPhi) - \PI(\hPhi)$,
 of an optimal solution becomes zero.
 Unambiguous (or error-free) discrimination, which satisfies $\PE(\hPhi) = 0$,
 is called optimal if it maximizes the average success probability
 (or, equivalently, minimizes the average inconclusive probability).
 The problem of finding optimal unambiguous discrimination can be
 formulated as
 \begin{alignat}{1}
  \begin{array}{ll}
   \mbox{maximize} &
    \displaystyle \PS(\hPhi) - \lim_{\kappa \to \infty} \kappa \PE(\hPhi) \\
   \mbox{subject~to} & \hPhi \in \mTGh. \\
  \end{array}
  \tag{$\mathrm{P_{unamb}}$} \label{prob:PPiUnamb}
 \end{alignat}
 One can easily verify that an optimal solution satisfies $\PE(\hPhi) = 0$.
 Problem~\eqref{prob:PPiUnamb} is rewritten as Problem~\eqref{prob:PG} with
 $M \coloneqq R + 1$, $J \coloneqq 0$,
 $c_r \coloneqq p_r \cE_r - \kappa \sum_{r' \neq r} p_{r'} \cE_{r'}$ $~(r \in \mI_R)$,
 $c_R \coloneqq \zero$, and $\kappa \to \infty$.
 Note that this problem is also formulated as
 \begin{alignat}{1}
  \begin{array}{ll}
   \mbox{maximize} &
    \displaystyle \PS(\hPhi) \\
   \mbox{subject~to} & \hPhi \in \mTGh, ~\PS(\hPhi) + \PI(\hPhi) = 1, \\
  \end{array}
 \end{alignat}
 which is rewritten by Problem~\eqref{prob:PG} with
 \begin{alignat}{1}
  M &\coloneqq R + 1, \nonumber \\
  J &\coloneqq 1, \nonumber \\
  c_m &\coloneqq
  \begin{dcases}
   p_m \cE_m, & m < R, \\
   \zero, & m = R, \\
  \end{dcases} \nonumber \\
  a_{0,m} &\coloneqq
  \begin{dcases}
   - p_m \cE_m, & m < R, \\
   - \sumr p_r \cE_r, & m = R, \\
  \end{dcases} \nonumber \\
  b_0 &\coloneqq - 1.
  \label{eq:unamb_abc}
 \end{alignat}
 In the case of $T = 1$ and $\V_1 = \Complex$,
 this problem reduces to the SDP problem given by Ref.~\cite{Eld-Sto-Has-2004}.
\end{ex}
\begin{ex}[Neyman-Pearson strategy]
 The second example is an optimal process discrimination problem under the Neyman-Pearson criterion,
 whose state discrimination version has been extensively investigated
 \cite{Hel-1976,Hol-1982-Prob,Par-1997}.
 Let us consider the problem of discriminating two combs $\hcE_0$ and $\hcE_1$.
 This criterion attempts to maximize the detection probability $\mathrm{Pr}(1|\hcE_1)$
 while the false-alarm probability $\mathrm{Pr}(1|\hcE_0)$ is less than or equal to
 a constant value $\pNP$ with $0 \le \pNP \le 1$,
 where $\mathrm{Pr}(m|\hcE_r) \coloneqq \braket{\hPhi_m, \hcE_r}$.
 This problem can be formulated as
 \begin{alignat}{1}
  \begin{array}{ll}
   \mbox{maximize} &
    \displaystyle \mathrm{Pr}(1|\hcE_1) \\
   \mbox{subject~to} & \hPhi \in \mTGh, ~\mathrm{Pr}(1|\hcE_0) \le \pNP, \\
  \end{array}
  \tag{$\mathrm{P_{NP}}$} \label{prob:PPiNP}
 \end{alignat}
 which is rewritten by Problem~\eqref{prob:PG} with
 \begin{alignat}{3}
  M &\coloneqq 2, &\quad J &\coloneqq 1, &\quad
  c_m &\coloneqq \delta_{m,1} \cE_1, \nonumber \\
  a_{0,m} &\coloneqq \delta_{m,1} \cE_0, &\quad b_0 &\coloneqq \pNP.
  \label{eq:NP_abc}
 \end{alignat}
 $\PG$ is not empty for any $0 \le \pNP \le 1$.
\end{ex}

\begin{ex}[Restricted testers] \label{ex:restricted_testers}
 We can consider a process discrimination problem under the inconclusive and Neyman-Pearson
 strategies in which testers are restricted to belong to a subset of $\mTh$ of $\mTGh$.
 Let us consider the former case.
 This problem is formulated as%
 \footnote{We should note that Problem~\eqref{prob:PPiIncR}
 is not exactly equivalent to Problem~\eqref{prob:PPiInc} with $\mTGh$ replaced by $\mTh$.
 Indeed, any $\hPhi \in \mTh$ may satisfy $\PI(\hPhi) > \pinc$,
 in which case there is no feasible solution to the latter problem.
 However, the latter problem can also be formulated in the form of Problem~\eqref{prob:P}
 since $\PI(\hPhi) = \pinc$ is equivalent to $\PI(\hPhi) \ge \pinc$ and $\PI(\hPhi) \le \pinc$.}
 \begin{alignat}{1}
  \begin{array}{ll}
   \mbox{maximize} & \PS(\hPhi) \\
   \mbox{subject~to} & \hPhi \in \mTh, ~\PI(\hPhi) \ge \pinc, \\
  \end{array}
  \label{prob:PPiIncR}
 \end{alignat}
 which is rewritten as Problem~\eqref{prob:P} with Eq.~\eqref{eq:inc_abc} and
 \begin{alignat}{1}
  \mT &\coloneqq \left\{ \{ \tChoi_{\hPhi_m} \}_{m=0}^{M-1} : \hPhi \in \mTh \right\}.
  \label{eq:bTh2bT}
 \end{alignat}

 As a concrete example, let us assume that testers are restricted to the form of
 Fig.~\ref{fig:tester_restricted}.
 Such a tester, consisting of two sequentially connected single-shot testers,
 is interpreted as a tester performed by Alice and Bob
 in which only one-way classical communication from Alice to Bob is allowed.
 Specifically, in such a tester, Alice prepares a state $\hrho_\mathrm{A}$,
 performs a measurement $\{\hPi^\mathrm{A}_i\}_i$,
 and sends her outcome $i$ to Bob.
 Based on her result $i$, Bob then prepares a state $\hrho^{(i)}_\mathrm{B}$ and
 performs a measurement $\{\hPi^{(i)}_m\}_m$.
 It is seen that $\mT$ satisfies Eq.~\eqref{eq:T} with
 \begin{alignat}{1}
  \mC &\coloneqq \left\{ \left\{ \sum_i B^{(i)}_r \ot A_i \right\}_{r=0}^R :
  A_i \in \Pos_{\W_1 \ot \V_1}, ~\{ B^{(i)}_r \}_r \in \Tester_{\W_2,\V_2} \right\},
  \label{eq:C_chsequential}
 \end{alignat}
 and $\S \coloneqq \SG$, where $\Tester_{\W_2,\V_2}$ is the set of all testers
 $\{B_r\}_{r=0}^R \subset \Pos_{\W_2 \ot \V_2}$ with $R+1$ outcomes
 [i.e., $\{B_r\}_r$ satisfies $\sum_{r=0}^R B_r = \I_{\W_2} \ot \rho$
 for some $\rho \in \Den_{\V_2}$].
 \begin{figure}[bt]
  \centering
  \InsertPDF{1.0}{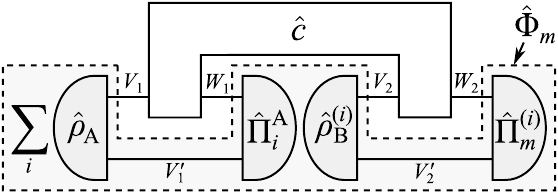}
  \caption{Process discrimination with two sequentially connected single-shot testers.
  The tester $\{ \hPhi_m \}_m$ consits of
  $\hrho_\mathrm{A} \in \Den_{\V_1 \ot \V'_1}$,
  $\{ \hPi^\mathrm{A}_i \}_i$,
  $\hrho^{(i)}_\mathrm{B} \in \Den_{\V_2 \ot \V'_2}$,
  and $\{ \hPi^{(i)}_m \}_m \in \POVM_{\W_2 \ot \V'_2}$,
  where $\{ \hPi^\mathrm{A}_i \}_i$ is a measurement of $\W_1 \ot \V'_1$.}
  \label{fig:tester_restricted}
 \end{figure}

 Similarly, in the case of the Neyman-Pearson strategy, the problem is written as
 Problem~\eqref{prob:PPiNP} with $\mTGh$ replaced by $\mTh$, i.e.,
 \begin{alignat}{1}
  \begin{array}{ll}
   \mbox{maximize} &
    \displaystyle \mathrm{Pr}(1|\hcE_1) \\
   \mbox{subject~to} & \hPhi \in \mTh, ~\mathrm{Pr}(1|\hcE_0) \le \pNP, \\
  \end{array}
  \tag{$\mathrm{P_{NP}}$} \label{prob:PPiNPR}
 \end{alignat}
 which is also formulated as Problem~\eqref{prob:P} with Eq.~\eqref{eq:NP_abc}
 and $\mT$ of Eq.~\eqref{eq:bTh2bT}.
\end{ex}

\section{Optimal solutions to process discrimination problems} \label{sec:opt}

In this section, we derive the Lagrange dual problem of Problem~\eqref{prob:P}
that has no duality gap.
Also, necessary and sufficient conditions for a tester to be optimal are given.
We also give necessary and sufficient conditions that the optimal value remain unchanged
even when a certain additional constraint is imposed.
These results are useful for obtaining analytical and/or numerical optimal solutions.

\subsection{Dual problems}

The following theorem holds (proved in Appendix~\ref{append:dual})%
\footnote{A diagrammatic representation of dual problems
(in the minimum-error case) can be seen in Ref.~\cite{Nak-2020-diagram},
which allows us to gain an intuitive understanding of an operational interpretation.}.
\begin{thm} \label{thm:dual}
 Assume that Problem~\eqref{prob:P} is given.
 Let $\mC$ and $\S$ be a closed convex cone and a closed convex set satisfying Eq.~\eqref{eq:T}.
 The optimal value of Problem~\eqref{prob:P} coincides with that of
 the following optimization problem:
 \begin{alignat}{1}
  \begin{array}{ll}
   \mbox{minimize} & \displaystyle D_\S(\chi, q) \coloneqq \lambdaS(\chi) + \sumj q_j b_j \\
   \mbox{subject~to} & (\chi, q) \in \D \\
  \end{array}
  \tag{$\mathrm{D}$} \label{prob:D}
 \end{alignat}
 with $\chi \in \Her_\tV$ and $q \coloneqq \{ q_j \}_{j=0}^{J-1} \in \Real_+^J$,
 where
 \begin{alignat}{1}
  \lambdaS(\chi) &\coloneqq \sup_{\varphi \in \S} \braket{\varphi, \chi}, \nonumber \\
  \D &\coloneqq \left\{ (\chi, q) \in \Her_\tV \times \Real_+^J :
  \{\chi - z_m(q)\}_{m=0}^{M-1} \in \mC^* \right\}, \nonumber \\
  z_m(q) &\coloneqq c_m - \sumj q_j a_{j,m} \in \Her_\tV.
  \label{eq:z}
 \end{alignat}
\end{thm}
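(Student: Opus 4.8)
The plan is to establish Theorem~\ref{thm:dual} by Lagrangian duality, treating the constraint $\Phi \in \ol{\mT}$ (equivalently $\Phi \in \mC$ together with $\summ \Phi_m \in \S$) as the "hard" structural constraint kept inside the inner supremum, and dualizing only the inequality constraints $\eta_j(\Phi) \le 0$. First I would form the Lagrangian $L(\Phi,q) \coloneqq P(\Phi) - \sumj q_j \eta_j(\Phi)$ with multipliers $q \in \Real_+^J$, and observe that for feasible $\Phi$ we have $L(\Phi,q) \ge P(\Phi)$, so $\sup_{\Phi \in \ol{\mT}} \inf_{q \ge 0} L(\Phi,q)$ equals the optimal value of Problem~\eqref{prob:P} (using Eq.~\eqref{eq:TestercC2} to pass to the closure without changing the optimal value, since $P$ is linear hence continuous). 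Then I would compute $g(q) \coloneqq \sup_{\Phi \in \ol{\mT}} L(\Phi,q)$ explicitly: rewriting $L(\Phi,q) = \summ \braket{\Phi_m, z_m(q)} + \sumj q_j b_j$ with $z_m(q)$ as in Eq.~\eqref{eq:z}, and using the description $\ol{\mT} = \{ \Phi \in \mC : \summ \Phi_m \in \S \}$, the supremum over $\Phi$ factors into an optimization that, after introducing $\chi \in \Her_\tV$ to decouple the $\mC$-membership from the $\S$-constraint on the sum, yields exactly $g(q) = \inf_{\chi} [ \lambdaS(\chi) + \sumj q_j b_j ]$ subject to $\{\chi - z_m(q)\}_m \in \mC^*$; minimizing $g(q)$ over $q \ge 0$ then gives Problem~\eqref{prob:D}. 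This already proves weak duality, i.e. the optimal value of \eqref{prob:D} is an upper bound for that of \eqref{prob:P}.

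For the decoupling step the key observation is a minimax/Sion-type or conic-duality identity: for fixed $q$,
\begin{alignat}{1}
 \sup_{\substack{\Phi \in \mC \\ \summ \Phi_m \in \S}} \summ \braket{\Phi_m, z_m(q)}
 = \inf_{\substack{\chi \in \Her_\tV \\ \{\chi - z_m(q)\}_m \in \mC^*}} \lambdaS(\chi),
 \label{eq:decouple}
\end{alignat}
which I would prove by writing the left side as a conic program over $\mC$ with the single equality-like constraint $\summ \Phi_m = \varphi$, $\varphi \in \S$, introducing $\chi$ as the dual variable conjugate to $\varphi$, and recognizing that the dual cone of $\mC$ enters precisely through the constraint $\{\chi - z_m(q)\}_m \in \mC^*$ (this uses $\braket{\{\Phi_m\}_m, \{\chi - z_m\}_m} \ge 0$ for $\Phi \in \mC$, which is the definition of $\mC^*$). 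The term $\lambdaS(\chi) = \sup_{\varphi \in \S} \braket{\varphi,\chi}$ is the support function of $\S$ and appears as the residual after the $\Phi$-supremum is resolved.

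The main obstacle is strong duality — showing there is \emph{no} duality gap, i.e. that the two optimal values coincide rather than merely bound each other. For this I would invoke a Slater-type constraint qualification or, more robustly given that $\mT$ need not be closed, a closedness/compactness argument: one shows that the relevant value function (the optimal value of \eqref{prob:P} as a function of the right-hand sides $b_j$) is concave, and that its conjugate is exactly the objective of \eqref{prob:D}; the absence of a gap then follows from lower semicontinuity of this value function, which in turn follows because $\mC$ is a closed convex cone, $\S$ is closed, and the objective $P$ is linear on the compact-enough feasible slices. Care is needed where the optimal value is $-\infty$ (infeasibility) or where $\lambdaS(\chi) = +\infty$; in those degenerate cases both sides agree by the sign conventions fixed earlier in the paper. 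Since the theorem is stated to be proved in Appendix~\ref{append:dual}, I would relegate the technical closedness verification there, presenting in the main text only the Lagrangian setup, the identity~\eqref{eq:decouple}, and the statement that the constraint structure in Eq.~\eqref{eq:T} guarantees the required regularity.
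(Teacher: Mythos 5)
Your two-stage architecture (dualize only the $\eta_j$ constraints, then, for each fixed $q$, decouple the structural constraint $\summ \Phi_m \in \S$ by conic duality) is genuinely different from the paper's proof, which dualizes the sum constraint and the inequalities simultaneously in a single Lagrangian $L(\Phi,\varphi,\chi,q)$ and then closes the gap by a separating-hyperplane argument in $\Her_\tV^M \times \Real$ that directly constructs $\Phi^\opt \in \ol{\P}$ with $P(\Phi^\opt) \ge D^\opt$. The route could in principle work, but as written it has two real holes. First, your identity~\eqref{eq:decouple} is itself a strong (conic) duality statement; the pairing argument you sketch, $\summ \braket{\Phi_m, \chi - z_m(q)} \ge 0$ for $\Phi \in \mC$, only yields the inequality ``$\le$'', i.e.\ weak duality, and you never supply the constraint qualification or closedness argument that upgrades it to equality. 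That upgrade is exactly the nontrivial content the paper's separation argument provides (jointly in $(\chi,q)$ rather than per fixed $q$), using that $\chi = \kappa \I_\tV$ with $\kappa$ large is dual feasible and that $\mC \subseteq \mCG = \Pos_\tV^M$ rules out a vertical separating hyperplane ($\alpha > 0$).

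Second, your treatment of the outer no-gap step is not adequate at the theorem's level of generality. Primal Slater cannot simply be ``invoked'': the data $\{a_{j,m}\},\{b_j\}$ are arbitrary, and in the paper's own applications equality constraints are encoded as pairs of opposite inequalities [e.g.\ $\PI(\hPhi) = \pinc$ in the restricted version of Problem~\eqref{prob:PPiIncR}], so no strictly feasible tester exists. Your fallback ``closedness/compactness'' argument cites the wrong ingredients: closedness of $\mC$ and $\S$ plus linearity of $P$ do not give the required semicontinuity of the perturbed value function; what does is \emph{boundedness} (hence compactness) of $\ol{\mT}$, which comes from $\mC \subseteq \Pos_\tV^M$ together with $\S \subseteq \SG$ (each $\Phi_m$ positive semidefinite with uniformly bounded trace) via Eq.~\eqref{eq:T} — the same structural fact the paper uses to force $\alpha > 0$. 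You allude to ``compact-enough feasible slices'' but never establish or even locate this boundedness, and you would need it twice (for the inner equality~\eqref{eq:decouple} and for the outer Lagrangian over $q$), or else a joint argument as in the paper. Until those steps are actually carried out, your proposal establishes only weak duality, whereas the whole content of Theorem~\ref{thm:dual} is the absence of a duality gap.
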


One can easily see that Problem~\eqref{prob:D},
which is the Lagrange dual problem of Problem~\eqref{prob:P},
is a convex problem.
Problem~\eqref{prob:D} is often easier to solve than Problem~\eqref{prob:P}.
Note that $\{ y_m \in \Her_\tV \}_{m=0}^{M-1} \in \mC^*$ is equivalent to
\begin{alignat}{1}
 \summ \braket{\Phi_m,y_m} &\ge 0, \quad \forall \Phi \in \mC.
\end{alignat}
It is easily seen that the function $\lambdaS$ is convex and positively homogeneous of degree 1
[i.e., $\lambdaS(r \chi) = r \lambdaS(\chi)$ holds for any $r \in \Real_+$
and $\chi \in \Her_\tV$].
From Eq.~\eqref{eq:Phim_c_1}, we have
\begin{alignat}{2}
 \braket{\varphi,\chi} &= \lambdaSG(\chi) = \lambdaS(\chi),
 &\quad \forall \varphi \in \SG, ~\chi \in \Lin(\Chn_\tV).
 \label{eq:LinChn_lambda}
\end{alignat}

As a special case of Problem~\eqref{prob:D}, the dual of Problem~\eqref{prob:PG} is given by
\begin{alignat}{1}
 \begin{array}{ll}
  \mbox{minimize} & \displaystyle D_\SG(\chi, q) \\
  \mbox{subject~to} & (\chi, q) \in \DG \\
 \end{array}
 \tag{$\mathrm{D_G}$} \label{prob:DG}
\end{alignat}
with $(\chi, q)$, where
\begin{alignat}{1}
 \DG &\coloneqq \left\{ (\chi, q) \in \Her_\tV \times \Real_+^J :
 \chi \ge z_m(q) ~(\forall m \in \mI_M) \right\}.
 \label{eq:DG}
\end{alignat}
Theorem~\ref{thm:dual} immediately yields that the optimal values of Problems~\eqref{prob:PG} and
\eqref{prob:DG} coincide.

Note that one can consider two any sets $\mC$ $~(\subseteq \mCG)$ and
$\S$ $~(\subseteq \SG)$ such that
\begin{alignat}{1}
 \ol{\mT} &= \left\{ \Phi \in \clconi \mC : \summ \Phi_m \in \clco \S \right\},
 \label{eq:Tex}
\end{alignat}
instead of Eq.~\eqref{eq:T}.
In this case, one can easily verify
$\lambda_{\clco \S}(\chi) = \lambdaS(\chi)$ and $(\clconi \mC)^* = \mC^*$,
which indicates that Theorem~\ref{thm:dual} works without any changes.
In what follows, for simplicity, we assume that
$\mC$ and $\S$ are, respectively, a closed convex cone and a closed convex set.

\begin{ex}[Optimal inconclusive discrimination]
 By substituting Eq.~\eqref{eq:inc_abc} into Problem~\eqref{prob:DG},
 the dual of Problem~\eqref{prob:PPiInc} is immediately obtained as
 \begin{alignat}{1}
  \begin{array}{ll}
   \mbox{minimize} & \lambdaSG(\chi) - q \pinc \\
   \mbox{subject~to} & \chi \ge p_r \cE_r ~(\forall r \in \mI_{R+1})
  \end{array}
  \tag{$\mathrm{D_{inc}}$} \label{prob:DInc}
 \end{alignat}
 with $\chi \in \Her_\tV$ and $q \in \Real_+$,
 where $p_R \coloneqq q$ and $\cE_R \coloneqq \sumr p_r \cE_r$.
 Any feasible solution $\chi$ is in $\Pos_\tV$.
 It is easily seen that there exists an optimal solution $(\chi,q)$ such that $q \le 1$%
 \footnote{Proof: Arbitrarily choose $q > 1$; then,
 since $q \cE_R \ge \cE_R \ge p_r \cE_r$ holds for each $r \in \mI_R$,
 $(\chi,q) \in \DG$ is equivalent to $\chi \ge q \cE_R$.
 This gives $(\cE_R,1) \in \DG$.
 Arbitrarily choose $\chi$ satisfying $(\chi,q) \in \DG$;
 then, it suffices to show $D_\SG(\chi,q) \ge D_\SG(\cE_R,1)$.
 From Eq.~\eqref{eq:Phim_c_1}, we have $\lambdaSG(\cE_R) = 1$.
 Thus, $D_\SG(\chi,q) = \lambdaSG(\chi) - q \pinc
 \ge \lambdaSG(q \cE_R) - q \pinc = q (1 - \pinc)
 \ge 1 - \pinc = D_\SG(\cE_R,1)$ holds.}.
 In the special case of $\pinc = 0$, which corresponds to the minimum-error strategy,
 the dual problem is written as
 \begin{alignat}{1}
  \begin{array}{ll}
   \mbox{minimize} & \lambdaSG(\chi) \\
   \mbox{subject~to} & \chi \ge p_r \cE_r ~(\forall r \in \mI_R).
  \end{array}
  \label{prob:D_success}
 \end{alignat}
 Also, the dual of Problem~\eqref{prob:PPiUnamb} is
 \begin{alignat}{1}
  \begin{array}{ll}
   \mbox{minimize} & \lambdaSG(\chi) \\
   \mbox{subject~to} & \displaystyle \lim_{\kappa \to \infty}
    \left( \chi - p_r \cE_r +
     \kappa \sum_{\substack{r'=0 \\ r' \neq r}}^{R-1} p_{r'} \cE_{r'} \right) \ge \zero
    ~(\forall r \in \mI_R). \\
  \end{array}
 \end{alignat}
 Note that this constraint is rewritable as
 $\tTheta_r (\chi - p_r \cE_r) \tTheta_r \ge \zero$ $~(\forall r \in \mI_R)$,
 where $\tTheta_r$ is the orthogonal projection matrix onto the null space of
 $\sum_{r' \neq r} p_{r'} \cE_{r'}$.
\end{ex}

\begin{ex}[Neyman-Pearson strategy]
 By substituting Eq.~\eqref{eq:NP_abc} into Problem~\eqref{prob:DG},
 the dual of Problem~\eqref{prob:PPiNP} is obtained as
 \begin{alignat}{1}
  \begin{array}{ll}
   \mbox{minimize} &
    \displaystyle \lambdaSG(\chi) + q \pNP \\
   \mbox{subject~to} & \chi \ge \cE_1 - q \cE_0
  \end{array}
 \end{alignat}
 with $\chi \in \Pos_\tV$ and $q \in \Real_+$.
\end{ex}

\begin{ex}[Restricted testers]
 By substituting Eq.~\eqref{eq:inc_abc} into Problem~\eqref{prob:D},
 the dual of Problem~\eqref{prob:PPiIncR} is obtained as
 \begin{alignat}{1}
  \begin{array}{ll}
   \mbox{minimize} & \lambdaS(\chi) - q \pinc \\
   \mbox{subject~to} & \{ \chi - p_r \cE_r \}_{r=0}^R \in \mC^*
  \end{array}
  \label{prob:DIncR}
 \end{alignat}
 with $\chi \in \Her_\tV$ and $q \in \Real_+$,
 where $p_R \coloneqq q$ and $\cE_R \coloneqq \sumr p_r \cE_r$.
 In the special case of $\mC$ given by Eq.~\eqref{eq:C_chsequential},
 Problem~\eqref{prob:DIncR} is rewritten by
 \begin{alignat}{1}
  \begin{array}{ll}
   \mbox{minimize} & \lambdaSG(\chi) - q \pinc \\
   \mbox{subject~to} & \displaystyle \Trp{\W_2 \ot \V_2}
    \left[ \sum_{r=0}^R B_r (\chi - p_r \cE_r) \right] \ge \zero \\
   &\quad (\forall \{B_r\}_{r=0}^R \in \Tester_{\W_2,\V_2}). \\
  \end{array}
  \label{prob:D_chsequential}
 \end{alignat}
\end{ex}

We can show the following Proposition (proved in Appendix~\ref{append:chi_comb}).
\begin{proposition} \label{pro:chi_comb_feasible}
 For any $(\chi',q) \in \D$, there exists $(\chi, q) \in \D$
 satisfying $\lambdaSG(\chi) = \lambdaSG(\chi')$ and $\chi \in \Lin(\Chn_\tV)$.
\end{proposition}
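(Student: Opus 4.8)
The plan is to exploit the fact, noted right after Eq.~\eqref{eq:LinChn_lambda}, that the functional $\lambdaSG$ depends on $\chi$ only through a part of $\chi$ lying in $\Lin(\Chn_\tV)$, because every element of $\SG$ acts trivially (gives inner product $1$) on $\Chn_\tV$ and linearly on its span. More precisely, I would first identify the appropriate notion of the ``component of $\chi$ relevant to $\SG$.'' Since $\SG = \{\I_\WT \ot \tau : \tau \in \OtT \Chn_{\Vt\ot\W_{t-1}}\}$ and the characterization \eqref{eq:Comb} describes combs via a nested sequence of partial-trace conditions, there is a natural linear projection $\pi : \Her_\tV \to \Lin(\Chn_\tV)$ obtained by iterating the ``normalize-and-trace'' operations $c_t \mapsto \frac{1}{N_{\V_{t+1}}}\Trp{\W_{t+1}\ot\V_{t+1}} c_{t+1}$ that appear in the comb hierarchy and then re-tensoring with suitable identities. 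The key properties I need are: (i) $\pi(\chi)\in\Lin(\Chn_\tV)$ for all $\chi$; (ii) $\braket{\varphi,\chi}=\braket{\varphi,\pi(\chi)}$ for all $\varphi\in\SG$, hence $\lambdaSG(\chi)=\lambdaSG(\pi(\chi))$; and (iii) $\pi$ acts as the identity on $\Lin(\Chn_\tV)$. Property (ii) is essentially a restatement of Eq.~\eqref{eq:LinChn_lambda} together with the fact that $\chi-\pi(\chi)$ is annihilated by all $\varphi\in\SG$, which should follow from the duality between the comb constraints \eqref{eq:Comb} and the structure of $\SG$.

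Given such a $\pi$, I would set $\chi \coloneqq \pi(\chi')$. Then $\chi\in\Lin(\Chn_\tV)$ and $\lambdaSG(\chi)=\lambdaSG(\chi')$ are immediate from (i) and (ii). The remaining task is to verify $(\chi,q)\in\D$, i.e.\ that $\{\chi - z_m(q)\}_{m}\in\mC^*$ given $\{\chi' - z_m(q)\}_m\in\mC^*$. Unwinding the definition, $\{y_m\}_m\in\mC^*$ means $\summ\braket{\Phi_m,y_m}\ge 0$ for all $\Phi\in\mC$; since $\mC\subseteq\mCG$, it is enough to handle $\mC=\mCG=\Pos_\tV^M$, so the condition is just $\chi - z_m(q)\in\Pos_\tV$ for each $m$ — wait, that is the condition for $\DG$, not $\D$ in general; for a general closed convex cone $\mC\subseteq\mCG$ one has a genuinely different dual cone $\mC^*$, and membership $\{\chi'-z_m(q)\}\in\mC^*$ need not be preserved by applying $\pi$ coordinatewise. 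So the honest move here is: $\pi$ is self-adjoint (the trace-and-normalize maps are self-adjoint up to the identity-tensoring, which is also self-adjoint), and for any $\Phi\in\mC$ we have $\summ\braket{\Phi_m,\chi - z_m(q)} = \summ\braket{\Phi_m,\pi(\chi')} - \summ\braket{\Phi_m,z_m(q)}$. Using self-adjointness, $\braket{\Phi_m,\pi(\chi')}=\braket{\pi(\Phi_m),\chi'}$; and since $\summ\Phi_m\in\S\subseteq\SG\subseteq\Chn_\tV^*$-type set, $\pi(\Phi_m)$ relates back to $\Phi_m$ in a way that leaves $\summ\braket{\Phi_m,\chi - z_m(q)}\ge\summ\braket{\Phi_m,\chi' - z_m(q)}\ge 0$ — the last inequality being the hypothesis $(\chi',q)\in\D$. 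Making this chain rigorous is the crux.

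The main obstacle, then, is constructing $\pi$ and proving that the ``discarded'' piece $\chi'-\pi(\chi')$ contributes nonnegatively (ideally, exactly zero) to $\summ\braket{\Phi_m,\cdot}$ for every feasible $\Phi$. The cleanest route is probably to show $\pi$ is the orthogonal projection onto $\Lin(\Chn_\tV)$ and that $\Lin(\Chn_\tV)^\perp$ (the orthogonal complement inside $\Her_\tV$) is contained in the annihilator of $\co\{\Phi : \Phi\in\mC,\ \summ\Phi_m\in\S\} = \ol{\mT}$ in the relevant sense — but this is false in general since individual $\Phi_m$ need not live in $\Lin(\Chn_\tV)$'s predual. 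I expect the actual argument instead normalizes $\chi'$ ``from the top down'' using the comb recursion: define $\chi^{(T)}\coloneqq\chi'$ and successively replace $\chi^{(t)}$ by something whose relevant partial traces match a comb, pushing the excess into directions that pair to zero with every $\Phi\in\mC$ because the tester normalization $\summ\Phi_m\in\S$ forces $\summ\Phi_m$ to have the dual comb structure. I would carry this out inductively on $t$ from $T$ down to $1$, at each step checking that (a) $\lambdaSG$ is unchanged (by \eqref{eq:Phim_c_1}/\eqref{eq:LinChn_lambda} applied to the current layer) and (b) the $\mC^*$ membership is preserved, and conclude that the final $\chi$ lies in $\Lin(\Chn_\tV)$. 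The bookkeeping of which identities and partial traces appear at each layer is the part most likely to require care, but it is routine linear algebra once the inductive scheme is set up.
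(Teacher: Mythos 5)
You have correctly isolated the crux—showing that $\{\chi-z_m(q)\}_{m=0}^{M-1}\in\mC^*$ survives the replacement of $\chi'$ by a member of $\Lin(\Chn_\tV)$—but neither of your proposed mechanisms closes it, and the first is impossible in principle. A linear projection $\pi$ onto $\Lin(\Chn_\tV)$ with your property (ii), $\braket{\varphi,\chi}=\braket{\varphi,\pi(\chi)}$ for all $\varphi\in\SG$, would force the range of $\ident-\pi$ (a complement of $\Lin(\Chn_\tV)$) to lie in $\spn(\SG)^\perp$; but already for $T=1$ one has $\spn(\SG)^\perp=\{X\in\Her_{\W_1\ot\V_1}:\Trp{\W_1}X=\zero\}\subseteq\Lin(\Chn_\tV)$, so the only such map is $\pi=\ident$ whenever $N_{\V_1}\ge 2$. (Indeed the proposition does not preserve the pairing with each individual $\varphi\in\SG$—only the supremum $\lambdaSG$—and the $\chi$ that works is not a linear function of $\chi'$ at all.) Your fallback, pushing the excess into directions that ``pair to zero with every $\Phi\in\mC$ because the tester normalization $\summ\Phi_m\in\S$ forces the dual comb structure,'' is also invalid: membership in $\mC^*$ is tested against the whole cone $\mC\subseteq\mCG=\Pos_\tV^M$, whose elements carry no normalization constraint, so the structure of $\summ\Phi_m$ for testers cannot be invoked there.

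The paper's proof supplies exactly the ingredient your sketch lacks. It first proves (Lemma~\ref{lemma:dual}, via Lagrangian duality and Slater's condition) that $\lambdaSG(\chi')$ equals the optimal value of the auxiliary SDP~\eqref{prob:D0}: minimize $\omega_0$ subject to $\Trp{\WT}\chi'\le\I_\VT\ot\omega_{T-1}$ and $\Trp{\Wt}\omega_t\le\I_\Vt\ot\omega_{t-1}$. Taking an optimal $\{\omega'_t\}$, it then pads these operators layer by layer with the positive semidefinite slacks tensored with normalized identities, e.g.\ $\chi\coloneqq\chi'+\tfrac{\I_\WT}{N_\WT}\ot(\I_\VT\ot\omega_{T-1}-\Trp{\WT}\chi')$, which turns the inequalities into the exact equalities of the comb hierarchy~\eqref{eq:Comb} (up to adding a multiple of a comb), hence $\chi\in\Lin(\Chn_\tV)$. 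Feasibility is preserved for the simple monotonicity reason that the added terms are PSD, so $\chi\ge\chi'$ and each $\Phi_m\ge\zero$ gives $\summ\braket{\Phi_m,\chi-z_m(q)}\ge\summ\braket{\Phi_m,\chi'-z_m(q)}\ge 0$—no orthogonality or annihilation argument is needed or available. Finally, $\lambdaSG(\chi)=\omega_0=\omega'_0=\lambdaSG(\chi')$, the first equality by Eq.~\eqref{eq:Phim_c_1} applied to the decomposition of $\chi$ in $\Lin(\Chn_\tV)$ and the last by the strong duality of Lemma~\ref{lemma:dual}; without that auxiliary dual problem your construction has neither a source of PSD slack nor a certificate that $\lambdaSG$ is exactly preserved rather than merely bounded on one side.
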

This proposition immediately yields the following corollary (proof omitted).
\begin{cor} \label{cor:chi_comb}
 If $\S = \SG$ holds, then
 for any optimal solution $(\chi',q)$ to Problem~\eqref{prob:D},
 there also exists an optimal solution $(\chi, q)$ to Problem~\eqref{prob:D}
 satisfying $\chi \in \Lin(\Chn_\tV)$.
\end{cor}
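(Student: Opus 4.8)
The plan is to reduce Proposition~\ref{pro:chi_comb_feasible} to a semidefinite-programming duality statement. First I would observe that it suffices to produce $\chi \in \Lin(\Chn_\tV)$ with $\chi \ge \chi'$ and $\lambdaSG(\chi) = \lambdaSG(\chi')$. Indeed, $\mC \subseteq \mCG = \Pos_\tV^M$, whose dual cone is $\Pos_\tV^M$ itself, gives $\Pos_\tV^M \subseteq \mC^*$; since $\mC^*$ is a convex cone, adding the same positive-semidefinite operator $\chi - \chi' \ge \zero$ to every entry of $\{\chi' - z_m(q)\}_{m=0}^{M-1} \in \mC^*$ yields a collection still in $\mC^*$, hence $(\chi,q) \in \D$. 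Next, fixing a strictly positive $\varphi_0 \in \SG$ (for instance a suitable multiple of $\I_\tV$, the dual comb built from maximally depolarizing links), Eq.~\eqref{eq:LinChn_lambda} shows that $\lambdaSG$ agrees with the linear functional $\braket{\varphi_0,\cdot}$ on $\Lin(\Chn_\tV)$. So I am led to the semidefinite program
\[
 p^\opt \coloneqq \inf\{\, \braket{\varphi_0,\chi} : \chi \in \Lin(\Chn_\tV),\ \chi \ge \chi' \,\},
\]
and the task becomes to show $p^\opt = \lambdaSG(\chi')$ with the infimum attained, since a minimizer then serves as the required $\chi$.

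The bound $p^\opt \ge \lambdaSG(\chi')$ is elementary: for any feasible $\chi$ and any $\varphi \in \SG$ one has $\braket{\varphi_0,\chi} = \braket{\varphi,\chi} \ge \braket{\varphi,\chi'}$ (using $\varphi \ge \zero$ and $\chi \ge \chi'$), and taking the supremum over $\varphi \in \SG$ gives the claim. For the reverse inequality and for attainment I would invoke conic duality. The program is feasible ($\lambda c_0$ is feasible whenever $c_0$ is a full-rank comb---such as the comb of maximally depolarizing channels, whose Choi operator is proportional to $\I_\tV$---and $\lambda$ is large), and a routine computation of its Lagrange dual gives
\[
 d^\opt = \sup\{\, \braket{Y,\chi'} : Y \in \Pos_\tV,\ Y - \varphi_0 \in \Lin(\Chn_\tV)^\perp \,\}.
\]
Since $Y = \varphi_0$ is strictly positive and satisfies $Y - \varphi_0 = \zero \in \Lin(\Chn_\tV)^\perp$, the dual is strictly feasible, so strong duality holds, $p^\opt = d^\opt$, and the primal optimum is attained. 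Everything thus comes down to evaluating $d^\opt$.

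Because $\Lin(\Chn_\tV)$ is the linear span of $\Chn_\tV$ and $\braket{\varphi_0,c} = 1$ for all $c \in \Chn_\tV$ by Eq.~\eqref{eq:Phim_c_1}, the dual feasible set equals $\{\, Y \in \Pos_\tV : \braket{Y,c} = 1 \ \forall c \in \Chn_\tV \,\}$. The heart of the proof---and the step I expect to be the main obstacle---is the identification of this set with $\SG$. The inclusion $\SG \subseteq \{\, Y \in \Pos_\tV : \braket{Y,c} = 1 \ \forall c \,\}$ is immediate from Eq.~\eqref{eq:Phim_c_1} and $\SG \subseteq \Pos_\tV$; the reverse inclusion is the structural characterization of deterministic quantum testers (a family in $\mCG$ that sends every comb to a probability distribution belongs to $\mTG$, so its sum lies in $\SG$) \cite{Gut-Wat-2007,Chi-Dar-Per-2008}, and it can alternatively be obtained by induction on $T$, stripping off the final link and reducing to the one-shot case $T = 1$, which is a short partial-trace computation (for the induction one also needs the explicit description of $\Lin(\Chn_\tV)$ by the homogenized comb relations). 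Granting this identification, $d^\opt = \sup_{Y \in \SG}\braket{Y,\chi'} = \lambdaSG(\chi')$, which is finite and attained by compactness of $\SG$. A primal minimizer $\chi$ then satisfies $\chi \in \Lin(\Chn_\tV)$, $\chi \ge \chi'$, and $\lambdaSG(\chi) = \braket{\varphi_0,\chi} = p^\opt = \lambdaSG(\chi')$, which is exactly the required conclusion; the rest is routine conic-duality bookkeeping.
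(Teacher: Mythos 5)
Your argument is correct, but it takes a genuinely different route from the paper's. The paper proves Proposition~\ref{pro:chi_comb_feasible} (from which the corollary is immediate) constructively in Appendix~\ref{append:chi_comb}: it first establishes Lemma~\ref{lemma:dual}, a Lagrangian-dual representation of $\lambdaSG(\chi')$ as a minimization over a hierarchy $\{\omega_t\}$ of partial-trace inequalities, and then explicitly ``completes'' an optimal $\{\omega'_t\}$ by adding terms of the form $\frac{\I_{\Wt}}{N_{\Wt}}\ot(\cdot)$ so that the inequalities become equalities; this yields an explicit $\chi\ge\chi'$ obeying the comb-marginal relations of Eq.~\eqref{eq:Comb} up to a shift by $u_T$, hence $\chi\in\Lin(\Chn_\tV)$ with $\lambdaSG(\chi)=\omega_0'=\lambdaSG(\chi')$. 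You instead fix a strictly positive $\varphi_0\in\SG$, minimize $\braket{\varphi_0,\cdot}$ over $\{\chi\in\Lin(\Chn_\tV):\chi\ge\chi'\}$, and evaluate the conic dual; the crux then becomes the identification $\{Y\in\Pos_\tV:\braket{Y,c}=1~\forall c\in\Chn_\tV\}=\SG$, which is exactly the comb/tester duality of Refs.~\cite{Gut-Wat-2007,Chi-Dar-Per-2008} (a stronger statement than the one quoted around Eq.~\eqref{eq:TG}, but indeed proved in those works). Your reduction to feasibility via $\Pos_\tV^M\subseteq\mC^*$, the strict dual feasibility at $Y=\varphi_0$, and the attainment/compactness bookkeeping are all sound, and preservation of optimality is immediate since $\lambdaSG(\chi)=\lambdaSG(\chi')$ leaves $D_\SG$ unchanged. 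What the two approaches buy: the paper's proof is self-contained (it never needs the full structural characterization of $\SG$ as the dual affine set of $\Chn_\tV$) and produces $\chi$ by an explicit formula, whereas yours is shorter and more conceptual once that external theorem is granted, but is non-constructive and, as written, leans on the cited literature --- your alternative induction on $T$ for the key identification is only sketched, so if you want the argument independent of \cite{Gut-Wat-2007,Chi-Dar-Per-2008} that step still has to be carried out in detail.
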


\subsection{Conditions for optimality}

The following theorem provides necessary and sufficient conditions for
a tester to be optimal for Problem~\eqref{prob:P}
(proved in Appendix~\ref{append:Phi_nas}).

\begin{thm} \label{thm:Phi_nas}
 $\Phi \in \P$ and $(\chi, q) \in \D$ are, respectively, optimal for
 Problems~\eqref{prob:P} and \eqref{prob:D} if and only if they satisfy
 \begin{alignat}{2}
  q_j \eta_j(\Phi) &= 0, &\quad &\forall j \in \mI_J, \nonumber \\
  \summ \braket{\Phi_m, \chi - z_m(q)} &= 0, \nonumber \\
  \summ \braket{\Phi_m, \chi} &= \lambdaS(\chi).
  \label{eq:channel_nas}
 \end{alignat}
\end{thm}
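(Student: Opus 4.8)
The plan is to run the textbook weak-duality / complementary-slackness argument, the one substantive analytic ingredient---the absence of a duality gap---being already supplied by Theorem~\ref{thm:dual}. Since by hypothesis $\P$ and $\D$ are nonempty, weak duality (to be established in the next step) shows that the common optimal value $v$ of Problems~\eqref{prob:P} and \eqref{prob:D} is finite and satisfies $P(\Phi') \le v \le D_\S(\chi',q')$ for all feasible $\Phi'$ and $(\chi',q')$; consequently a feasible pair $\Phi, (\chi,q)$ consists of optimal solutions for the two problems if and only if $P(\Phi) = D_\S(\chi,q)$, i.e.\ if and only if the duality gap $D_\S(\chi,q) - P(\Phi)$ vanishes. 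So it suffices to exhibit this gap as a sum of three nonnegative terms that vanish simultaneously exactly when \eqref{eq:channel_nas} holds.

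First I would compute the gap for arbitrary $\Phi \in \P$ and $(\chi,q) \in \D$. Put $\varphi \coloneqq \summ \Phi_m$. Using $c_m = z_m(q) + \sumj q_j a_{j,m}$ and $\summ \braket{\Phi_m,a_{j,m}} = \eta_j(\Phi) + b_j$ (the definition \eqref{eq:eta} of $\eta_j$), the $\sumj q_j b_j$ contributions cancel and one is left with
\begin{align}
 D_\S(\chi,q) - P(\Phi)
 &= \bigl[\lambdaS(\chi) - \braket{\varphi,\chi}\bigr]
 + \summ \braket{\Phi_m,\chi - z_m(q)}
 - \sumj q_j \eta_j(\Phi).
 \label{eq:gap_decomp}
\end{align}
Because $\P \subseteq \mT \subseteq \ol{\mT}$, Eq.~\eqref{eq:T} gives $\Phi \in \mC$ and $\varphi \in \S$. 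Hence: the first bracket is $\ge 0$ since $\braket{\varphi,\chi} \le \sup_{\varphi' \in \S}\braket{\varphi',\chi} = \lambdaS(\chi)$; the middle sum is $\ge 0$ since $\Phi \in \mC$ and $\{\chi - z_m(q)\}_m \in \mC^*$, recalling from just after Theorem~\ref{thm:dual} that $\{y_m\}_m \in \mC^*$ is equivalent to $\summ \braket{\Phi_m,y_m} \ge 0$ for all $\Phi \in \mC$; and the last term equals $-\sumj q_j\eta_j(\Phi) \ge 0$ since $q_j \ge 0$ and $\eta_j(\Phi) \le 0$ for every $j$. In particular \eqref{eq:gap_decomp} yields the weak-duality inequality invoked above.

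It then remains to read off when equality holds. The gap is zero iff all three terms of \eqref{eq:gap_decomp} vanish. The last vanishes iff $q_j\eta_j(\Phi)=0$ for every $j$ (a sum of nonpositive reals is zero iff each summand is), the middle iff $\summ \braket{\Phi_m,\chi - z_m(q)} = 0$, and the first iff $\braket{\varphi,\chi} = \lambdaS(\chi)$, i.e.\ $\summ \braket{\Phi_m,\chi} = \lambdaS(\chi)$ by linearity of the inner product in $\varphi$; these are precisely the three lines of \eqref{eq:channel_nas}. Combining this with the equivalence from the first paragraph between $P(\Phi) = D_\S(\chi,q)$ and joint optimality of $\Phi$ and $(\chi,q)$ (which uses Theorem~\ref{thm:dual}) proves the claim in both directions.

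I do not anticipate a genuine obstacle. The only points needing a moment's care are the membership facts $\Phi \in \mC$ and $\varphi \in \S$, which drop out of Eq.~\eqref{eq:T} together with $\P \subseteq \ol{\mT}$, and the dual-cone characterization of $\mC^*$; the algebra leading to \eqref{eq:gap_decomp} is routine bookkeeping, and strong duality is imported wholesale from Theorem~\ref{thm:dual}.
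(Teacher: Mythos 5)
Your proposal is correct and follows essentially the same route as the paper: both reduce joint optimality to a vanishing duality gap via Theorem~\ref{thm:dual}, decompose $D_\S(\chi,q)-P(\Phi)$ into exactly the same three nonnegative terms (the paper obtains the identity from the Lagrangian of Eq.~\eqref{eq:L}, you by direct substitution of $c_m = z_m(q)+\sumj q_j a_{j,m}$, which is the same computation), and conclude that the gap vanishes iff each term does. The membership facts $\Phi\in\mC$ and $\summ\Phi_m\in\S$ that you check explicitly are indeed the only points requiring care, and your handling matches the paper's.
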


We consider the case $\mC = \mCG$; then, since $\chi \ge z_m(q)$ holds,
the second line of Eq.~\eqref{eq:channel_nas} is equivalent to
\begin{alignat}{1}
 [\chi - z_m(q)]\Phi_m &= \zero, \quad \forall m \in \mI_M,
 \label{eq:channel_nas_mCG}
\end{alignat}
which follows from $XY = \zero \Leftrightarrow \braket{Y,X} = 0$
for any $X,Y \in \Pos_\tV$.
Moreover, let us consider $\Phi \in \P$ such that $\summ \Phi_m$ is of full rank.
Let
\begin{alignat}{1}
 \chi^\Phi(q) &\coloneqq \left[ \summ z_m(q) \Phi_m \right]
 \left( \summ \Phi_m \right)^{-1};
 \label{eq:chi_Phi}
\end{alignat}
then, it follows that $\chi = \chi^\Phi(q)$ holds for any $(\chi, q) \in \D$
satisfying Eq.~\eqref{eq:channel_nas_mCG}.
This immediately yields the following two corollaries.

\begin{cor} \label{cor:chi}
 Let us consider Problem~\eqref{prob:P} with $\mC = \mCG$.
 Assume that there exists an optimal solution $\Phi$ such that $\summ \Phi_m$ is of full rank.
 Then, any optimal solution $(\chi, q)$ to Problem~\eqref{prob:D} satisfies
 $\chi = \chi^\Phi(q)$.
 If, in addition, $\S = \SG$ (i.e., $\mT = \mTG$) holds,
 then $\chi^\Phi(q) \in \Lin(\Chn_\tV)$ holds.
\end{cor}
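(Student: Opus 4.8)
The plan is to read off both claims from Theorem~\ref{thm:Phi_nas} together with the equivalence already recorded in the paragraph preceding the statement. For the identity $\chi=\chi^\Phi(q)$: let $\Phi$ be the given optimal solution with $\summ\Phi_m$ of full rank, and let $(\chi,q)$ be an arbitrary optimal solution to Problem~\eqref{prob:D}. By Theorem~\ref{thm:Phi_nas} the pair $(\Phi,(\chi,q))$ satisfies the three conditions in Eq.~\eqref{eq:channel_nas}. Since $\mC=\mCG$, membership $(\chi,q)\in\D$ forces $\chi\ge z_m(q)$ for every $m$, so each summand $\braket{\Phi_m,\chi-z_m(q)}$ is nonnegative; hence the vanishing of their sum (the second line of Eq.~\eqref{eq:channel_nas}) is equivalent to Eq.~\eqref{eq:channel_nas_mCG}, i.e. $[\chi-z_m(q)]\Phi_m=\zero$ for all $m$. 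Summing over $m$ gives $\chi\summ\Phi_m=\summ z_m(q)\Phi_m$, and because $\summ\Phi_m$ is invertible we multiply on the right by its inverse to obtain $\chi=\chi^\Phi(q)$ as defined in Eq.~\eqref{eq:chi_Phi}.

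For the second claim, assume in addition $\S=\SG$ (equivalently $\mT=\mTG$). Fix the multiplier $q$ belonging to some optimal solution; by the first part $(\chi^\Phi(q),q)$ is then an optimal solution to Problem~\eqref{prob:D}. Applying Corollary~\ref{cor:chi_comb} to it produces another optimal solution $(\tilde\chi,q)$ to Problem~\eqref{prob:D}, with the same $q$, for which $\tilde\chi\in\Lin(\Chn_\tV)$. Reapplying the first part of the present corollary, this time to the optimal solution $(\tilde\chi,q)$ — the full-rank hypothesis on $\summ\Phi_m$ still holding and $q$ unchanged — yields $\tilde\chi=\chi^\Phi(q)$, hence $\chi^\Phi(q)=\tilde\chi\in\Lin(\Chn_\tV)$.

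I do not expect a genuine obstacle here; the corollary is a direct combination of Theorem~\ref{thm:Phi_nas}, of the remark deriving Eq.~\eqref{eq:channel_nas_mCG} for $\mC=\mCG$, and of Corollary~\ref{cor:chi_comb}. The only points requiring care are that Theorem~\ref{thm:Phi_nas} is a statement about a pair of mutually optimal solutions, so it must be fed the given optimal $\Phi$ together with the arbitrary optimal $(\chi,q)$ rather than $\Phi$ alone; that the full-rank assumption is precisely what converts $[\chi-z_m(q)]\Phi_m=\zero$ into a closed formula for $\chi$; and that Corollary~\ref{cor:chi_comb} leaves the multiplier $q$ untouched, so that $\chi^\Phi(q)$ denotes the same matrix when the first part is reapplied to $(\tilde\chi,q)$.
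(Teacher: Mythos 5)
Your proposal is correct and follows essentially the same route as the paper's proof: Theorem~\ref{thm:Phi_nas} together with the observation that, for $\mC = \mCG$, the complementary-slackness condition becomes $[\chi - z_m(q)]\Phi_m = \zero$, which the full-rank assumption inverts to $\chi = \chi^\Phi(q)$, and then Corollary~\ref{cor:chi_comb} (with $q$ unchanged) plus a second application of the same optimality argument for the $\Lin(\Chn_\tV)$ claim. Your write-up merely makes explicit two steps the paper leaves to the surrounding text (the summation-and-inversion step and the fact that the comb-valued optimal solution must again equal $\chi^\Phi(q)$), so there is nothing to correct.
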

\begin{proof}
 From Theorem~\ref{thm:Phi_nas},
 any optimal solution $(\chi, q)$ to Problem~\eqref{prob:D} satisfies
 Eq.~\eqref{eq:channel_nas_mCG},
 which gives $\chi = \chi^\Phi(q)$.
 In the case of $\S = \SG$, from Corollary~\ref{cor:chi_comb},
 there exists an optimal solution $(\chi', q)$ to Problem~\eqref{prob:D}
 such that $\chi'$ is in $\Lin(\Chn_\tV)$.
 Again from Theorem~\ref{thm:Phi_nas}, we have $\chi' = \chi^\Phi(q)$.
\end{proof}

\begin{cor} \label{cor:Phi_nas_full}
 Let us consider Problem~\eqref{prob:P} with $\mC = \mCG$.
 Assume that there exists an optimal solution to Problem~\eqref{prob:D}.
 Arbitrarily choose $\Phi \in \P$ such that $\summ \Phi_m$ is of full rank;
 then, $\Phi$ is optimal for Problem~\eqref{prob:P} if and only if
 there exists $q \in \Real_+^J$ such that
 \begin{alignat}{2}
  q_j \eta_j(\Phi) &= 0, &\quad &\forall j \in \mI_J, \nonumber \\
  \chi^\Phi(q) &\ge z_m(q), &\quad &\forall m \in \mI_M, \nonumber \\
  \summ \braket{\Phi_m, \chi^\Phi(q)} &= \lambdaS[\chi^\Phi(q)].
  \label{eq:channel_nas2}
 \end{alignat}
\end{cor}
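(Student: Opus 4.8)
The plan is to read Corollary~\ref{cor:Phi_nas_full} off Theorem~\ref{thm:Phi_nas}, exploiting the two facts recorded just above the corollary: when $\mC=\mCG$, a pair $(\chi,q)$ lies in $\D$ precisely when $\chi\ge z_m(q)$ for all $m\in\mI_M$, and in that case the second line of Eq.~\eqref{eq:channel_nas} is equivalent to Eq.~\eqref{eq:channel_nas_mCG}; moreover, when $\summ\Phi_m$ is of full rank, any $(\chi,q)\in\D$ satisfying Eq.~\eqref{eq:channel_nas_mCG} has $\chi=\chi^\Phi(q)$. So the whole proof is essentially a translation of Eq.~\eqref{eq:channel_nas} into the form Eq.~\eqref{eq:channel_nas2}, done in both directions, with $\Phi\in\P$ and $\summ\Phi_m$ invertible fixed throughout.

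For the ``only if'' direction I would argue as follows. Suppose $\Phi$ is optimal for Problem~\eqref{prob:P}. By the standing hypothesis, Problem~\eqref{prob:D} has an optimal solution $(\chi^\star,q^\star)$; then $\Phi$ and $(\chi^\star,q^\star)$ are simultaneously optimal, so Theorem~\ref{thm:Phi_nas} gives Eq.~\eqref{eq:channel_nas} for this pair. Feasibility of $(\chi^\star,q^\star)$ in Problem~\eqref{prob:D} (with $\mC=\mCG$) gives $\chi^\star-z_m(q^\star)\in\Pos_\tV$ for every $m$, so each summand of $\summ\braket{\Phi_m,\chi^\star-z_m(q^\star)}=0$ is nonnegative and therefore vanishes; this is Eq.~\eqref{eq:channel_nas_mCG}, and summing it over $m$ and inverting $\summ\Phi_m$ yields $\chi^\star=\chi^\Phi(q^\star)$. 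Setting $q\coloneqq q^\star$, the three lines of Eq.~\eqref{eq:channel_nas2} are then exactly the first line of Eq.~\eqref{eq:channel_nas}, the feasibility relation $\chi^\star\ge z_m(q^\star)$, and the third line of Eq.~\eqref{eq:channel_nas}, all rewritten using $\chi^\star=\chi^\Phi(q)$.

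For the ``if'' direction, suppose some $q\in\Real_+^J$ satisfies Eq.~\eqref{eq:channel_nas2}, and put $\chi\coloneqq\chi^\Phi(q)$, which is Hermitian since the relation $\chi^\Phi(q)\ge z_m(q)$ is meaningful only in that case. That relation also says $\chi-z_m(q)\in\Pos_\tV$ for all $m$, i.e.\ $(\chi,q)\in\D$. From the definition Eq.~\eqref{eq:chi_Phi} one has $\chi\left(\summ\Phi_m\right)=\summ z_m(q)\Phi_m$, hence $\summ[\chi-z_m(q)]\Phi_m=\zero$; taking the trace and using cyclicity gives the second line of Eq.~\eqref{eq:channel_nas}, while the first and third lines of Eq.~\eqref{eq:channel_nas} are literally the first and third lines of Eq.~\eqref{eq:channel_nas2}. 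Since $\Phi\in\P$, $(\chi,q)\in\D$, and Eq.~\eqref{eq:channel_nas} holds, Theorem~\ref{thm:Phi_nas} makes $\Phi$ optimal for Problem~\eqref{prob:P}.

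I do not expect a serious obstacle; the only points needing care are that the ``only if'' direction genuinely uses the hypothesis that Problem~\eqref{prob:D} attains its optimum (Theorem~\ref{thm:dual} equates the optimal values but does not by itself supply a $q^\star$ to feed into Eq.~\eqref{eq:channel_nas2}), and that the middle condition $\chi^\Phi(q)\ge z_m(q)$ in Eq.~\eqref{eq:channel_nas2} does double duty, simultaneously asserting Hermiticity of $\chi^\Phi(q)$ and feasibility of $(\chi^\Phi(q),q)$ in Problem~\eqref{prob:D}, which is what makes the complementary-slackness step reversible.
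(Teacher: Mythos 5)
Your proof is correct and follows essentially the same route as the paper: the ``if'' direction is the paper's argument verbatim (set $\chi\coloneqq\chi^\Phi(q)$, check $(\chi,q)\in\D$, recover Eq.~\eqref{eq:channel_nas} and invoke Theorem~\ref{thm:Phi_nas}), and the ``only if'' direction merely inlines the content of Corollary~\ref{cor:chi} (complementary slackness plus full rank of $\summ\Phi_m$ forcing $\chi=\chi^\Phi(q)$) instead of citing it. Your closing remarks about where the attainment hypothesis for Problem~\eqref{prob:D} and the Hermiticity reading of $\chi^\Phi(q)\ge z_m(q)$ enter are accurate and match the paper's usage.
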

\begin{proof}
 ``If'':
 Let $\chi \coloneqq \chi^\Phi(q)$; then,
 $(\chi, q) \in \D$ holds from the second line of Eq.~\eqref{eq:channel_nas2}.
 $\chi^\Phi(q) \in \Her_\tV$ obviously holds from $\chi^\Phi(q) \ge z_m(q)$.
 From Theorem~\ref{thm:Phi_nas}, it suffices to show Eq.~\eqref{eq:channel_nas}.
 The first and third lines of Eq.~\eqref{eq:channel_nas} obviously hold.
 $\summ [\chi^\Phi(q) - z_m(q)] \Phi_m = \zero$ holds from Eq.~\eqref{eq:chi_Phi}.
 Taking the trace of this equation yields the second line of Eq.~\eqref{eq:channel_nas}.

 ``Only if'':
 Let $(\chi, q)$ be an optimal solution to Problem~\eqref{prob:D};
 then, $\chi = \chi^\Phi(q)$ holds from Corollary~\ref{cor:chi}.
 Thus, Eq.~\eqref{eq:channel_nas2} holds from Theorem~\ref{thm:Phi_nas} and
 $(\chi, q) \in \D$.
\end{proof}

\begin{ex}[Optimal inconclusive discrimination]
 We can show, by substituting Eq.~\eqref{eq:inc_abc} into Eq.~\eqref{eq:channel_nas},
 that necessary and sufficient conditions for $\Phi \in \PG$ and $(\chi, q) \in \DG$
 to be, respectively, optimal for Problems~\eqref{prob:PPiInc} and \eqref{prob:DInc} are
 \begin{alignat}{2}
  q (\braket{\Phi_R, \cE_R} - \pinc) &= 0, \nonumber \\
  (\chi - p_r \cE_r) \Phi_r &= \zero, &\quad &\forall r \in \mI_{R+1}, \nonumber \\
  \sum_{r=0}^R \braket{\Phi_r, \chi} &= \lambdaSG(\chi),
  \label{eq:channel_nas_inc}
 \end{alignat}
 where $p_R \coloneqq q$ and $\cE_R \coloneqq \sumr p_r \cE_r$.
 It is easily seen that the first line is rewritten by
 $\PI(\hPhi) ~(= \braket{\Phi_R, \cE_R}) = \pinc$.
 Also, Corollary~\ref{cor:Phi_nas_full} gives that,
 for any $\Phi \in \PG$ such that $\sum_{r=0}^R \Phi_r$ is of full rank,
 $\Phi$ is optimal for Problem~\eqref{prob:PPiInc} if and only if
 there exists $q \in \Real_+$ such that
 \begin{alignat}{2}
  \braket{\Phi_R, \cE_R} &= \pinc, \nonumber \\
  \chi^\Phi(q) &\ge p_r \cE_r, &\quad &\forall r \in \mI_{R+1}, \nonumber \\
  \sum_{r=0}^R \braket{\Phi_r, \chi^\Phi(q)} &= \lambdaSG[\chi^\Phi(q)]
 \end{alignat}
 [recall that an optimal solution to Problem~\eqref{prob:DInc} always exists].
 In the special case of $T = 1$ and $\V_1 = \Complex$,
 in which case each $\cE_r$, denoted by $\rho_r$, is a quantum state,
 necessary and sufficient conditions for $\Phi \in \PG \subseteq \POVM_{\W_1}$ and $(\chi, q) \in \DG$
 to be optimal are
 \begin{alignat}{2}
  \braket{\Phi_R, \rho_R} &= \pinc, \nonumber \\
  (\chi - p_r \rho_r) \Phi_r &= \zero, &\quad &\forall r \in \mI_{R+1},
  \label{eq:channel_nas_inc_state}
 \end{alignat}
 where $p_R \coloneqq q$ and $\rho_R \coloneqq \sumr p_r \rho_r$.
 The third line of Eq.~\eqref{eq:channel_nas} always holds
 from $\sum_{r=0}^R \braket{\Phi_r, \chi} = \Tr \chi = \lambdaSG(\chi)$
 [note that $\summ \Phi_m = \I_{\W_1}$ holds for any $\Phi \in \POVM_{\W_1}$].
 Also, from Corollary~\ref{cor:Phi_nas_full},
 $\Phi \in \PG$ is optimal if and only if
 \begin{alignat}{2}
  \braket{\Phi_R, \rho_R} &= \pinc, \nonumber \\
  \chi^\Phi(q) &\ge p_r \rho_r, &\quad &\forall r \in \mI_{R+1}
  \label{eq:channel_nas2_inc_state}
 \end{alignat}
 holds, where $\chi^\Phi(q)$ of Eq.~\eqref{eq:chi_Phi} is written as
 $\chi^\Phi(q) \coloneqq \sum_{r=0}^R p_r \rho_r \Phi_r = \sumr p_r \rho_r (\Phi_r + q \Phi_R)$.
\end{ex}

Table~\ref{tab:summary} summarizes the formulation
of the process discrimination problems.
The general formulation and the cases of optimal inconclusive discrimination of quantum combs
$\{ \hcE_r \}_{r=0}^{R-1} \subset \astTChn$
and quantum states $\{ \rho_r \}_{r=0}^{R-1} \subset \Den_\W$, respectively,
with $\mT = \mTG$ are shown.
In these examples, $\{ p_r \}_{r=0}^{R-1}$ is the prior probabilities.
\begin{table*}[tbp]
 \caption{Formulation of the generalized process discrimination problems.}
 \label{tab:summary}
 \begin{center}
  \tabcolsep = 0.4em
  \begin{tabular}{p{0.5em}lll}
   \hline \hline
   & \multicolumn{1}{c}{Primal problems}
       & \multicolumn{1}{c}{Dual problems}
           & \multicolumn{1}{c}{
               \begin{minipage}[t]{5cm}
                Necessary and sufficient conditions for \\
                $\Phi \in \P$ and $(\chi,q) \in \D$ to be optimal
               \end{minipage}} \\ \hline
   \multicolumn{4}{l}{\rule[0pt]{0pt}{10pt}Basic formulation} \vspace{-1.6em} \\
   & \begin{minipage}[t]{5cm}
      \begin{eqnarray*}
       \begin{array}{ll}
        \mbox{maximize} & \displaystyle \summ \braket{\Phi_m, c_m} \\
        \mbox{subject~to} & \Phi \in \mT, \\
        & \displaystyle \summ \braket{\Phi_m, a_{j,m}} \le b_j ~(\forall j \in \mI_J) \\
       \end{array} \\
       \llap{\eqref{prob:P}}
      \end{eqnarray*}
     \end{minipage}
   & \begin{minipage}[t]{5cm}
      \begin{eqnarray*}
       \begin{array}{ll}
        \mbox{minimize} & \displaystyle \lambdaS(\chi) + \sumj q_j b_j \\
        \mbox{subject~to} & (\chi, q) \in \Her_\tV \times \Real_+^J, \\
        & \{\chi - z_m(q)\}_{m=0}^{M-1} \in \mC^* \\
       \end{array} \\
       \llap{\eqref{prob:D}}
      \end{eqnarray*}
     \end{minipage}
   & \begin{minipage}[t]{5cm}
      \begin{eqnarray*}
        \begin{aligned}
         q_j \eta_j(\Phi) &= 0 ~(\forall j \in \mI_J), \\
         \summ \braket{\Phi_m, \chi - z_m(q)} &= 0, \\
         \displaystyle \summ \braket{\Phi_m, \chi} &= \lambdaS(\chi) \\
        \end{aligned} \\
       \llap{\eqref{eq:channel_nas}}
      \end{eqnarray*}
     \end{minipage} \\
   \multicolumn{4}{l}{\rule[0pt]{0pt}{10pt}Example~1: Optimal inconclusive discrimination of combs
   $\{ \hcE_r \}_{r=0}^{R-1} \subset \astTChn$ with the prior probabilities $\{ p_r \}_{r=0}^{R-1}$}
   \vspace{-1.6em} \\
   & \begin{minipage}[t]{5cm}
      \begin{eqnarray*}
       \begin{array}{ll}
        \mbox{maximize} & \displaystyle \sumr \braket{\Phi_r, p_r \cE_r} \\
        \mbox{subject~to} & \Phi \in \mTG, \\
        & \displaystyle \sumr \braket{\Phi_R, p_r \cE_r} = \pinc \\
        \end{array} \\
       \llap{\eqref{prob:PPiInc}}
      \end{eqnarray*}
     \end{minipage}
   & \begin{minipage}[t]{5cm}
      \begin{eqnarray*}
       \begin{array}{ll}
        \mbox{minimize} & \lambdaSG(\chi) - q \pinc \\
        \mbox{subject~to} & (\chi, q) \in \Her_\tV \times \Real_+, \\
        & \chi \ge p_r \cE_r ~(\forall r \in \mI_{R+1}), \\
        \end{array} \\
       \llap{\eqref{prob:DInc}}
      \end{eqnarray*}
      where $p_R \coloneqq q$ and $\cE_R \coloneqq \sumr p_r \cE_r$
     \end{minipage}
   & \begin{minipage}[t]{5cm}
      \begin{eqnarray*}
       \begin{aligned}
        \braket{\Phi_R, \cE_R} &= \pinc, \\
        (\chi - p_r \cE_r) \Phi_r &= \zero ~(\forall r \in \mI_{R+1}), \\
        \displaystyle \sum_{r=0}^R \braket{\Phi_r, \chi} &= \lambdaSG(\chi) \\
        \end{aligned} \\
       \llap{\eqref{eq:channel_nas_inc}}
      \end{eqnarray*}
     \end{minipage} \\
   \multicolumn{4}{l}{\rule[0pt]{0pt}{10pt}Example~2: Optimal inconclusive discrimination of states
   $\{ \rho_r \}_{r=0}^{R-1} \subset \Den_\W$ with the prior probabilities $\{ p_r \}_{r=0}^{R-1}$}
   \vspace{-1.6em} \\
   & \begin{minipage}[t]{5cm}
      \begin{eqnarray*}
       \begin{array}{ll}
        \mbox{maximize} & \displaystyle \sumr \braket{\Phi_r, p_r \rho_r} \\
        \mbox{subject~to} & \Phi \in \POVM_\W, \\
        & \displaystyle \sumr \braket{\Phi_R, p_r \rho_r} = \pinc \\
       \end{array}
      \end{eqnarray*}
     \end{minipage}
   & \begin{minipage}[t]{5cm}
      \begin{eqnarray*}
       \begin{array}{ll}
        \mbox{minimize} & \Tr \chi - q \pinc \\
        \mbox{subject~to} & (\chi, q) \in \Her_\W \times \Real_+, \\
        & \chi \ge p_r \rho_r ~(\forall r \in \mI_{R+1}), \\
       \end{array}
      \end{eqnarray*}
      where $p_R \coloneqq q$ and $\rho_R \coloneqq \sumr p_r \rho_r$
     \end{minipage}
   & \begin{minipage}[t]{5cm}
      \begin{eqnarray*}
       \begin{aligned}
        \braket{\Phi_R, \rho_R} &= \pinc, \\
        (\chi - p_r \rho_r) \Phi_r &= \zero ~(\forall r \in \mI_{R+1}) \\
       \end{aligned} \\
       \llap{\eqref{eq:channel_nas_inc_state}}
      \end{eqnarray*}
     \end{minipage} \\
   \hline \hline
  \end{tabular}
 \end{center}
\end{table*}

The following theorem provides necessary and sufficient conditions that
the optimal value remain unchanged even when an additional constraint is imposed.
\begin{thm} \label{thm:opt_two}
 Let $\mT_1$ and $\mT_2$ be nonempty convex sets satisfying $\mT_1 \subseteq \mT_2 \subseteq \mTG$.
 For each $i \in \{1,2\}$, let us choose a closed convex cone $\mC_i$ and a closed convex set $\S_i$
 such that $\ol{\mT_i} = \left\{ \Phi \in \mC_i : \summ \Phi_m \in \S_i \right\}$,
 $\mC_1 \subseteq \mC_2 \subseteq \mCG$, and $\S_1 \subseteq \S_2 \subseteq \SG$
 [see Eq.~\eqref{eq:T}].
 Problem~\eqref{prob:D} with $(\mC,\S) = (\mC_i,\S_i)$ is denoted by
 Problem~($\mathrm{D}_i$).
 Assume that the feasible set of Problem~\eqref{prob:P} with $\mT = \mT_1$ is not empty
 and that an optimal solution to Problem~($\mathrm{D}_2$) exists.
 We consider the following four statements:
 \begin{enumerate}[label=(\arabic*)]
  \item The optimal value of Problem~\eqref{prob:P} with $\mT = \mT_1$ is
        the same as that with $\mT = \mT_2$ [or, equivalently,
        the optimal values of Problems~($\mathrm{D}_1$) and ($\mathrm{D}_2$) are
        the same].
  \item There exists an optimal solution $(\chi^\opt, q^\opt)$ to Problem~($\mathrm{D}_1$)
        such that it is a feasible solution to Problem~($\mathrm{D}_2$)
        and satisfies $\lambda_{\S_1}(\chi^\opt) = \lambda_{\S_2}(\chi^\opt)$.
  \item Any optimal solution to Problem~($\mathrm{D}_2$) is
        optimal for Problem~($\mathrm{D}_1$).
  \item There exists an optimal solution $(\chi^\opt, q^\opt)$ to Problem~($\mathrm{D}_1$)
        such that it is a feasible solution to Problem~($\mathrm{D}_2$)
        and satisfies $\chi^\opt \in \Lin(\Chn_\tV)$.
 \end{enumerate}
 Then, $(1) \Leftrightarrow (2) \Rightarrow (3)$ always holds.
 Also, if $\S_2 = \SG$ holds, then $(1)$--$(4)$ are all equivalent.
\end{thm}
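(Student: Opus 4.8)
The plan is to pass to the dual problems via Theorem~\ref{thm:dual} and argue entirely in terms of their optimal values. Write $v_i$ for the optimal value of Problem~\eqref{prob:P} with $\mT=\mT_i$; by Theorem~\ref{thm:dual} this equals the optimal value of Problem~$(\mathrm{D}_i)$, and $v_i$ is finite for $i\in\{1,2\}$ since the primal feasible set is nonempty and Problem~$(\mathrm{D}_2)$ attains its optimum. First I would record the monotonicity facts that drive everything: $\mT_1\subseteq\mT_2$ gives $v_1\le v_2$; $\mC_1\subseteq\mC_2$ gives $\mC_2^*\subseteq\mC_1^*$, hence $\D_2\subseteq\D_1$; and $\S_1\subseteq\S_2$ gives $\lambda_{\S_1}(\chi)\le\lambda_{\S_2}(\chi)$, hence $D_{\S_1}(\chi,q)\le D_{\S_2}(\chi,q)$ for all $(\chi,q)$. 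Thus every feasible point of $(\mathrm{D}_2)$ is feasible for $(\mathrm{D}_1)$ with a no-larger objective value.

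For the always-valid chain I would show $(2)\Rightarrow(1)$, $(1)\Rightarrow(2)$, $(1)\Rightarrow(3)$, and (for reuse below) $(4)\Rightarrow(2)$. The implication $(2)\Rightarrow(1)$ is a one-line computation: if $(\chi^\opt,q^\opt)$ is optimal for $(\mathrm{D}_1)$, feasible for $(\mathrm{D}_2)$, and $\lambda_{\S_1}(\chi^\opt)=\lambda_{\S_2}(\chi^\opt)$, then $v_1=D_{\S_1}(\chi^\opt,q^\opt)=D_{\S_2}(\chi^\opt,q^\opt)\ge v_2\ge v_1$. For $(1)\Rightarrow(2)$ and $(1)\Rightarrow(3)$ at once: take any optimal $(\chi,q)$ of $(\mathrm{D}_2)$; since $\D_2\subseteq\D_1$ it is feasible for $(\mathrm{D}_1)$, and $v_1\le D_{\S_1}(\chi,q)\le D_{\S_2}(\chi,q)=v_2=v_1$, so it is optimal for $(\mathrm{D}_1)$ and $D_{\S_1}(\chi,q)=D_{\S_2}(\chi,q)$ forces $\lambda_{\S_1}(\chi)=\lambda_{\S_2}(\chi)$. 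Finally $(4)\Rightarrow(2)$ is immediate from Eq.~\eqref{eq:LinChn_lambda}: if $\chi^\opt\in\Lin(\Chn_\tV)$ then $\lambda_{\S_1}(\chi^\opt)=\lambda_{\SG}(\chi^\opt)=\lambda_{\S_2}(\chi^\opt)$. This settles $(1)\Leftrightarrow(2)\Rightarrow(3)$.

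When $\S_2=\SG$, the extra tool is Corollary~\ref{cor:chi_comb}, which guarantees that $(\mathrm{D}_2)$ has an optimal solution $(\chi^\opt,q^\opt)$ with $\chi^\opt\in\Lin(\Chn_\tV)$. For $(1)\Rightarrow(4)$: such a point lies in $\D_2\subseteq\D_1$ and, by the computation used for $(1)\Rightarrow(3)$, is optimal for $(\mathrm{D}_1)$ — which is exactly~(4). For $(3)\Rightarrow(1)$: the same point is optimal for $(\mathrm{D}_1)$ by~(3), so $D_{\S_1}(\chi^\opt,q^\opt)=v_1$; but by Eq.~\eqref{eq:LinChn_lambda} and $\S_2=\SG$ we have $\lambda_{\S_1}(\chi^\opt)=\lambda_{\SG}(\chi^\opt)=\lambda_{\S_2}(\chi^\opt)$, hence $D_{\S_1}(\chi^\opt,q^\opt)=D_{\S_2}(\chi^\opt,q^\opt)=v_2$, giving $v_1=v_2$. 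Together with the always-valid part this yields the cycles $(1)\Rightarrow(4)\Rightarrow(2)\Rightarrow(1)$ and $(1)\Rightarrow(3)\Rightarrow(1)$, so $(1)$--$(4)$ are all equivalent.

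The monotonicity bookkeeping and the finiteness of $v_1,v_2$ are routine; the one substantive ingredient is the use of Corollary~\ref{cor:chi_comb} to confine the dual optimum to $\Lin(\Chn_\tV)$. This is the only route I see to $(3)\Rightarrow(1)$, and it is precisely where the hypothesis $\S_2=\SG$ cannot be dropped: on $\Lin(\Chn_\tV)$ the functionals $\lambda_{\S_1}$ and $\lambda_{\S_2}$ agree by Eq.~\eqref{eq:LinChn_lambda}, but in general they do not, so without such a reduction one cannot pass from optimality for $(\mathrm{D}_1)$ back to the equality $v_1=v_2$.
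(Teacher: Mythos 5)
Your proof is correct and uses essentially the same ingredients as the paper's: the monotonicity bound $D_1^\opt \le D_{\S_1}(\chi,q) \le D_{\S_2}(\chi,q)$ on $\D_2 \subseteq \D_1$, strong duality to read statement (1) in dual terms, and Corollary~\ref{cor:chi_comb} together with Eq.~\eqref{eq:LinChn_lambda} when $\S_2 = \SG$. The only difference is bookkeeping — the paper closes the cycle via $(3)\Rightarrow(4)\Rightarrow(2)\Rightarrow(1)$ whereas you prove $(1)\Rightarrow(4)$ and $(3)\Rightarrow(1)$ directly (and note $(4)\Rightarrow(2)$ holds even without $\S_2=\SG$) — which is a harmless rearrangement of the same argument.
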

\begin{proof}
 We start with some preliminary remarks.
 For each $i \in \{1,2\}$, let $D_i^\opt$ and $\D_i$ be, respectively,
 the optimal value and the feasible set of Problem~($\mathrm{D}_i$).
 From $\S_1 \subseteq \S_2$, $\lambda_{\S_1}(\chi) \le \lambda_{\S_2}(\chi)$ holds for
 any $\chi \in \Her_\tV$.
 Thus, we have
 \begin{alignat}{2}
  D_1^\opt &\le D_{\S_1}(\chi, q) &&\le D_{\S_2}(\chi, q), \quad \forall (\chi, q) \in \D_1.
  \label{eq:opt_two_le}
 \end{alignat}
 Also, $D_{\S_1}(\chi, q) = D_{\S_2}(\chi, q)$ is equivalent to
 $\lambda_{\S_1}(\chi) = \lambda_{\S_2}(\chi)$.

 We first show $(1) \Rightarrow (3)$, $(1) \Rightarrow (2)$, and $(2) \Rightarrow (1)$.

 $(1) \Rightarrow (3)$:
 Choose any optimal solution $(\chi^\opt, q^\opt)$ to Problem~($\mathrm{D}_2$).
 Since $\D_2 \subseteq \D_1$ holds from $\mC_1 \subseteq \mC_2$,
 $(\chi^\opt, q^\opt) \in \D_1$ holds.
 We also have $D_1^\opt = D_2^\opt = D_{\S_2}(\chi^\opt, q^\opt)$.
 Thus, from Eq.~\eqref{eq:opt_two_le} with $(\chi, q)$ replaced by $(\chi^\opt, q^\opt)$,
 $D_{\S_1}(\chi^\opt, q^\opt) = D_1^\opt$ must hold.
 Therefore, $(\chi^\opt, q^\opt)$ is optimal for Problem~($\mathrm{D}_1$).

 $(1) \Rightarrow (2)$:
 Let $(\chi^\opt, q^\opt)$ be any optimal solution to Problem~($\mathrm{D}_2$).
 Since Statement~(3) holds, $(\chi^\opt, q^\opt)$ is optimal for Problem~($\mathrm{D}_1$).
 $\lambda_{\S_1}(\chi^\opt) = \lambda_{\S_2}(\chi^\opt)$ obviously holds
 from $D_{\S_1}(\chi^\opt,q^\opt) = D_{\S_2}(\chi^\opt,q^\opt)$.

 $(2) \Rightarrow (1)$:
 From Eq.~\eqref{eq:opt_two_le} with $(\chi, q)$ replaced by $(\chi^\opt, q^\opt)$,
 we have $D_1^\opt = D_{\S_1}(\chi^\opt, q^\opt) = D_{\S_2}(\chi^\opt, q^\opt) \ge D_2^\opt$.
 Thus, since $D_1^\opt \le D_2^\opt$ always holds, we have $D_1^\opt = D_2^\opt$.

 We next assume $\S_2 = \SG$ and show $(3) \Rightarrow (4)$ and $(4) \Rightarrow (2)$.

 $(3) \Rightarrow (4)$:
 From Corollary~\ref{cor:chi_comb}, there exists an optimal solution
 $(\chi^\opt, q^\opt)$ to Problem~($\mathrm{D}_2$) satisfying $\chi^\opt \in \Lin(\Chn_\tV)$.
 From Statement~(3), $(\chi^\opt, q^\opt)$ is optimal for Problem~($\mathrm{D}_1$).

 $(4) \Rightarrow (2)$:
 $\lambda_{\S_1}(\chi^\opt) = \lambdaSG(\chi^\opt)$ obviously holds
 from Eq.~\eqref{eq:LinChn_lambda}.
\end{proof}

\section{Symmetry} \label{sec:symmetry}

We now focus on a process discrimination problem that
has a certain symmetry.
We show that, in such a problem, if at least one optimal solution exists, then
there exists an optimal solution having the corresponding symmetry.
This symmetric property can reduce the number of degrees of freedom
and allows us to easily obtain analytical optimal solutions.
This can also lead to computationally efficient algorithms for finding optimal solutions.

\subsection{Group action}

As a preliminary, we recall a group action.
Let $\mG$ be a group with the identity element $e$.
Assume that the order of $\mG$, denoted as $|\mG|$, is greater than one
since the case $|\mG| = 1$ is trivial.
A \termdef{group action} of $\mG$ on a set $\mT$,
$\{ g \b \Endash:\mT \to \mT \}_{g \in \mG}$, is a set of maps on $\mT$ satisfying
\begin{alignat}{2}
 (gh) \b x &= g \b (h \b x), &\quad &\forall g, h \in \mG, ~x \in \mT, \nonumber \\
 e \b x &= x, &\quad &\forall x \in \mT.
\end{alignat}
Let $\inv{g}$ be the inverse of $g$.
For each $g \in \mG$, since $\inv{g} \b (g \b x) = x$ holds for any $x \in \mT$,
$g \b \Endash$ is bijective.
In this paper, group actions on $\mI_K$ $~(K \ge 1)$ and $\Her_\tV$ are considered.

Let us first consider an action of $\mG$ on $\mI_K$,
$\{ g \b \Endash:\mI_K \to \mI_K \}_{g \in \mG}$.
A trivial example is $g \b k \coloneqq k$ $~(\forall g \in \mG, k \in \mI_K)$.
Another example is $g \b k \coloneqq g \oplus_K k$ $~(\forall g \in \mG, k \in \mI_K)$,
where $\oplus_K$ denotes addition modulo $K$
and $\mG \coloneqq \Integer_K \coloneqq \range{0}{K-1}$ is the cyclic group
with the multiplication $g h \coloneqq g \oplus_K h$ $~(\forall g,h \in \mG)$.

Let us next consider an action of $\mG$ on the real Hilbert space $\Her_\tV$,
$\left\{ g \b \Endash:\Her_\tV \to \Her_\tV \right\}_{g \in \mG}$.
We are only concerned with a linearly isometric action, i.e., each $g \b \Endash$
is linear and satisfies
\begin{alignat}{1}
 \braket{g \b x, g \b y} &= \braket{x,y}, \quad \forall g \in \mG, ~x, y \in \Her_\tV.
 \label{eq:isometry}
\end{alignat}
A typical example is an action expressed in the form
\begin{alignat}{2}
 g \b \Endash &\coloneqq \Ad_{U_g},
 \label{eq:action_uni}
\end{alignat}
where $\mG \ni g \mapsto U_g \in \Uni_\tV$ is
a projective unitary or projective anti-unitary representation%
\footnote{$\mG \ni g \mapsto U_g \in \Uni_\V$ is
called a projective unitary or projective anti-unitary representation of $\mG$
if $\Ad_{U_e} = \ident_\V$ and
$\Ad_{U_g} \c \Ad_{U_{g'}} = \Ad_{U_{gg'}}$ hold for any $g,g' \in \mG$.
In this case, $\Ad_{U_{\inv{g}}} = \Ad_{U_g^\dagger}$ holds.}
(which we will simply call a \termdef{projective representation}) of $\mG$.
Another example is an action expressed in the form
\begin{alignat}{2}
 g \b \Endash &\coloneqq \Ad_{U_{g,T} \ot U'_{g,T} \ot \cdots \ot U_{g,1} \ot U'_{g,1}},
 \label{eq:action_sep}
\end{alignat}
where, for each $t \in \range{1}{T}$,
$\mG \ni g \mapsto U_{g,t} \in \Uni_\Wt$ and
$\mG \ni g \mapsto U'_{g,t} \in \Uni_\Vt$ are projective representations of $\mG$.
For instance, the partial transposes $(\Endash)^{\T_{\Wt}}$ and $(\Endash)^{\T_{\Vt}}$
$~(t \in \range{1}{T})$ can be expressed in the form of Eq.~\eqref{eq:action_sep}.

\subsection{Symmetric discrimination problems}

\begin{define}
 Let $\mG$ be a group.
 We will call Problems~\eqref{prob:P} and \eqref{prob:D}
 \termdef{$\mG$-symmetric} if the following conditions hold:
 (a) there exist group actions of $\mG$ on $\mI_M$, $\mI_J$, and $\Her_\tV$;
 (b) the action of $\mG$ on $\Her_\tV$ is linearly isometric;
 and (c)
 \begin{alignat}{2}
  \Phi^\g &\in \mT, &\quad &\forall g \in \mG, ~\Phi \in \mT, \nonumber \\
  \Phi^\g &\in \mC, &\quad &\forall g \in \mG, ~\Phi \in \mC, \nonumber \\
  g \b \varphi &\in \S, &\quad &\forall g \in \mG, ~\varphi \in \S, \nonumber \\
  g \b a_{j,m} &= a_{g \b j, g \b m}, &\quad &\forall g \in \mG, ~j \in \mI_J, ~m \in \mI_M, \nonumber \\
  b_j &= b_{g \b j}, &\quad &\forall g \in \mG, ~j \in \mI_J
  \label{eq:sym_eval_ab}
 \end{alignat}
 and
 \begin{alignat}{2}
  g \b c_m &= c_{g \b m}, &\quad &\forall g \in \mG, ~m \in \mI_M
  \label{eq:sym_eval_c}
 \end{alignat}
 hold%
 \footnote{In this case, since the map $\Phi \mapsto \Phi^\g$ is invertible,
 $\{ \Phi^\g : \Phi \in \mC \} = \mC$ must hold.
 Also, since the map $\varphi \mapsto g \b \varphi$ is invertible,
 $\{ g \b \varphi : \varphi \in \S \} = \S$ must hold.},
 where
 \begin{alignat}{1}
  \Phi^\g &\coloneqq \{ \Phi^\g_m \coloneqq \inv{g} \b \Phi_{g \b m} \}_{m=0}^{M-1},
  \quad g \in \mG, ~\Phi \in \mC.
  \label{eq:Phig}
 \end{alignat}
\end{define}
If $\mT$ is closed, then the first line of Eq.~\eqref{eq:sym_eval_ab}
is derived from its second and third lines.

A large class of process discrimination problems having certain symmetries
can be formulated as Problem~\eqref{prob:P} with $\mG$-symmetric.
Indeed, in the case of minimum-error state discrimination,
cyclic states \cite{Bel-1975,Ban-Kur-Mom-Hir-1997},
three mirror-symmetric states \cite{And-Bar-Gil-Hun-2002},
linear codes with binary letter-states \cite{Usu-Tak-Hat-Hir-1999},
geometrically uniform (or compound geometrically uniform) states \cite{Eld-Meg-Ver-2004},
and self-symmetric states \cite{Nak-Usu-2012-self}
can be treated within this framework.
Four examples are given as follows
(other examples in the case of state discrimination can be seen
in Sec.~III of Ref.~\cite{Nak-Usu-2013-group}):

\begin{ex}[Optimal inconclusive discrimination] \label{ex:sym_inc}
 We consider Problem~\eqref{prob:PPiIncR}, i.e.,
 the problem of discriminating quantum combs $\{ \hcE_r \}_{r=0}^{R-1}$
 under the inconclusive strategy in which testers are restricted to belong to a subset $\mTh$
 of $\mTGh$.
 Let $p_r$ be the prior probability of the comb $\hcE_r$.
 Since this problem is rewritten as Problem~\eqref{prob:P} with Eq.~\eqref{eq:inc_abc},
 it follows that for some group $\mG$, this problem and its dual problem [i.e., Problem~\eqref{prob:DIncR}]
 are $\mG$-symmetric if and only if
 \begin{alignat}{3}
  \Phi^\g &\in \mC, &\quad g \b \varphi &\in \S, &\quad \varpi_g(R) &= R, \nonumber \\
  p_r &= p_{\varpi_g(r)}, &\quad g \b \cE_r &= \cE_{\varpi_g(r)}
 \end{alignat}
 holds for any $g \in \mG$, $\Phi \in \mC$, $\varphi \in \S$, and $r \in \mI_R$,
 where the action $\{ g \b \Endash \}_{g \in \mG}$ of $\mG$ on $\mI_M$
 ($M \coloneqq R + 1$) is denoted by $\{ \varpi_g(\Endash) \}_{g \in \mG}$.
 Note that the action of $\mG$ on $\mI_J = \mI_1$ is uniquely determined by $g \b 0 = 0$.
 Recall that Problems~\eqref{prob:PPiInc} and \eqref{prob:DInc} are
 the particular case of $\mC = \mCG$ and $\S = \SG$.
\end{ex}

\begin{ex}
 Let us consider Problem~\eqref{prob:PPiInc} with
 $\hcE_r \coloneqq \hLambda_r^{\ast T}$ and
 $\hLambda_0,\dots,\hLambda_{R-1} \in \Chn(\V,\W)$.
 Assume that the prior probabilities are equal and that
 \begin{alignat}{1}
  \hLambda_r &= \Ad_{U^r} \c \hLambda_0, \quad \forall r \in \mI_R
  \label{eq:PPiInc_sym_cycl}
 \end{alignat}
 holds, where $U$ is a unitary operator on $\W$ satisfying $U^R = \I_\W$
 and $U^r \neq \I_\W$ for each $1 \le r < R$.
 Let $\Integer_R \coloneqq \range{0}{R-1}$ be the cyclic group.
 We consider the actions of $\Integer_R$ on $\mI_M$ ($M \coloneqq R + 1$)
 and $\Her_\tV$ given, respectively, by
 \begin{alignat}{1}
  g \b m &\coloneqq
  \begin{dcases}
   g \oplus_R m, & m < R, \\
   R, & m = R, \\
  \end{dcases} \nonumber \\
  g \b \Endash &\coloneqq \Ad_{U^g \ot \I_\V \ot U^g \ot \I_\V \ot \cdots \ot U^g \ot \I_\V}
 \end{alignat}
 for each $g \in \Integer_R$; then,
 $g \b \cE_r = \cE_{g \b r}$ holds for any $g \in \Integer_R$ and $r \in \mI_R$.
 Thus, one can easily verify from Example~\ref{ex:sym_inc} that this problem is $\Integer_R$-symmetric.
\end{ex}

\begin{ex} \label{ex:sym_self}
 Let us consider Problem~\eqref{prob:PPiInc} with
 $\hcE_r \coloneqq \hLambda_r^{\ast T}$ and
 $\hLambda_0,\dots,\hLambda_{R-1} \in \Chn(\V,\W)$.
 Let $\mH$ be a group and assume that
 \begin{alignat}{1}
  \hLambda_r &= \Ad_{U_h} \c \hLambda_r \c \Ad_{\tU_h},
  \quad \forall h \in \mH, ~r \in \mI_R
  \label{eq:PPiInc_sym_self}
 \end{alignat}
 [or, equivalently, $\Lambda_r = \Ad_{U_h \ot \tU_h^\T}(\Lambda_r)$]
 holds for some projective representations
 $\mH \ni h \mapsto U_h \in \Uni_\W$ and $\mH \ni h \mapsto \tU_h \in \Uni_\V$.
 The prior probabilities are arbitrarily chosen.
 Note that a channel $\hLambda_r$ satisfying Eq.~\eqref{eq:PPiInc_sym_self}
 is sometimes called covariant.
 Let us consider the $T$-fold direct product of $\mH$,
 \begin{alignat}{2}
  \mH^T &\coloneqq \{ (h_1,\dots,h_T) : h_1,\dots,h_T \in \mH \},
  \label{eq:sym_self_HT}
 \end{alignat}
 and its group actions on $\mI_M$ ($M \coloneqq R + 1$) and $\Her_\tV$ defined as
 \begin{alignat}{2}
  g \b m &\coloneqq m, &\quad &g \in \mH^T, ~m \in \mI_M, \nonumber \\
  (h_1,\dots,h_T) \b \Endash &\coloneqq \Ad_{U_{h_T} \ot \tU_{h_T}^\T \ot \cdots \ot U_{h_1} \ot \tU_{h_1}^\T},
  &\quad &(h_1,\dots,h_T) \in \mH^T;
  \label{eq:sym_self_HT_action}
 \end{alignat}
 then, $g \b \cE_r = \cE_{g \b r} ~(= \cE_r)$ holds for any $g \in \mH^T$ and $r \in \mI_R$.
 Thus, it is easily seen from Example~\ref{ex:sym_inc} that this problem is $\mH^T$-symmetric
 if $\Phi^\g$ belongs to $\mCG$ for any $\Phi \in \mCG$.
\end{ex}

\begin{ex} \label{ex:sym_permute}
 As an example of a problem with restricted testers,
 let us consider Problem~\eqref{prob:PPiIncR} with $R \coloneqq T!$,
 $\V_1 = \cdots = \VT \eqqcolon \V$,
 $\W_1 = \cdots = \WT \eqqcolon \W$,
 $\hcE_{\gamma_r} \coloneqq \hLambda_{\gamma_r(T)} \ast \cdots \ast \hLambda_{\gamma_r(1)}$,
 and $\hLambda_1,\dots,\hLambda_T \in \Chn(\V,\W)$,
 where $\gamma_r$ is the permutation on $\range{1}{T}$ determined by $r \in \mI_R$.
 For simplicity, we focus on the case $T = 2$,
 i.e., the problem of discriminating $\hcE_0 \coloneqq \hLambda_1 \ast \hLambda_0$
 and $\hcE_1 \coloneqq \hLambda_0 \ast \hLambda_1$.
 Assume that the prior probabilities are equal
 and that testers are restricted to nonadaptive ones.
 In this case, Eq.~\eqref{eq:T} with
 \begin{alignat}{1}
  \mC &\coloneqq \mCG, \nonumber \\
  \S &\coloneqq \left\{ (\ident_{\W_2} \ot \cross_{\W_1,\V_2} \ot \ident_{\V_1})
  (\I_{\W_2 \ot \W_1} \ot \rho) : \rho \in \Den_{\V_2 \ot \V_1} \right\}
  \label{eq:C_nonadaptive}
 \end{alignat}
 holds, where $\cross_{\V,\W}$ is the process that swaps two systems $\V$ and $\W$.
 Let $\mG \coloneqq \{ e, \tg \}$, where
 $\left\{ g \b \Endash:\Her_\tV \to \Her_\tV \right\}_{g \in \mG}$
 is the linear action characterized by
 $\tg \b \Endash \coloneqq \cross_{\W_2 \ot \V_2,\W_1 \ot \V_1}$.
 Note that this action can be expressed in the form of Eq.~\eqref{eq:action_uni}.
 Since $\cE_0 = \Lambda_1 \ot \Lambda_0$ and $\cE_1 = \Lambda_0 \ot \Lambda_1$ holds,
 $g \b \cE_r = \cE_{g \b r}$ holds for each $g \in \mG$ and $r \in \mI_2$.
 Thus, in the case of inconclusive strategy, one can easily verify from Example~\ref{ex:sym_inc}
 that the problem is $\mG$-symmetric.
 The same discussion can be applied to the case $T > 2$.
\end{ex}

\subsection{Symmetric solutions}

Let us fix a group $\mG$.
For any $\Phi \in \mC$, let
\begin{alignat}{1}
 \Phi^\d &\coloneqq \left\{ \Phi^\d_m \coloneqq \frac{1}{|\mG|}
 \sumg \Phi^\g_m \right\}_{m=0}^{M-1},
 \label{eq:Phib}
\end{alignat}
where $\Phi^\g$ is defined by Eq.~\eqref{eq:Phig}.
It follows that $\Phi^\d$ has the symmetry property
\begin{alignat}{1}
 g \b \Phi^\d_m &= \Phi^\d_{g \b m}, \quad \forall g \in \mG, ~m \in \mI_M,
 \label{eq:sym_Phi}
\end{alignat}
which follows from
\begin{alignat}{1}
 g \b \Phi^\d_m &= \frac{1}{|\mG|} \sum_{h \in \mG} g \b \Phi^{(h)}_m
 = \frac{1}{|\mG|} \sum_{h' \in \mG} \inv{h'} \b \Phi_{(h'g) \b m}
 = \Phi^\d_{g \b m},
\end{alignat}
where $h' \coloneqq h \inv{g}$.
Similarly, for any $(\chi, q) \in \Her_\tV \times \Real_+^J$, let
\begin{alignat}{2}
 \chi^\d &\coloneqq \frac{1}{|\mG|} \sumg \chi^\g, &\quad
 q^\d &\coloneqq \left\{ q^\d_j \coloneqq \frac{1}{|\mG|} \sumg q^\g_j \right\}_{j=0}^{J-1},
 \nonumber \\
 \chi^\g &\coloneqq g \b \chi, &\quad
 q^\g &\coloneqq \{ q_j^\g \coloneqq q_{\inv{g} \b j} \}_{j=0}^{J-1}.
 \label{eq:chib}
\end{alignat}
From
\begin{alignat}{1}
 g \b \chi^\d &= \frac{1}{|\mG|} \sum_{h \in \mG} g \b \chi^{(h)}
 = \frac{1}{|\mG|} \sum_{h \in \mG} (gh) \b \chi = \chi^\d, \nonumber \\
 q^\d_{g \b j} &= \frac{1}{|\mG|} \sum_{h \in \mG} q^{(h)}_{g \b j}
 = \frac{1}{|\mG|} \sum_{h \in \mG} q_{(\inv{h}g) \b j} = q^\d_j,
\end{alignat}
$(\chi^\d,q^\d)$ has the symmetry property
\begin{alignat}{2}
 g \b \chi^\d &= \chi^\d, &\quad &\forall g \in \mG, \nonumber \\
 q^\d_j &= q^\d_{g \b j}, &\quad &\forall g \in \mG, ~j \in \mI_J.
 \label{eq:sym_chi}
\end{alignat}

\begin{lemma} \label{lemma:sym_kappa}
 If $\mT$, $\mC$, $\S$, $\{ a_{j,m} \}_{(j,m)=(0,0)}^{(J-1,M-1)} \subset \Her_\tV$, and
 $\{ b_j \}_{j=0}^{J-1} \in \Real^J$ satisfy Eq.~\eqref{eq:sym_eval_ab}, then
 $\Phi^\g, \Phi^\d \in \P$ holds for any $\Phi \in \P$ and $g \in \mG$.
\end{lemma}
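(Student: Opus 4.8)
The plan is to check directly the two defining conditions of $\P$ in Eq.~\eqref{eq:TestercC} — membership in $\mT$ and the constraints $\eta_j(\Phi) \le 0$ — first for $\Phi^\g$ and then for $\Phi^\d$. Since $\mT \subseteq \ol{\mT} \subseteq \mC$ by Eq.~\eqref{eq:T}, the tuple $\Phi^\g$ of Eq.~\eqref{eq:Phig} is well-defined for every $\Phi \in \mT$, and $\Phi^\g \in \mT$ is exactly the content of the first line of Eq.~\eqref{eq:sym_eval_ab}.

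The main computation is to verify $\eta_j(\Phi^\g) = \eta_{g \b j}(\Phi)$. Starting from $\eta_j(\Phi^\g) = \summ \braket{\Phi^\g_m, a_{j,m}} - b_j$ and inserting $\Phi^\g_m = \inv{g} \b \Phi_{g \b m}$, I would use the isometry Eq.~\eqref{eq:isometry} in the form $\braket{\inv{g} \b x, y} = \braket{x, g \b y}$ (valid because $\inv{g} \b \Endash$ inverts $g \b \Endash$) to get $\braket{\Phi^\g_m, a_{j,m}} = \braket{\Phi_{g \b m}, g \b a_{j,m}}$, then apply $g \b a_{j,m} = a_{g \b j, g \b m}$ from Eq.~\eqref{eq:sym_eval_ab}. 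Re-indexing the sum over $m$ by the bijection $m \mapsto g \b m$ on $\mI_M$ and using $b_j = b_{g \b j}$ then gives $\eta_j(\Phi^\g) = \eta_{g \b j}(\Phi) \le 0$, since $\Phi \in \P$. Hence $\Phi^\g \in \P$ for all $g \in \mG$.

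For $\Phi^\d$ I would note that $\Phi^\d = \frac{1}{|\mG|} \sumg \Phi^\g$ is an equally weighted convex combination of the tuples $\Phi^\g \in \P$; convexity of $\mT$ gives $\Phi^\d \in \mT$, and since each $\eta_j$ is an affine functional of $\Phi$ (it respects convex combinations because $\summ \braket{\Phi_m, a_{j,m}}$ is linear in $\Phi$), $\eta_j(\Phi^\d) = \frac{1}{|\mG|} \sumg \eta_j(\Phi^\g) \le 0$. Thus $\Phi^\d \in \P$.

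No serious obstacle is expected; the only points requiring care are the direction of the isometry identity and keeping the re-indexing straight, together with remembering that the actions on $\mI_M$ and $\mI_J$ here are the prescribed (possibly nontrivial) ones rather than the trivial action.
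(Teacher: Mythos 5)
Your proposal is correct and follows essentially the same route as the paper: establish $\Phi^\g \in \mT$ from the first line of Eq.~\eqref{eq:sym_eval_ab}, move $\inv{g}$ across the inner product via the isometry, apply $g \b a_{j,m} = a_{g \b j, g \b m}$, re-index the sum over $m$ using bijectivity of $g \b \Endash$ on $\mI_M$, and use $b_{g \b j} = b_j$ to conclude $\Phi^\g \in \P$. The paper then obtains $\Phi^\d \in \P$ simply from convexity of $\P$, which is exactly what your explicit convex-combination argument (convexity of $\mT$ plus affineness of each $\eta_j$) spells out.
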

\begin{proof}
 Arbitrarily choose $\Phi \in \P$.
 It follows from $\Phi \in \mT$ and the first line of Eq.~\eqref{eq:sym_eval_ab} that
 $\Phi^\g, \Phi^\d \in \mT$ holds.
 We have that for any $g \in \mG$ and $j \in \mI_J$,
 \begin{alignat}{1}
  \summ \braket{\Phi^\g_m, a_{j,m}}
  &= \summ \braket{\inv{g} \b \Phi_{g \b m}, a_{j,m}}
  = \summ \braket{\Phi_{g \b m}, g \b a_{j,m}} \nonumber \\
  &= \summ \braket{\Phi_{g \b m}, a_{g \b j,g \b m}}
  \le b_{g \b j} = b_j,
  \label{eq:sym_Pi_Phig}
 \end{alignat}
 where the inequality follows from the map $g \b \Endash : \mI_M \to \mI_M$
 being bijective.
 Thus, we have $\Phi^\g \in \P$.
 Since $\P$ is convex, we have $\Phi^\d \in \P$.
\end{proof}

\begin{thm} \label{thm:sym_Phi}
 Let $\mG$ be a group.
 Assume that Problem~\eqref{prob:P} is $\mG$-symmetric; then
 $\Phi^\d \in \P$ and $P(\Phi^\d) = P(\Phi)$ hold for any $\Phi \in \P$.
\end{thm}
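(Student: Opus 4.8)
The plan is to handle the two assertions separately, dealing with the membership $\Phi^\d\in\P$ first and then the equality of objective values. For membership, I would observe that $\mG$-symmetry of Problem~\eqref{prob:P} entails, in particular, that $\mT$, $\mC$, $\S$, $\{a_{j,m}\}_{(j,m)}$, and $\{b_j\}_j$ satisfy Eq.~\eqref{eq:sym_eval_ab}. Hence Lemma~\ref{lemma:sym_kappa} applies verbatim and yields $\Phi^\g,\Phi^\d\in\P$ for every $\Phi\in\P$ and $g\in\mG$; no further argument is needed for this part.

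For the value, the key reduction is to prove $P(\Phi^\g)=P(\Phi)$ for each fixed $g\in\mG$. Granting this, the claim for $\Phi^\d$ follows at once: $P(\Phi)=\summ\braket{\Phi_m,c_m}$ is linear in $\Phi$, and $\Phi^\d$ is by definition Eq.~\eqref{eq:Phib} the average of the $\Phi^\g$ over $g\in\mG$, so $P(\Phi^\d)=\frac{1}{|\mG|}\sumg P(\Phi^\g)=\frac{1}{|\mG|}\sumg P(\Phi)=P(\Phi)$.

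To establish $P(\Phi^\g)=P(\Phi)$, I would expand using the definition Eq.~\eqref{eq:Phig}, $P(\Phi^\g)=\summ\braket{\Phi^\g_m,c_m}=\summ\braket{\inv{g}\b\Phi_{g\b m},c_m}$, then apply the linearly isometric assumption Eq.~\eqref{eq:isometry} in the equivalent form $\braket{\inv{g}\b x,y}=\braket{x,g\b y}$ to move the group action onto the second argument, obtaining $\summ\braket{\Phi_{g\b m},g\b c_m}$. Next I invoke the covariance condition Eq.~\eqref{eq:sym_eval_c}, $g\b c_m=c_{g\b m}$, which turns the summand into $\braket{\Phi_{g\b m},c_{g\b m}}$. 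Finally, since the action $g\b\Endash:\mI_M\to\mI_M$ is a bijection, re-indexing the finite sum by $m\mapsto g\b m$ gives $P(\Phi^\g)=\sum_{m'=0}^{M-1}\braket{\Phi_{m'},c_{m'}}=P(\Phi)$.

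I do not expect a genuine obstacle here. The only points requiring care are bookkeeping: getting the direction right when the isometry transfers the action between the two slots of $\braket{\cdot,\cdot}$, and justifying the re-indexing of the sum over $\mI_M$ via bijectivity of $g\b\Endash$. Both are immediate once Eqs.~\eqref{eq:isometry} and \eqref{eq:sym_eval_c} are used, and the nontrivial membership statement $\Phi^\d\in\P$ has already been supplied by Lemma~\ref{lemma:sym_kappa}, so the proof is essentially a short computation plus a citation.
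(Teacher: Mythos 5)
Your proposal is correct and follows essentially the same route as the paper's proof: membership $\Phi^\d \in \P$ via Lemma~\ref{lemma:sym_kappa}, the computation $P(\Phi^\g)=P(\Phi)$ using the isometry to transfer the action, the covariance $g \b c_m = c_{g \b m}$, and re-indexing by bijectivity, followed by averaging over $\mG$ using linearity of $P$. No gaps.
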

\begin{proof}
 $\Phi^\d \in \P$ holds from Lemma~\ref{lemma:sym_kappa}.
 We have that for any $g \in \mG$,
 \begin{alignat}{1}
  P[\Phi^\g] &= \summ \braket{\Phi^\g_m, c_m}
  = \summ \braket{\inv{g} \b \Phi_{g \b m}, c_m} \nonumber \\
  &= \summ \braket{\Phi_{g \b m}, g \b c_m}
  = \summ \braket{\Phi_{g \b m}, c_{g \b m}} = P(\Phi).
 \end{alignat}
 Thus, we have
 \begin{alignat}{2}
  P(\Phi^\d) &= \frac{1}{|\mG|} \sumg P[\Phi^\g] &&= P(\Phi).
 \end{alignat}
\end{proof}
Considering the case of $\Phi$ being optimal for Problem~\eqref{prob:P},
we immediately obtain the following corollary as a special case of Theorem~\ref{thm:sym_Phi}
(proof omitted).
\begin{cor} \label{cor:sym_Phi}
 Let $\mG$ be a group.
 Assume that Problem~\eqref{prob:P} is $\mG$-symmetric.
 Then, for any optimal solution, $\Phi$, to Problem~\eqref{prob:P},
 $\Phi^\d$ is also optimal for Problem~\eqref{prob:P}.
\end{cor}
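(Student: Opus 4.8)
The plan is to obtain the statement essentially for free from Theorem~\ref{thm:sym_Phi}, since the corollary is just that theorem evaluated at an optimizer. Concretely, I would start by letting $\Phi$ be an arbitrary optimal solution to Problem~\eqref{prob:P}. Optimality in particular means $\Phi \in \P$, so $\Phi$ satisfies the hypothesis of Theorem~\ref{thm:sym_Phi}, and that theorem immediately supplies two facts: first, $\Phi^\d \in \P$, i.e.\ the symmetrized tester is feasible; and second, $P(\Phi^\d) = P(\Phi)$, i.e.\ symmetrization does not change the value of the objective.

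The second step is to combine these with the meaning of optimality. Writing the optimal value of Problem~\eqref{prob:P} as $\sup_{\Phi' \in \P} P(\Phi')$, optimality of $\Phi$ says $P(\Phi) = \sup_{\Phi' \in \P} P(\Phi')$, and hence $P(\Phi^\d) = \sup_{\Phi' \in \P} P(\Phi')$ as well. Since $\Phi^\d$ is feasible and attains this supremum, $\Phi^\d$ is optimal for Problem~\eqref{prob:P}, which is exactly the claim. If desired, one may additionally record that $\Phi^\d$ carries the symmetry property in Eq.~\eqref{eq:sym_Phi} (established just before Lemma~\ref{lemma:sym_kappa}), so that the corollary really does produce an optimal solution with the advertised symmetry.

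I do not expect any genuine obstacle here: all the substantive work—the $\mG$-invariance of the feasible set $\P$ via Lemma~\ref{lemma:sym_kappa}, and the invariance $P(\Phi^\g) = P(\Phi)$ coming from the linear isometry of $g \b \Endash$ together with Eq.~\eqref{eq:sym_eval_c}—has already been done inside the proof of Theorem~\ref{thm:sym_Phi}. The only point that needs care is to use \emph{both} conclusions of that theorem simultaneously (feasibility of $\Phi^\d$ and preservation of the objective value), rather than just one; once both are in hand, the passage from ``$\Phi$ optimal'' to ``$\Phi^\d$ optimal'' is immediate, which is presumably why the paper omits the proof.
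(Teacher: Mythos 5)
Your proposal is correct and is exactly the argument the paper intends: the corollary is obtained by specializing Theorem~\ref{thm:sym_Phi} to an optimal $\Phi$, using both conclusions ($\Phi^\d \in \P$ and $P(\Phi^\d) = P(\Phi)$) to conclude that $\Phi^\d$ attains the optimal value, which is why the paper omits the proof.
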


In the case of Problem~\eqref{prob:P} being $\mG$-symmetric,
this corollary guarantees that if at least one optimal solution exists, then
there also exists an optimal solution with the symmetry property of Eq.~\eqref{eq:sym_Phi}.
This corollary also implies that the optimal value remains unchanged even if
we impose the additional constraint of Eq.~\eqref{eq:sym_Phi}
(with $\Phi^\d$ replaced by $\Phi$).
Problem~\eqref{prob:P} with this constraint is still convex.

\begin{thm} \label{thm:sym_chi}
 Let $\mG$ be a group.
 Assume that Problem~\eqref{prob:D} is $\mG$-symmetric; then,
 $(\chi^\d, q^\d) \in \D$ and $D_\S(\chi^\d, q^\d) \le D_\S(\chi, q)$ hold
 for any $(\chi, q) \in \D$.
\end{thm}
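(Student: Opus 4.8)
The plan is to mirror, on the dual side, the primal argument of Theorem~\ref{thm:sym_Phi} and Lemma~\ref{lemma:sym_kappa}. Concretely, for each $g \in \mG$ I would show that the twisting map $(\chi, q) \mapsto (\chi^\g, q^\g)$ of Eq.~\eqref{eq:chib} sends $\D$ into itself and leaves $D_\S$ invariant, and then deduce the statement for the uniform average $(\chi^\d, q^\d) = \frac{1}{|\mG|} \sumg (\chi^\g, q^\g)$ by convexity. The computational backbone is the identity $g \b z_m(q) = z_{g \b m}(q^\g)$, valid for all $g \in \mG$, $m \in \mI_M$, $q \in \Real_+^J$; it follows at once by expanding $z_m(q) = c_m - \sumj q_j a_{j,m}$ [Eq.~\eqref{eq:z}], using linearity of the action together with $g \b c_m = c_{g \b m}$ [Eq.~\eqref{eq:sym_eval_c}] and $g \b a_{j,m} = a_{g \b j, g \b m}$ [Eq.~\eqref{eq:sym_eval_ab}], and relabeling the summation index $j$ by the bijection $j \mapsto g \b j$ of $\mI_J$.

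Next I would check $(\chi^\g, q^\g) \in \D$. Since $q^\g$ merely permutes the entries of $q$, it lies in $\Real_+^J$. For the cone condition it suffices, by the definition of the dual cone $\mC^*$, to show $\summ \braket{\Phi_m, \chi^\g - z_m(q^\g)} \ge 0$ for every $\Phi \in \mC$. Fixing such a $\Phi$ and using the identity above to replace $z_m(q^\g)$ by $g \b z_{\inv g \b m}(q)$, the linear isometry $\braket{g \b x, y} = \braket{x, \inv g \b y}$ implied by Eq.~\eqref{eq:isometry}, and a relabeling of the summation index, one arrives at $\summ \braket{\Phi_m, \chi^\g - z_m(q^\g)} = \summ \braket{\Phi^\g_m, \chi - z_m(q)}$, with $\Phi^\g_m = \inv g \b \Phi_{g \b m}$ as in Eq.~\eqref{eq:Phig}. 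The right-hand side is nonnegative because $\Phi^\g \in \mC$ (by $\mG$-symmetry of $\mC$) and $(\chi, q) \in \D$. Since $\D$ is convex --- it is $\Her_\tV \times \Real_+^J$ intersected with the preimage of the convex cone $\mC^*$ under the affine map $(\chi, q) \mapsto \{\chi - z_m(q)\}_{m=0}^{M-1}$ --- and $(\chi^\d, q^\d)$ is the uniform average of the $(\chi^\g, q^\g)$, we conclude $(\chi^\d, q^\d) \in \D$.

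It remains to handle the objective. I would first show $D_\S(\chi^\g, q^\g) = D_\S(\chi, q)$ for every $g$: the term $\sumj q^\g_j b_j$ equals $\sumj q_j b_j$ after relabeling and using $b_j = b_{g \b j}$ [Eq.~\eqref{eq:sym_eval_ab}], while $\lambdaS(\chi^\g) = \sup_{\varphi \in \S} \braket{\varphi, g \b \chi} = \sup_{\varphi \in \S} \braket{\inv g \b \varphi, \chi} = \lambdaS(\chi)$ because $\varphi \mapsto \inv g \b \varphi$ is a bijection of $\S$ onto itself (a consequence of $g \b \S \subseteq \S$ and invertibility of the action). Finally, $D_\S$ is convex since $\lambdaS$ is convex and $(\chi, q) \mapsto \sumj q_j b_j$ is linear, so $D_\S(\chi^\d, q^\d) \le \frac{1}{|\mG|} \sumg D_\S(\chi^\g, q^\g) = D_\S(\chi, q)$, which is exactly the asserted inequality.

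The only genuinely delicate point is the bookkeeping in the step $\summ \braket{\Phi_m, \chi^\g - z_m(q^\g)} = \summ \braket{\Phi^\g_m, \chi - z_m(q)}$: one applies the group action simultaneously to operator arguments and to the outcome and constraint labels, and must keep careful track of which occurrences of $g$ turn into $\inv g$ when the isometry of the action and the definition $\Phi^\g_m = \inv g \b \Phi_{g \b m}$ are invoked. Once the identity $g \b z_m(q) = z_{g \b m}(q^\g)$ is in hand, everything else is a direct transcription of the primal argument in Theorem~\ref{thm:sym_Phi}.
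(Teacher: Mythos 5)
Your proof is correct and follows essentially the same route as the paper: the identity $g \b z_m(q) = z_{g \b m}(q^\g)$, the isometry-plus-relabeling step that reduces the cone condition to $\summ \braket{\Phi^\g_m, \chi - z_m(q)} \ge 0$ with $\Phi^\g \in \mC$, and the combination of $\lambdaS(\chi^\g) = \lambdaS(\chi)$, $b_j = b_{g \b j}$, and convexity of $\lambdaS$ for the objective. The only difference is organizational: you verify feasibility and objective-invariance of each twisted pair $(\chi^\g, q^\g)$ and then average via convexity of $\D$ and $D_\S$, whereas the paper carries out the same computations directly on the averaged pair $(\chi^\d, q^\d)$.
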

\begin{proof}
 We have that for any $m \in \mI_M$,
 \begin{alignat}{1}
  z_m(q^\d) &= c_m - \sumj q_j^\d a_{j,m}
  = \frac{1}{|\mG|} \sumg \left[ c_m - \sumj q_j^\g a_{j,m} \right] \nonumber \\
  &= \frac{1}{|\mG|} \sumg \left[ c_m - \sum_{j'=0}^{J-1} q_{j'} a_{g \b j',m} \right] \nonumber \\
  &= \frac{1}{|\mG|} \sumg g \b \left[ c_{\inv{g} \b m} - \sum_{j'=0}^{J-1} q_{j'} a_{j',\inv{g} \b m}
  \right] \nonumber \\
  &= \frac{1}{|\mG|} \sumg g \b z_{\inv{g} \b m}(q),
 \end{alignat}
 where $j' \coloneqq \inv{g} \b j$.
 This yields
 \begin{alignat}{1}
  \summ \braket{\Phi_m, \chi^\d - z_m(q^\d)}
  &= \frac{1}{|\mG|} \summ \sumg \braket{\Phi_m, g \b [\chi - z_{\inv{g} \b m}(q)]} \nonumber \\
  &= \frac{1}{|\mG|} \sum_{m'=0}^{M-1} \sumg \braket{\Phi^\g_{m'}, \chi - z_{m'}(q)} \ge 0
  \label{eq:sym_chid}
 \end{alignat}
 for any $\Phi \in \mC$, where $m' \coloneqq \inv{g} \b m$.
 The inequality follows from $\Phi^\g \in \mC$
 and $\{ \chi - z_{m'}(q) \}_{m'=0}^{M-1} \in \mC^*$.
 Therefore, $\{ \chi^\d - z_m(q^\d) \}_{m=0}^{M-1} \in \mC^*$, i.e.,
 $(\chi^\d,q^\d) \in \D$ holds.
 Moreover, we have
 \begin{alignat}{1}
  D_\S(\chi^\d, q^\d) &= \lambdaS\left[ \frac{1}{|\mG|} \sumg \chi^\g \right]
  + \sumj \frac{1}{|\mG|} \sumg q^\g_j b_j \nonumber \\
  &\le \frac{1}{|\mG|} \sumg \left[ \lambdaS[\chi^\g] + \sumj q^\g_j b_j \right]
  \nonumber \\
  &= \frac{1}{|\mG|} \sumg \left[ \lambdaS(\chi) + \sumj q_{\inv{g} \b j} b_{\inv{g} \b j} \right]
  \nonumber \\
  &= \frac{1}{|\mG|} \sumg D_\S(\chi, q) = D_\S(\chi, q),
 \end{alignat}
 where the second line follows since $\lambdaS$ is convex.
 The third line follows from
 \begin{alignat}{1}
  \lambdaS[\chi^\g] &= \sup_{\varphi \in \S} \braket{\varphi, g \b \chi}
  = \sup_{\varphi \in \S} \braket{\inv{g} \b \varphi, \chi}
  = \lambdaS(\chi).
 \end{alignat}
\end{proof}
We immediately obtain the following corollary as a special case of Theorem~\ref{thm:sym_chi}
(proof omitted).
\begin{cor} \label{cor:sym_chi}
 Let $\mG$ be a group.
 Assume that Problem~\eqref{prob:D} is $\mG$-symmetric.
 Then, for any optimal solution, $(\chi, q)$, to Problem~\eqref{prob:D},
 $(\chi^\d, q^\d)$ is also optimal for Problem~\eqref{prob:D}.
\end{cor}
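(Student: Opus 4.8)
The plan is to obtain Corollary~\ref{cor:sym_chi} as an immediate specialization of Theorem~\ref{thm:sym_chi} to the case of an optimal dual point. First I would fix an optimal solution $(\chi, q)$ to Problem~\eqref{prob:D} and write $D^\opt$ for the optimal value of that problem, so that $(\chi, q) \in \D$ and $D_\S(\chi, q) = D^\opt$. Then I would apply Theorem~\ref{thm:sym_chi} at the feasible point $(\chi, q)$: it delivers, in one stroke, both that the symmetrized pair $(\chi^\d, q^\d)$ of Eq.~\eqref{eq:chib} again belongs to $\D$ and that $D_\S(\chi^\d, q^\d) \le D_\S(\chi, q)$.

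The second and last step is to combine these two facts. The displayed inequality together with $D_\S(\chi, q) = D^\opt$ gives $D_\S(\chi^\d, q^\d) \le D^\opt$; on the other hand, feasibility of $(\chi^\d, q^\d)$ forces $D_\S(\chi^\d, q^\d) \ge D^\opt$ by the definition of the optimal value. Hence $D_\S(\chi^\d, q^\d) = D^\opt$, i.e.\ $(\chi^\d, q^\d)$ is optimal for Problem~\eqref{prob:D}, which is the assertion. As in the situation of Corollary~\ref{cor:sym_Phi}, the resulting optimal solution additionally enjoys the symmetry property of Eq.~\eqref{eq:sym_chi}, so one may impose that constraint on the dual problem without changing its optimal value.

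There is essentially no obstacle here: all the genuine content is already packaged inside Theorem~\ref{thm:sym_chi}, whose proof handles the only delicate points, namely the membership $(\chi^\d, q^\d) \in \D$ (which uses $\Phi^\g \in \mC$ and $\{\chi - z_m(q)\}_{m=0}^{M-1} \in \mC^*$) and the invariance and convexity of $\lambdaS$. The sole routine verification left for the corollary is that $(\chi^\d, q^\d)$ lies in the domain $\Her_\tV \times \Real_+^J$ on which $\D$ is defined; this is immediate from Eq.~\eqref{eq:chib}, since each $q^\d_j$ is an average of the nonnegative numbers $q_{\inv{g} \b j}$. This is exactly why the text records the proof as omitted.
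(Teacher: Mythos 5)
Your proposal is correct and matches the paper's intent: the paper omits the proof precisely because Corollary~\ref{cor:sym_chi} follows immediately from Theorem~\ref{thm:sym_chi} in exactly the way you describe, combining $(\chi^\d,q^\d)\in\D$ and $D_\S(\chi^\d,q^\d)\le D_\S(\chi,q)$ with the optimality of $(\chi,q)$ to conclude $D_\S(\chi^\d,q^\d)=D_\S(\chi,q)$. No gaps.
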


In the case of Problem~\eqref{prob:D} being $\mG$-symmetric,
this corollary says that there exists an optimal solution with the symmetry property
of Eq.~\eqref{eq:sym_chi} whenever an optimal solution exists.
Also, the optimal value does not change even if we impose the additional constraint
of Eq.~\eqref{eq:sym_chi} [with $(\chi^\d,q^\d)$ replaced by $(\chi,q)$].
Problem~\eqref{prob:D} with this constraint is still convex.

\subsection{Sufficient conditions that a tester with maximally entangled pure states can be optimal}
\label{subsec:symmetry_entangled}

We will call a tester expressed as in Fig.~\ref{fig:tester_Bell}
a \termdef{tester with maximally entangled pure states},
where $\tPsi_1,\dots,\tPsi_T$ are maximally entangled pure states
and $\hPi \coloneqq \{ \hPi_m \}_{m=0}^{M-1}$ is a measurement.
Such a tester $\hPhi$ is expressed by Eq.~\eqref{eq:hPhi}
with $\hsigma_t \coloneqq \tPsi_t \ot \ident_{\W_{t-1} \ot \V_{t-1} \ot \cdots \ot \W_1 \ot \V_1}$
$~(t \in \range{1}{T})$.
We may assume, without loss of generality, that each $\tPsi_t$ is
the generalized Bell state $\kket{\I_\Vt}\bbra{\I_\Vt} / N_\Vt \in \Den_{\Vt \ot \Vt}$.
In this case, we have $\Phi_m = \Pi_m / \Pro{t=1}{T} N_\Vt$.
It is easily seen that $\Phi \in \mT$ is a tester with maximally entangled pure states
if and only if $\summ \Phi_m$ is in the unit set
\begin{alignat}{1}
 \S_\Psi &\coloneqq \left\{ \I_\tV / \Pro{t=1}{T} N_\Vt \right\}.
 \label{eq:SPsi}
\end{alignat}
$\S_\Psi$ is obviously a closed convex subset of $\SG$.
In some $\mG$-symmetric problems, we can derive sufficient conditions that
the optimal value of Problem~\eqref{prob:P} with $\S = \SG$ remain unchanged
if $\S$ is replaced by $\S_\Psi$.
\begin{figure}[bt]
 \centering
 \InsertPDF{1.0}{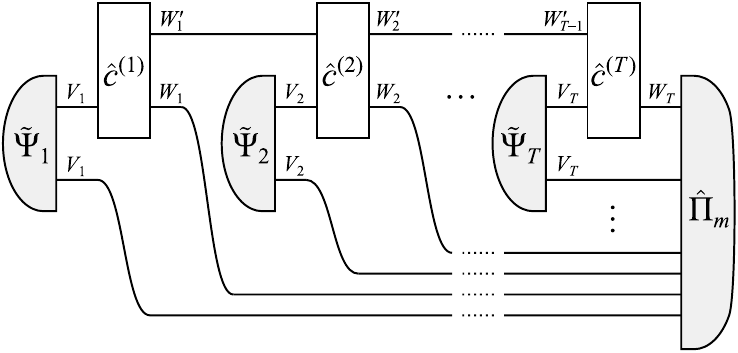}
 \caption{Tester with maximally entangled pure states.
 $\tPsi_1,\dots,\tPsi_T$ are maximally entangled pure states
 and $\{ \hPi_m \}_{m=0}^{M-1}$ is a measurement.}
 \label{fig:tester_Bell}
\end{figure}

\begin{proposition} \label{pro:input_entangle_symT}
 Let $\mG$ be a group whose action on $\Her_\tV$ satisfies
 $g \b \I_\tV = \I_\tV$ $~(\forall g \in \mG)$.
 [Note that if $g \b \Endash$ is expressed in the form of Eq.~\eqref{eq:action_uni} or
 Eq.~\eqref{eq:action_sep}, then $g \b \I_\tV = \I_\tV$ holds.]
 Assume that $\S = \SG$ holds and that Problem~\eqref{prob:P} is $\mG$-symmetric.
 Also, assume that, for each $t \in \range{1}{T}$, there exists a subgroup $\mH^{(t)} \subseteq \mG$
 such that
 \begin{alignat}{1}
  \Trp{\WVT \ot \cdots \ot \Wt} (h \b \Endash)
  &= \Ad_{U'_{h,t}} \ot \ident_{\W_{t-1} \ot \V_{t-1} \ot \cdots \ot \W_1 \ot \V_1}
 \end{alignat}
 for any $h \in \mH^{(t)}$, where $\mH^{(t)} \ni h \mapsto U'_{h,t} \in \Uni_\Vt$ is
 an irreducible projective representation%
 \footnote{$\mH^{(t)} \ni h \mapsto U'_{h,t} \in \Uni_\Vt$ is called irreducible if
 it has only two subrepresentations $\{ 0 \}$ and $\Vt$,
 where a subrepresentation is a subspace $\V$ of $\Vt$ that satisfies
 $U'_{h,t}\ket{x} \in \V$ for any $h \in \mH^{(t)}$ and $\ket{x} \in \V$.}.
 Then, the optimal value of Problem~\eqref{prob:P} remains unchanged if $\S$ is replaced by $\S_\Psi$.
\end{proposition}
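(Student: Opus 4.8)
The plan is to prove that a symmetry-averaged optimal solution of Problem~\eqref{prob:P} with $\S=\SG$ automatically uses maximally entangled inputs, i.e.\ that its normalization $\summ\Phi_m$ is forced to equal $\I_\tV/\Pro{t=1}{T}N_\Vt$; the assertion then follows at once, since imposing $\summ\Phi_m\in\S_\Psi$ only shrinks the feasible set while this particular solution survives. If Problem~\eqref{prob:P} is infeasible the statement is vacuous, so assume an optimal solution exists; by Corollary~\ref{cor:sym_Phi} choose an optimal $\Phi$ with $g\b\Phi_m=\Phi_{g\b m}$ for all $g\in\mG$ and $m\in\mI_M$. Since each $g\b\Endash$ is linear and $m\mapsto g\b m$ permutes $\mI_M$, the positive operator $\varphi\coloneqq\summ\Phi_m\in\SG$ is $\mG$-invariant: $g\b\varphi=\varphi$ for every $g\in\mG$. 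It remains to show $\varphi=\I_\tV/\Pro{t=1}{T}N_\Vt$.

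For this I would exploit the comb structure of $\varphi$. By Eq.~\eqref{eq:TGSG} (and the remark following it) write $\varphi=\I_\WT\ot\tau$ with $\tau\in\OtT\Chn_{\Vt\ot\W_{t-1}}$, and let $\{\tau_t\}_{t=1}^T$ be the associated reduced combs, characterized (the analogue of Eq.~\eqref{eq:Comb} for this type of comb) by $\tau_T=\tau$, $\Trp{\Vt}\tau_t=\I_{\W_{t-1}}\ot\tau_{t-1}$ for $t\ge2$, and $\Trp{\V_1}\tau_1=1$. Iterating these relations, peeling off one identity factor at a time, gives for every $t\in\range{1}{T}$
\begin{alignat}{1}
 \Trp{\WVT \ot \cdots \ot \Wt} \varphi &= \left( \Pro{s=t}{T} N_{\W_s} \right) \tau_t.
 \label{eq:sketch_reduce}
\end{alignat}
Now apply the hypothesis with $X=\varphi$ and use $h\b\varphi=\varphi$: the scalar prefactor in Eq.~\eqref{eq:sketch_reduce} cancels and we obtain $\tau_t=(\Ad_{U'_{h,t}}\ot\ident_{\W_{t-1}\ot\V_{t-1}\ot\cdots\ot\W_1\ot\V_1})(\tau_t)$ for all $h\in\mH^{(t)}$; equivalently, $\tau_t$ commutes with $U'_{h,t}\ot\I$, where $U'_{h,t}$ acts on $\Vt$ and $\I$ on the remaining systems $\V_1,\dots,\V_{t-1},\W_1,\dots,\W_{t-1}$.

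Finally I would induct on $t$. Since $h\mapsto U'_{h,t}$ is irreducible, Schur's lemma says the commutant of $\{U'_{h,t}\ot\I:h\in\mH^{(t)}\}$ consists exactly of the operators $\I_\Vt\ot Y$; hence $\tau_t=\I_\Vt\ot Y_t$ for some $Y_t\ge\zero$. For $t=1$ there is no remaining system and $\Tr\tau_1=1$, so $\tau_1=\I_{\V_1}/N_{\V_1}$. For the inductive step, if $\tau_{t-1}$ is the maximally mixed operator $\I/\Pro{s=1}{t-1}N_{\V_s}$ on its system, then $\Trp{\Vt}\tau_t=\I_{\W_{t-1}}\ot\tau_{t-1}$ is a scalar multiple of the identity on the whole system $\V_1,\dots,\V_{t-1},\W_1,\dots,\W_{t-1}$, and comparing with $\Trp{\Vt}(\I_\Vt\ot Y_t)=N_\Vt Y_t$ forces $Y_t=\I/\Pro{s=1}{t}N_{\V_s}$, i.e.\ $\tau_t=\I/\Pro{s=1}{t}N_{\V_s}$. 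Taking $t=T$ gives $\tau=\I/\Pro{t=1}{T}N_\Vt$ and therefore $\varphi=\I_\WT\ot\tau=\I_\tV/\Pro{t=1}{T}N_\Vt\in\S_\Psi$; so this optimal $\Phi$ is feasible for the $\S_\Psi$-restricted problem and attains the same value of $P$, while that problem is a restriction of the original one (as $\S_\Psi\subseteq\SG$), whence the two optimal values coincide. I expect the only real work to be bookkeeping: getting Eq.~\eqref{eq:sketch_reduce} right so that the surviving $\Vt$ and the traced-out systems line up exactly with the form of the hypothesis, and invoking the multiplicity version of Schur's lemma correctly for a (possibly anti-unitary) irreducible projective representation.
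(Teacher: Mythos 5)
Your proposal is correct in substance but takes a genuinely different route from the paper's. The paper argues on the dual side: it takes an $\varepsilon$-optimal dual solution of the $\S_\Psi$-restricted problem, symmetrizes it via Theorem~\ref{thm:sym_chi} (which is why it needs $g\b\I_\tV=\I_\tV$, to make the restricted problem $\mG$-symmetric), and then applies the partial-trace hypothesis and Schur's lemma to the symmetrized dual variable $\chi^\d$ to show $\Trp{\WVT\ot\cdots\ot\Wt}\chi^\d=\I_\Vt\ot\chi'_t$, hence $\chi^\d\in\Lin(\Chn_\tV)$, so $\lambda_{\S_\Psi}(\chi^\d)=\lambdaSG(\chi^\d)$ and weak duality gives $P^\opt\le P_\Psi^\opt+\varepsilon$. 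You instead stay on the primal side: symmetrize a feasible tester (Theorem~\ref{thm:sym_Phi}/Corollary~\ref{cor:sym_Phi}), note that $\varphi\coloneqq\summ\Phi^\d_m\in\SG$ is $\mG$-invariant, and apply the same hypothesis and Schur's lemma directly to $\varphi$, pinning down the comb marginals inductively so that $\varphi=\I_\tV/\Pro{t=1}{T}N_\Vt\in\S_\Psi$. Your reduction formula for the marginals and the final induction are correct, and the Schur step (including the anti-unitary case via Dimmock, handled exactly as in the paper's inline argument, which works for any Hermitian invariant operator) goes through. Your route buys a stronger structural fact — every $\mG$-symmetric feasible tester automatically has maximally entangled inputs — and it needs neither the hypothesis $g\b\I_\tV=\I_\tV$ nor the strong duality of Theorem~\ref{thm:dual}; the paper's route buys independence from attainment of the primal optimum and keeps the whole argument within the dual symmetry machinery of Sec.~\ref{sec:symmetry}.

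The one genuine caveat is your opening step: you assume an optimal solution to Problem~\eqref{prob:P} exists, justified only by excluding infeasibility. Since $\mT$ need not be closed (the paper explicitly allows non-closed $\mT$, e.g.\ LOCC-type restrictions), the supremum may not be attained, so as written there is a small gap. It is easily repaired: Theorem~\ref{thm:sym_Phi} applies to any feasible tester and preserves both feasibility and the value of $P$, so run your argument on any $\Phi\in\P$ with $P(\Phi)\ge P^\opt-\varepsilon$; its symmetrization is feasible for the $\S_\Psi$-restricted problem with the same value, giving $P_\Psi^\opt\ge P^\opt-\varepsilon$ for every $\varepsilon>0$, which together with the trivial inequality $P_\Psi^\opt\le P^\opt$ yields the claim.
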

\begin{proof}
 Let $P^\opt$ be the optimal value of Problem~\eqref{prob:P}
 and $P_\Psi^\opt$ be that of Problem~\eqref{prob:P} with $\S$ replaced by $\S_\Psi$.
 The assumption $g \b \I_\tV = \I_\tV$ gives
 $g \b \varphi \in \S_\Psi$ $~(\forall \varphi \in \S_\Psi)$.
 Thus, Problem~\eqref{prob:P} with $\S$ replaced by $\S_\Psi$
 is also $\mG$-symmetric.
 Arbitrarily choose $0 < \varepsilon \in \Real_+$; then,
 it is easily seen that there exists $(\chi, q) \in \D$ such that
 $D_{\S_\Psi}(\chi, q) = P_\Psi^\opt + \varepsilon$.
 From Theorem~\ref{thm:sym_chi}, we have
 $D_{\S_\Psi}(\chi^\d, q^\d) \le P_\Psi^\opt + \varepsilon$.
 Let $\X_t \coloneqq \W_{t-1} \ot \V_{t-1} \ot \cdots \ot \W_1 \ot \V_1$ and
 $\chi^\d_t \coloneqq \Trp{\WVT \ot \cdots \ot \Wt} \chi^\d \in \Her_{\Vt \ot \X_t}$
 $~(t \in \range{1}{T})$.
 Assume now that, for each $t \in \range{1}{T}$,
 $\chi^\d_t$ is expressed in the form
 \begin{alignat}{1}
  \chi^\d_t &= \I_\Vt \ot \chi'_t, \quad \chi'_t \in \Her_{\X_t};
  \label{eq:chidt_I}
 \end{alignat}
 then, we can easily check $\chi^\d \in \Lin(\Chn_\tV)$ [see Eq.~\eqref{eq:Comb}].
 From $(\chi^\d, q^\d) \in \D$ and Eq.~\eqref{eq:LinChn_lambda},
 we have $P^\opt \le D_\S(\chi^\d, q^\d) = D_{\S_\Psi}(\chi^\d, q^\d)$,
 which gives $P^\opt \le P_\Psi^\opt + \varepsilon$.
 From $P_\Psi^\opt \le P^\opt$, $P^\opt = P_\Psi^\opt$ must hold.

 It remains to show that, for each $t \in \range{1}{T}$,
 $\chi^\d_t$ is expressed in the form of Eq.~\eqref{eq:chidt_I} .
 Let us arbitrarily choose $s \in \Pos_{\X_t}$ and
 let $\chi^\d_{t,s} \coloneqq \Trp{\X_t}[(\I_\Vt \ot s) \chi^\d_t] \in \Her_\Vt$.
 For any $h \in \mH^{(t)}$, it follows from $h \b \chi^\d = \chi^\d$ that
 $(\Ad_{U'_{h,t}} \ot \ident_{\X_t})(\chi^\d_t) = \chi^\d_t$ holds,
 which gives $\Ad_{U'_{h,t}}(\chi^\d_{t,s}) = \chi^\d_{t,s}$.
 Since the representation $h \mapsto U'_{h,t}$ $~[h \in \mH^{(t)}]$ is irreducible,
 from Schur's lemma (on anti-unitary groups) \cite{Dim-1963},
 $\chi^\d_{t,s}$ must be proportional to $\I_\Vt$.
 Since $\Tr \chi^\d_{t,s} = \braket{s, \Trp{\Vt} \chi^\d_t}$ holds
 from the definition of $\chi^\d_{t,s}$,
 $\chi^\d_{t,s} = \braket{s, \Trp{\Vt} \chi^\d_t} \I_\Vt / N_\Vt$ holds.
 Thus, we have that for any $s' \in \Pos_\Vt$,
 \begin{alignat}{1}
  \braket{s' \ot s, \chi^\d_t} &= \braket{s', \chi^\d_{t,s}}
  = \braket{s, \Trp{\Vt} \chi^\d_t} \braket{s', \I_\Vt / N_\Vt} \nonumber \\
  &= \braket{s' \ot s, \I_\Vt \ot \chi'_t},
  \label{eq:ss_chidt}
 \end{alignat}
 where $\chi'_t \coloneqq \Trp{\Vt} \chi^\d_t / N_\Vt$.
 Since Eq.~\eqref{eq:ss_chidt} holds for any $s$ and $s'$,
 we obtain Eq.~\eqref{eq:chidt_I}.
\end{proof}

Any tester with maximally entangled pure states is non-adaptive.
Thus, Proposition~\ref{pro:input_entangle_symT} implies that if there exists an optimal solution to
Problem~\eqref{prob:P} with $\S$ replaced by $\S_\Psi$,
then an adaptive strategy is not necessary for optimal discrimination in Problem~\eqref{prob:P}.
We will give some applications of this proposition in Subsec.~\ref{subsec:example_input}.

\section{Minimax strategy} \label{sec:minimax}

We now discuss a minimax strategy for a quantum process discrimination problem.
This strategy is useful in particular in the case in which the prior probabilities
of the processes are not known.

\subsection{Formulation}

Let us consider a process discrimination problem in which
the value of an objective function, $Q_k(\Phi)$, depends not only on a tester $\Phi$
but also on some random variable, $k \in \mI_K$.
We want to maximize the average of $Q_k(\Phi)$,
\begin{alignat}{1}
 Q(\mu, \Phi) &\coloneqq \sumk \mu_k Q_k(\Phi),
 \label{eq:fmm}
\end{alignat}
where $\mu \coloneqq \{ \mu_k \}_{k=0}^{K-1}$ is a probability distribution of $k$.
Here, we consider the situation in which the probability distribution $\mu$
is unknown but known to lie in a fixed subset, $\Prob$,
of $\Probmax \coloneqq \left\{ \mu \in \Real_+^K : \sumk \mu_k = 1 \right\}$.
In what follows, assume that $\Prob$ is a nonempty closed convex set.
A natural approach is to maximize the infimum of $Q(\mu, \Phi)$ over $\mu \in \Prob$.
This problem is formulated as
\begin{alignat}{1}
 \begin{array}{ll}
  \mbox{maximize} & \displaystyle \inf_{\mu \in \Prob} Q(\mu, \Phi) \\
  \mbox{subject~to} & \Phi \in \P, \\
 \end{array}
 \tag{$\mathrm{P_{mm}}$} \label{prob:Pm}
\end{alignat}
where $\P$ is defined by Eq.~\eqref{eq:TestercC}.
Assume that, for each $k \in \mI_K$, $Q_k(\Phi)$ is expressed in the form
\begin{alignat}{2}
 Q_k(\Phi) &\coloneqq \summ \braket{\Phi_m, c_{k,m}},
 \label{eq:fmm_k}
\end{alignat}
where $\{ c_{k,m} \}_{(k,m)=(0,0)}^{(K-1,M-1)} \subset \Her_\tV$ are constants.
Note that, when $c_{k,m}$ is expressed in the form $c_{k,m} = c'_{k,m} + d_k u$
with $c'_{k,m} \in \Her_\tV$, $d_k \in \Real$,
and $u \coloneqq \I_\tV / \prod_{t=1}^T N_\Wt$,
we can rewrite $Q_k(\Phi)$ as
\begin{alignat}{1}
 Q_k(\Phi) &= \summ \braket{\Phi_m, c'_{k,m}} + d_k, \quad \forall \Phi \in \P.
\end{alignat}

\begin{ex}[Optimal inconclusive discrimination]
 We can consider a minimax version of Problem~\eqref{prob:PPiInc}.
 Assume that the prior probabilities $p \coloneqq \{ p_r \}_{r=0}^{R-1}$ of the combs
 $\{ \hcE_r \}_{r=0}^{R-1}$ are completely unknown.
 The average success and inconclusive probabilities are, respectively, expressed as
 \begin{alignat}{2}
  \PS(\hPhi; p) &\coloneqq \sumr p_r \mathrm{Pr}(r|\hcE_r),
  &\quad
  \PI(\hPhi; p) &\coloneqq \sumr p_r \mathrm{Pr}(R|\hcE_r),
 \end{alignat}
 where $\mathrm{Pr}(m|\hcE_r) \coloneqq \braket{\hPhi_m, \hcE_r}$.
 Since the constraint $\PI(\hPhi; p) = \pinc$ $~(\forall p \in \Probmax)$ is too tight,
 we relax it to $\PI(\hPhi; p) \le \pinc$ $~(\forall p \in \Probmax)$.
 Let us consider the problem of minimizing the maximum average error probability,
 which is equal to $1 - \PS(\hPhi; p) - \PI(\hPhi; p)$.
 This problem is formulated as
 \begin{alignat}{1}
  \begin{array}{ll}
   \mbox{maximize}
    & \displaystyle \min_{p \in \Probmax} [\PS(\hPhi; p) + \PI(\hPhi; p)] \\
   \mbox{subject~to} & \hPhi \in \mTGh,
    ~\PI(\hPhi; p) \le \pinc ~(\forall p \in \Probmax). \\
  \end{array}
  \label{prob:PPiIncMinimax}
 \end{alignat}
 The second constraint is equivalent to
 $\mathrm{Pr}(R|\hcE_j) \le \pinc$ $~(\forall j \in \mI_R)$,
 and thus this problem is rewritten as Problem~\eqref{prob:Pm} with
 \begin{alignat}{3}
  M &\coloneqq R + 1, &\quad K &\coloneqq R, &\quad J &\coloneqq R, \nonumber \\
  c_{k,m} &\coloneqq (\delta_{m,k} + \delta_{m,R}) \cE_k, &\quad
  \Prob &\coloneqq \Probmax, &\quad \mT &\coloneqq \mTG, \nonumber \\
  a_{j,m} &\coloneqq \delta_{m,R} \cE_j, &\quad
  b_j &\coloneqq \pinc.
 \end{alignat}
 If $T = 1$ and $\V_1 = \Complex$ hold, then this problem is the state discrimination problem
 discussed in Ref.~\cite{Nak-Kat-Usu-2013-minimax}.
 In the special case of $\pinc = 0$, Problem~\eqref{prob:PPiIncMinimax} is rewritten as
 \begin{alignat}{1}
  \begin{array}{ll}
   \mbox{maximize} & \displaystyle \min_{p \in \Probmax} \PS(\hPhi; p) \\
   \mbox{subject~to} & \hPhi \in \mTGh; \\
  \end{array}
  \label{eq:ProbPPiIncMinimaxZero}
 \end{alignat}
 in this case, without loss of generality, we can assume $\hPhi_R = \zero$.
 Problem~\eqref{eq:ProbPPiIncMinimaxZero} corresponds to a minimax version of
 minimum-error discrimination.
\end{ex}

\begin{ex}[Discrimination of sets of combs]
 Let us consider the problem of discriminating $R$ subsets of combs,
 $\{ \hcE_{0,l} \}_{l=0}^{L_0-1}, \{ \hcE_{1,l} \}_{l=0}^{L_1-1}, \dots,
 \{ \hcE_{R-1,l} \}_{l=0}^{L_{R-1}-1}$,
 where $L_0,\dots,L_{R-1}$ are natural numbers.
 Assume that the prior probability, $p_{r,l}$, of each comb $\hcE_{r,l}$
 is unknown.
 We want to maximize the infimum of the average success probability given by
 \begin{alignat}{1}
  \PS'(\hPhi; p) &\coloneqq \sumr \sum_{l=0}^{L_r-1}
  p_{r,l} \braket{\hPhi_r, \hcE_{r,l}},
 \end{alignat}
 where $p \coloneqq \{ p_{r,l} \}_{(r,l)=(0,0)}^{(R-1,L_r-1)} \in \Prob$
 $~(K \coloneqq \sumr L_r)$.
 This problem can be formulated as follows:
 \begin{alignat}{1}
  \begin{array}{ll}
   \mbox{maximize} & \displaystyle \inf_{p \in \Prob} \PS'(\hPhi; p) \\
   \mbox{subject~to} & \hPhi \in \mTh. \\
  \end{array}
  \label{eq:ProbPPiIncMinimaxGroup}
 \end{alignat}
 One can easily verify that this problem is equivalent to Problem~\eqref{prob:Pm} with
 \begin{alignat}{1}
  M &\coloneqq R, \quad K \coloneqq \sumr L_r, \quad J \coloneqq 0,
  \quad c_{k(r,l),m} \coloneqq \delta_{r,m} \cE_{r,l},
 \end{alignat}
 where $k(r,l) \coloneqq \sum_{r'=0}^{r-1} L_{r'} + l$
 $~(r \in \mI_R, l \in \mI_{L_r})$.
 Note that if the prior probabilities are known,
 then this problem can be simply reduced to the problem of discriminating
 $R$ combs $\{ \sum_{l=0}^{L_r-1} p_{r,l} \hcE_{r,l} / p'_r \}_{r=0}^{R-1}$
 with the prior probabilities $\{ p'_r \}_{r=0}^{R-1}$,
 where $p'_r \coloneqq \sum_{l=0}^{L_r-1} p_{r,l}$.
\end{ex}

\subsection{Properties of minimax solutions}

$(\mu^\opt, \Phi^\opt) \in \Prob \times \P$ is called
a \termdef{minimax solution} (or \termdef{saddle point}) if
\begin{alignat}{1}
 Q(\mu^\opt, \Phi) &\le Q(\mu^\opt, \Phi^\opt) \le Q(\mu, \Phi^\opt)
 \label{eq:minimax_saddle}
\end{alignat}
holds for any $\mu \in \Prob$ and $\Phi \in \P$.
We refer to $\Phi^\opt$ as a \termdef{minimax tester}.
Let
\begin{alignat}{1}
 Q^\opt(\mu) &\coloneqq \sup_{\Phi \in \P} Q(\mu, \Phi).
 \label{eq:Qstar}
\end{alignat}
If there exists a minimax solution to $Q$, then
$(\mu^\opt, \Phi^\opt)$ is a minimax solution to $Q$ if and only if
$\Phi^\opt$ is optimal for Problem~\eqref{prob:Pm}
and $\mu^\opt \in \argmin_{\mu \in \Prob} Q^\opt(\mu)$ holds \cite{Ber-2009}.
Also, from Eq.~\eqref{eq:minimax_saddle}, $Q^\opt(\mu^\opt) = Q(\mu^\opt, \Phi^\opt)$ holds.

\begin{remark} \label{remark:minimax}
 Assume that $\P$ is nonempty and closed; then,
 in Problem~\eqref{prob:Pm}, there exists a minimax solution to $Q$.
\end{remark}
\begin{proof}
 $\P$ and $\Prob$ are nonempty compact convex sets.
 $Q(\mu, \Phi)$ is a continuous convex function of $\mu$ for fixed $\Phi$
 and a continuous concave function of $\Phi$ for fixed $\mu$.
 Then, from Ref.~\cite{Eke-Tem-1999} (Chap.~VI, Proposition~2.1),
 there exists a minimax solution to $Q$.
\end{proof}

The following remark states that the problem of finding $Q^\opt(\mu)$ can be formulated as
Problem~\eqref{prob:P}.
\begin{remark}
 For given $\mu \in \Prob$, let $P_\mu^\opt$ be the optimal value of Problem~\eqref{prob:P}
 with $c_m \coloneqq \sumk \mu_k c_{k,m}$; then, $Q^\opt(\mu) = P_\mu^\opt$ holds.
\end{remark}
\begin{proof}
 We have
 \begin{alignat}{2}
  Q^\opt(\mu) &= \sup_{\Phi \in \P} Q(\mu, \Phi)
  &&= \sup_{\Phi \in \P} \sumk \mu_k \summ \braket{\Phi_m, c_{k,m}} \nonumber \\
  &= \sup_{\Phi \in \P} \summ \braket{\Phi_m, c_m} &&= P_\mu^\opt.
 \end{alignat}
\end{proof}

\begin{proposition} \label{pro:minimax_nas}
 $(\mu, \Phi) \in \Prob \times \P$ is a minimax solution to $Q$
 if and only if $Q^\opt(\mu) \le Q(\mu',\Phi)$ holds for any $\mu' \in \Prob$.
\end{proposition}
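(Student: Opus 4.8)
The plan is to unwind the definition of a minimax solution (saddle point) and observe that both implications are immediate consequences of the defining property of $Q^\opt(\mu)$ as a supremum over testers; no substantive optimization machinery (not even the existence result of Remark~\ref{remark:minimax}) is needed.

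First I would prove the forward implication. Assume $(\mu,\Phi)$ is a minimax solution, so Eq.~\eqref{eq:minimax_saddle} holds with $(\mu^\opt,\Phi^\opt)$ replaced by $(\mu,\Phi)$. Taking the supremum over $\Phi' \in \P$ in the left inequality $Q(\mu,\Phi') \le Q(\mu,\Phi)$ gives $Q^\opt(\mu) \le Q(\mu,\Phi)$ by the definition of $Q^\opt$ in Eq.~\eqref{eq:Qstar}. Chaining this with the right inequality $Q(\mu,\Phi) \le Q(\mu',\Phi)$, which holds for every $\mu' \in \Prob$, yields $Q^\opt(\mu) \le Q(\mu',\Phi)$ for all $\mu' \in \Prob$, as claimed.

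Next I would prove the converse. Assume $Q^\opt(\mu) \le Q(\mu',\Phi)$ for all $\mu' \in \Prob$. Specializing to $\mu' = \mu$ gives $Q^\opt(\mu) \le Q(\mu,\Phi)$; combined with the trivial bound $Q(\mu,\Phi) \le \sup_{\Phi' \in \P} Q(\mu,\Phi') = Q^\opt(\mu)$, valid because $\Phi \in \P$, this forces the identity $Q(\mu,\Phi) = Q^\opt(\mu)$. The left inequality of Eq.~\eqref{eq:minimax_saddle} then follows since $Q(\mu,\Phi') \le Q^\opt(\mu) = Q(\mu,\Phi)$ for every $\Phi' \in \P$, and the right inequality follows since $Q(\mu,\Phi) = Q^\opt(\mu) \le Q(\mu',\Phi)$ for every $\mu' \in \Prob$ by hypothesis. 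Hence $(\mu,\Phi)$ is a minimax solution.

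I do not anticipate any real obstacle here, as each step is an immediate rewriting of the definitions; the only points requiring care are keeping the direction of the inequalities straight (the supremum in $Q^\opt$ is over testers $\Phi'$, not over priors $\mu'$) and noticing that the key identity $Q(\mu,\Phi) = Q^\opt(\mu)$ has to be extracted from the hypothesis before the two saddle-point inequalities can be read off.
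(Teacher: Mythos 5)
Your proof is correct and follows essentially the same route as the paper: specialize the hypothesis to $\mu' = \mu$ to get $Q(\mu,\Phi) = Q^\opt(\mu)$ and then read off the two saddle-point inequalities, and conversely chain the saddle-point inequalities with the definition of $Q^\opt$ as a supremum. Your write-up is just a more explicit version of the paper's terse argument, with no substantive difference.
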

\begin{proof}
 ``If'':
 Considering the case $\mu = \mu'$, we have $Q^\opt(\mu) \le Q(\mu,\Phi)$.
 Thus, from $Q(\mu,\Phi) \le Q^\opt(\mu)$, $Q(\mu,\Phi) = Q^\opt(\mu)$ must hold.
 Therefore, $(\mu,\Phi)$ is a minimax solution.

 ``Only if'':
 Equation~\eqref{eq:minimax_saddle} gives $Q^\opt(\mu) = Q(\mu,\Phi) \le Q(\mu',\Phi)$
 for any $\mu' \in \Prob$.
\end{proof}

\begin{proposition} \label{pro:minimax_interior}
 Assume that the affine hull of $\Prob$ contains $\Probmax$
 [or, equivalently, the affine hull of $\Prob$ is $(K-1)$-dimensional].
 Also, assume that $\mu$ is a relative interior point of $\Prob$ and that $\Phi \in \P$ holds.
 Then, $(\mu, \Phi)$ is a minimax solution to $Q$
 if and only if $Q_k(\Phi) = Q^\opt(\mu)$ holds for any $k \in \mI_K$.
\end{proposition}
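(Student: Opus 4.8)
The plan is to deduce the statement from Proposition~\ref{pro:minimax_nas}, which says that $(\mu,\Phi) \in \Prob \times \P$ is a minimax solution to $Q$ if and only if $Q^\opt(\mu) \le Q(\mu',\Phi)$ for every $\mu' \in \Prob$, and then to exploit that $\mu' \mapsto Q(\mu',\Phi) = \sumk \mu'_k Q_k(\Phi)$ is linear in its first argument together with the relative-interior hypothesis on $\mu$. Since $\mu \in \Prob$ (being a relative interior point) and $\Phi \in \P$ are already assumed, Proposition~\ref{pro:minimax_nas} applies.

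For the ``if'' part, assume $Q_k(\Phi) = Q^\opt(\mu)$ for all $k \in \mI_K$. Then for any $\mu' \in \Prob$ (hence $\mu' \in \Probmax$) we get $Q(\mu',\Phi) = \sumk \mu'_k Q^\opt(\mu) = Q^\opt(\mu)$ because $\sumk \mu'_k = 1$; in particular $Q^\opt(\mu) \le Q(\mu',\Phi)$, so $(\mu,\Phi)$ is a minimax solution by Proposition~\ref{pro:minimax_nas}.

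For the ``only if'' part, assume $(\mu,\Phi)$ is a minimax solution. Taking $\mu' = \mu$ in Proposition~\ref{pro:minimax_nas} gives $Q^\opt(\mu) \le Q(\mu,\Phi)$, while $Q(\mu,\Phi) \le Q^\opt(\mu)$ holds by the definition~\eqref{eq:Qstar} of $Q^\opt$; hence $Q(\mu,\Phi) = Q^\opt(\mu)$. It remains to show that the real numbers $Q_k(\Phi)$ do not depend on $k$. This is where the hypothesis enters: since the affine hull of $\Prob$ contains $\Probmax$ and $\Prob \subseteq \Probmax$, the affine hull of $\Prob$ is exactly the hyperplane $\{ x \in \Real^K : \sumk x_k = 1 \}$, whose associated linear subspace is spanned by the differences $e_k - e_{k'}$ of standard basis vectors of $\Real^K$. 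Because $\mu$ is a relative interior point of $\Prob$, there is $\varepsilon > 0$ with $\mu \pm \varepsilon (e_k - e_{k'}) \in \Prob$ for all $k,k' \in \mI_K$. Applying Proposition~\ref{pro:minimax_nas} to these two choices of $\mu'$ and using linearity of $Q(\cdot,\Phi)$ gives $Q(\mu,\Phi) \pm \varepsilon [ Q_k(\Phi) - Q_{k'}(\Phi) ] \ge Q^\opt(\mu) = Q(\mu,\Phi)$, and since both signs occur we conclude $Q_k(\Phi) = Q_{k'}(\Phi)$. Thus $Q_k(\Phi)$ equals a common value $v$ for all $k$, and then $v = \sumk \mu_k v = Q(\mu,\Phi) = Q^\opt(\mu)$, so $Q_k(\Phi) = Q^\opt(\mu)$ for every $k$, as desired.

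The only genuinely delicate point is the perturbation step $\mu \pm \varepsilon (e_k - e_{k'}) \in \Prob$: this is precisely where the combination of the two hypotheses is used, the condition on the affine hull guaranteeing that $e_k - e_{k'}$ is a feasible direction within the affine hull of $\Prob$, and relative interiority guaranteeing that a sufficiently small step along such a direction stays in $\Prob$. Everything else reduces to substitution and to the already-established Propositions.
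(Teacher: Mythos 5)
Your proposal is correct and follows essentially the same route as the paper: the ``if'' direction is the same direct application of Proposition~\ref{pro:minimax_nas}, and the ``only if'' direction uses the same key idea of perturbing $\mu$ along directions $e_k - e_{k'}$ (legitimate by the affine-hull and relative-interior hypotheses) to force all $Q_k(\Phi)$ to coincide, the paper merely phrasing this as a contradiction via $\argmin/\argmax$ while you argue directly with both signs.
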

\begin{proof}
 ``If'':
 $Q^\opt(\mu) = \sumk \mu'_k Q_k(\Phi) = Q(\mu',\Phi)$
 $~(\forall \mu' \in \Prob)$ holds.
 Thus, from Proposition~\ref{pro:minimax_nas}, $(\mu, \Phi)$ is a minimax solution.

 ``Only if'':
 Assume by contradiction that there exists $k \in \mI_K$ such that $Q_k(\Phi) \neq Q^\opt(\mu)$.
 In the case of $Q_0(\Phi) = \dots = Q_{K-1}(\Phi)$,
 $Q(\mu,\Phi) = \sumk \mu_k Q_k(\Phi) \neq Q^\opt(\mu)$ holds,
 which contradicts that $(\mu, \Phi)$ is a minimax solution.
 Then, we consider the other case.
 Let us choose $k_0 \in \argmin_{k \in \mI_K} Q_k(\Phi)$
 and $k_1 \in \argmax_{k \in \mI_K} Q_k(\Phi)$;
 then, $Q_{k_0}(\Phi) < Q_{k_1}(\Phi)$ holds.
 Also, let $\mu' \coloneqq \{ \mu_k + \varepsilon(\delta_{k,k_0} - \delta_{k,k_1}) \}_{k=0}^{K-1}$;
 then, since $\mu$ is a relative interior point of $\Prob$,
 $\mu' \in \Prob$ holds for sufficiently small $\varepsilon > 0$.
 We have $Q(\mu',\Phi) - Q(\mu,\Phi) = \varepsilon[Q_{k_0}(\Phi) - Q_{k_1}(\Phi)] < 0$,
 which contradicts that $(\mu, \Phi)$ is a minimax solution.
\end{proof}

In the special case of $\Prob = \Probmax$,
the following proposition and corollary hold.
\begin{proposition} \label{pro:minimax}
 Assume $\mu \in \Prob = \Probmax$ and $\Phi \in \P$.
 The following statements are all equivalent.
 \begin{enumerate}[label=(\arabic*)]
  \item $(\mu, \Phi)$ is a minimax solution to $Q$.
  \item $(\mu, \Phi)$ satisfies
        \begin{alignat}{1}
         Q^\opt(\mu) &= Q(\mu, \Phi), \nonumber \\
         Q_k(\Phi) &\ge Q_{k'}(\Phi), \quad \forall k,k' \in \mI_K
         ~\mbox{s.t.}~\mu_{k'} > 0.
         \label{eq:thm_minimax3}
        \end{alignat}
  \item $(\mu, \Phi)$ satisfies
        \begin{alignat}{1}
         Q_k(\Phi) &\ge Q^\opt(\mu), \quad \forall k \in \mI_K.
         \label{eq:thm_minimax2}
        \end{alignat}
 \end{enumerate}
\end{proposition}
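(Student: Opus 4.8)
The plan is to prove the cycle of implications $(1) \Rightarrow (3) \Rightarrow (2) \Rightarrow (1)$, with Proposition~\ref{pro:minimax_nas} as the main tool: it says that $(\mu,\Phi) \in \Prob \times \P$ is a minimax solution to $Q$ if and only if $Q^\opt(\mu) \le Q(\mu',\Phi)$ for every $\mu' \in \Prob$, and here $\Prob = \Probmax$ contains every point mass.

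First I would prove $(1) \Rightarrow (3)$. Assuming $(\mu,\Phi)$ is a minimax solution, Proposition~\ref{pro:minimax_nas} gives $Q^\opt(\mu) \le Q(\mu',\Phi)$ for all $\mu' \in \Probmax$. For each $k \in \mI_K$, specialize $\mu'$ to the distribution concentrated at $k$, namely $\mu'_{k'} \coloneqq \delta_{k,k'}$; then $Q(\mu',\Phi) = Q_k(\Phi)$ by Eq.~\eqref{eq:fmm}, so $Q^\opt(\mu) \le Q_k(\Phi)$, which is Eq.~\eqref{eq:thm_minimax2}.

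Next, $(3) \Rightarrow (2)$. Averaging Eq.~\eqref{eq:thm_minimax2} against $\mu$ gives $Q(\mu,\Phi) = \sumk \mu_k Q_k(\Phi) \ge \sumk \mu_k Q^\opt(\mu) = Q^\opt(\mu)$; since $\Phi \in \P$ and $Q^\opt(\mu) = \sup_{\Phi' \in \P} Q(\mu,\Phi') \ge Q(\mu,\Phi)$ by Eq.~\eqref{eq:Qstar}, it follows that $Q(\mu,\Phi) = Q^\opt(\mu)$, the first line of Eq.~\eqref{eq:thm_minimax3}. Then $\sumk \mu_k [Q_k(\Phi) - Q^\opt(\mu)] = Q(\mu,\Phi) - Q^\opt(\mu) = 0$ has every summand nonnegative by Eq.~\eqref{eq:thm_minimax2}, so $Q_{k'}(\Phi) = Q^\opt(\mu)$ for every $k'$ with $\mu_{k'} > 0$; combining with Eq.~\eqref{eq:thm_minimax2} yields $Q_k(\Phi) \ge Q^\opt(\mu) = Q_{k'}(\Phi)$ for all $k \in \mI_K$ and all such $k'$, which is the second line of Eq.~\eqref{eq:thm_minimax3}.

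Finally I would handle $(2) \Rightarrow (1)$, which I expect to be the step needing the most care. From the second line of Eq.~\eqref{eq:thm_minimax3}, $Q_{k'}(\Phi) \le Q_k(\Phi)$ for all $k$ whenever $\mu_{k'} > 0$, so $Q_{k'}(\Phi) = \min_{k \in \mI_K} Q_k(\Phi) \eqqcolon c$ for every such $k'$; hence $Q(\mu,\Phi) = \sum_{k' : \mu_{k'} > 0} \mu_{k'} Q_{k'}(\Phi) = c$, and together with the first line of Eq.~\eqref{eq:thm_minimax3} this gives $Q^\opt(\mu) = c = \min_{k \in \mI_K} Q_k(\Phi)$. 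Then for any $\mu' \in \Probmax$, $Q(\mu',\Phi) = \sumk \mu'_k Q_k(\Phi) \ge \sumk \mu'_k c = c = Q^\opt(\mu)$, and Proposition~\ref{pro:minimax_nas} concludes that $(\mu,\Phi)$ is a minimax solution. The one genuine subtlety is this identification $Q^\opt(\mu) = \min_{k} Q_k(\Phi)$, which requires both lines of Eq.~\eqref{eq:thm_minimax3} together (neither alone suffices); the rest is bookkeeping, resting on Proposition~\ref{pro:minimax_nas} and on $\Probmax$ containing all point masses.
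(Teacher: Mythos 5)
Your proof is correct, but you close the cycle in the opposite direction from the paper: the paper proves $(1) \Rightarrow (2) \Rightarrow (3) \Rightarrow (1)$, whereas you prove $(1) \Rightarrow (3) \Rightarrow (2) \Rightarrow (1)$, and the individual steps differ accordingly. For $(1) \Rightarrow (2)$ the paper uses a local perturbation $\mu' \coloneqq \{ \mu_j + \varepsilon(\delta_{j,k} - \delta_{j,k'}) \}_j$ with small $\varepsilon > 0$ (shifting weight from a $k'$ in the support of $\mu$ to an arbitrary $k$) and reads off the second line of Eq.~\eqref{eq:thm_minimax3} from $Q(\mu',\Phi) \ge Q(\mu,\Phi)$; you instead get $(1) \Rightarrow (3)$ directly by plugging point masses into Proposition~\ref{pro:minimax_nas}, which is arguably the more immediate use of the fact that $\Probmax$ contains all vertices of the simplex. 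The paper's $(2) \Rightarrow (3)$ is a one-line weighted sum of $\mu_l Q_k(\Phi) \ge \mu_l Q_l(\Phi)$ over $l$, while your $(3) \Rightarrow (2)$ requires the extra observation $Q^\opt(\mu) \ge Q(\mu,\Phi)$ and the identification $Q_{k'}(\Phi) = Q^\opt(\mu)$ on the support of $\mu$; similarly your $(2) \Rightarrow (1)$ needs the identification $Q^\opt(\mu) = \min_k Q_k(\Phi)$ (using both lines of Eq.~\eqref{eq:thm_minimax3}), whereas the paper's $(3) \Rightarrow (1)$ is a direct convex-combination bound. Both arguments rest on the same key lemma (Proposition~\ref{pro:minimax_nas}) and are comparably elementary; the paper's ordering makes each implication shorter, while yours makes the role of the extreme points of $\Probmax$ more explicit and shows along the way that $Q_k(\Phi)$ is constant and equal to $Q^\opt(\mu)$ on the support of $\mu$.
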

\begin{proof}
 $(1) \Rightarrow (2)$:
 The first line of Eq.~\eqref{eq:thm_minimax3} is obvious.
 Arbitrarily choose $k,k' \in \mI_K$ such that $\mu_{k'} > 0$.
 Let $\mu' \coloneqq \{ \mu_j + \varepsilon(\delta_{j,k} - \delta_{j,k'}) \}_{j=0}^{K-1}$
 with sufficiently small $\varepsilon > 0$;
 then, $\mu' \in \Probmax$ holds.
 Thus, $\varepsilon[Q_k(\Phi) - Q_{k'}(\Phi)] = Q(\mu', \Phi) - Q(\mu, \Phi) \ge 0$,
 i.e., the second line of Eq.~\eqref{eq:thm_minimax3}, holds.

 $(2) \Rightarrow (3)$:
 From the second line of Eq.~\eqref{eq:thm_minimax3},
 $\mu_l Q_k(\Phi) \ge \mu_l Q_l(\Phi)$ holds for any $k,l \in \mI_K$.
 Summing this equation over $l = 0,\dots,K-1$ yields
 \begin{alignat}{1}
  Q_k(\Phi) &\ge \sum_{l=0}^{K-1} \mu_l Q_l(\Phi) = Q(\mu, \Phi)
  = Q^\opt(\mu).
 \end{alignat}

 $(3) \Rightarrow (1)$:
 $Q^\opt(\mu) \le \sum_{k=0}^{K-1} \mu'_k Q_k(\Phi) = Q(\mu', \Phi)$
 holds for any $\mu' \in \Probmax$.
 Thus, from Proposition~\ref{pro:minimax_nas}, $(\mu, \Phi)$ is a minimax solution.
\end{proof}

\begin{cor} \label{cor:minimax_convex}
 Assume $\Prob = \Probmax$.
 A tester is a minimax one of $Q$ if and only if it is optimal for
 the following problem:
 \begin{alignat}{1}
  \begin{array}{ll}
   \mbox{maximize} & \displaystyle Q_\mathrm{min}(\Phi) \coloneqq \min_{k \in \mI_K} Q_k(\Phi) \\
   \mbox{subject~to} & \Phi \in \P. \\
  \end{array}
  \label{eq:minimax_convex}
 \end{alignat}
\end{cor}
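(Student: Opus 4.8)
The plan is to first recognize that Problem~\eqref{eq:minimax_convex} is nothing but Problem~\eqref{prob:Pm} specialized to $\Prob = \Probmax$: since the affine function $\mu \mapsto \sumk \mu_k Q_k(\Phi)$ attains its infimum over the simplex $\Probmax$ at a vertex, $\inf_{\mu \in \Probmax} Q(\mu,\Phi) = \min_{k \in \mI_K} Q_k(\Phi) = Q_\mathrm{min}(\Phi)$ for every $\Phi \in \P$. Hence ``optimal for Problem~\eqref{eq:minimax_convex}'' and ``optimal for Problem~\eqref{prob:Pm}'' are the same condition, and the assertion to be proved is that a minimax tester is precisely an optimal solution of Problem~\eqref{prob:Pm}.

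The ``only if'' direction I would get straight from the saddle-point definition. Suppose $\Phi^\opt$ is a minimax tester and pick $\mu^\opt \in \Probmax$ with $(\mu^\opt,\Phi^\opt)$ satisfying Eq.~\eqref{eq:minimax_saddle}. The right-hand inequality there says $Q(\mu^\opt,\Phi^\opt) \le Q(\mu,\Phi^\opt)$ for all $\mu$, so $Q(\mu^\opt,\Phi^\opt) = \inf_{\mu \in \Probmax} Q(\mu,\Phi^\opt) = Q_\mathrm{min}(\Phi^\opt)$. For any $\Phi \in \P$, the left-hand inequality together with $Q_\mathrm{min}(\Phi) \le Q(\mu^\opt,\Phi)$ then gives $Q_\mathrm{min}(\Phi) \le Q(\mu^\opt,\Phi) \le Q(\mu^\opt,\Phi^\opt) = Q_\mathrm{min}(\Phi^\opt)$, so $\Phi^\opt$ maximizes $Q_\mathrm{min}$ over $\P$, i.e., it is optimal for Problem~\eqref{eq:minimax_convex}.

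The ``if'' direction carries the real content. Let $\Phi$ be optimal for Problem~\eqref{eq:minimax_convex} and put $v^\opt \coloneqq Q_\mathrm{min}(\Phi) = \sup_{\Phi' \in \P} Q_\mathrm{min}(\Phi')$. By Proposition~\ref{pro:minimax} (the equivalence $(1) \Leftrightarrow (3)$), it suffices to produce $\bar{\mu} \in \Probmax$ with $Q_k(\Phi) \ge Q^\opt(\bar{\mu})$ for all $k$, equivalently (since $\min_k Q_k(\Phi) = v^\opt$) with $Q^\opt(\bar{\mu}) \le v^\opt$. Weak duality gives the reverse inequality for free, $Q^\opt(\mu) = \sup_{\Phi' \in \P} Q(\mu,\Phi') \ge \sup_{\Phi' \in \P} Q_\mathrm{min}(\Phi') = v^\opt$ for every $\mu$, so what remains is to show $\inf_{\mu \in \Probmax} Q^\opt(\mu) = v^\opt$ with the infimum attained. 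I would do this by passing to $\ol{\P}$, which is convex and compact (it is a closed subset of $\mTG$, and $\mTG$ is closed and bounded because $\summ \Phi_m$ is confined to the bounded set $\SG$ while each $\Phi_m$ lies in $\Pos_\tV$), and applying the minimax theorem already used in Remark~\ref{remark:minimax} (Ref.~\cite{Eke-Tem-1999}, Chap.~VI, Prop.~2.1) to the jointly continuous map $Q$, which is affine (hence convex in $\mu$ and concave in $\Phi$) in each variable, on the compact convex sets $\Probmax$ and $\ol{\P}$. This yields a saddle point $(\bar{\mu},\bar{\Phi}) \in \Probmax \times \ol{\P}$ and in particular $\inf_{\mu} \sup_{\Phi' \in \ol{\P}} Q(\mu,\Phi') = \sup_{\Phi' \in \ol{\P}} \inf_{\mu} Q(\mu,\Phi')$. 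By continuity of $Q_\mathrm{min}$ and of each map $\Phi' \mapsto Q(\mu,\Phi')$ together with density of $\P$ in $\ol{\P}$, the right side equals $\sup_{\Phi' \in \P} Q_\mathrm{min}(\Phi') = v^\opt$ and $\sup_{\Phi' \in \ol{\P}} Q(\mu,\Phi') = Q^\opt(\mu)$, whence $Q^\opt(\bar{\mu}) = \inf_{\mu} Q^\opt(\mu) = v^\opt$. Then $Q_k(\Phi) \ge v^\opt = Q^\opt(\bar{\mu})$ for all $k$, and Proposition~\ref{pro:minimax} gives that $(\bar{\mu},\Phi)$ is a minimax solution, so $\Phi$ is a minimax tester.

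The main obstacle is exactly this strong-duality/minimax step in the ``if'' direction: establishing $\sup_{\Phi \in \P} \inf_{\mu} Q = \inf_{\mu} \sup_{\Phi \in \P} Q$ with the outer infimum attained. The delicate point is that $\P$ need not be closed, so the minimax theorem cannot be invoked on $\P$ itself; the fix is to run the argument on the compact set $\ol{\P}$ and transfer the relevant suprema back to $\P$ using continuity of the objective and density of $\P$ in $\ol{\P}$. A small auxiliary fact to check along the way is the compactness of $\mTG$ (hence of $\ol{\P}$), which follows from boundedness of $\SG$ and positivity of the tester elements.
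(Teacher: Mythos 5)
Your proposal is correct and takes essentially the same route as the paper: the ``only if'' part follows directly from the saddle-point inequalities, and the ``if'' part passes to the compact closure $\ol{\P}$, invokes the Ekeland--Temam minimax theorem (exactly the content of Remark~\ref{remark:minimax}) to obtain a suitable $\bar{\mu}$, and concludes via the implication $(3)\Rightarrow(1)$ of Proposition~\ref{pro:minimax}. The only difference is presentational: you package the key step as strong duality, $\inf_{\mu}Q^{\opt}(\mu)=\sup_{\Phi\in\P}Q_{\mathrm{min}}(\Phi)$ with attainment, whereas the paper reaches the same conclusion by chaining $Q_{\mathrm{min}}(\tPhi)\ge Q_{\mathrm{min}}(\Phi^{\opt})\ge Q^{\opt}(\mu^{\opt})$ for a minimax solution over $\ol{\P}$.
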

\begin{proof}
 ``If'':
 We here replace $\P$ with its closure $\ol{\P}$.
 Let $(\mu^\opt,\Phi^\opt)$ be a minimax solution to $Q$.
 Remark~\ref{remark:minimax} guarantees that such a minimax solution exists.
 Also, let $\tPhi$ be an optimal solution to Eq.~\eqref{eq:minimax_convex}; then,
 $\tPhi$ is also optimal for Eq.~\eqref{eq:minimax_convex} with $\P$ replaced by $\ol{\P}$.
 Thus, $Q_\mathrm{min}(\tPhi) \ge Q_\mathrm{min}(\Phi^\opt) \ge Q^\opt(\mu^\opt)$ holds,
 where the last inequality follows from Eq.~\eqref{eq:thm_minimax2}.
 Therefore, $(\mu^\opt, \tPhi)$ satisfies Statement~(3) of Proposition~\ref{pro:minimax},
 which implies that $(\mu^\opt,\tPhi)$ is a minimax solution to $Q$.

 ``Only if'':
 Let $(\mu^\opt, \Phi^\opt)$ be a minimax solution to $Q$.
 From Eq.~\eqref{eq:thm_minimax2}, we have
 \begin{alignat}{1}
  Q_\mathrm{min}(\Phi^\opt) &\ge Q^\opt(\mu^\opt) = \sup_{\Phi \in \P} Q(\mu^\opt, \Phi)
  \ge \sup_{\Phi \in \P} Q_\mathrm{min}(\Phi),
  \label{eq:minimax_convex2}
 \end{alignat}
 which gives that $\Phi^\opt$ is optimal for Eq.~\eqref{eq:minimax_convex}.
\end{proof}

\subsection{Symmetry}

Using a similar argument as in Sec.~\ref{sec:symmetry},
we can see that if Problem~\eqref{prob:Pm} has a certain symmetry
and at least one minimax solution exists,
then there exists a symmetric minimax solution.

\begin{define}
 Let $\mG$ be a group.
 We will call Problem~\eqref{prob:Pm} \termdef{$\mG$-symmetric} if
 the following conditions hold:
 (a) there exist group actions of $\mG$ on $\mI_M$, $\mI_J$, $\mI_K$, and $\Her_\tV$;
 (b) the action of $\mG$ on $\Her_\tV$ is linearly isometric;
 and (c) Eq.~\eqref{eq:sym_eval_ab} and
 \begin{alignat}{2}
  g \b c_{k,m} &= c_{g \b k, g \b m}, &\quad &\forall g \in \mG, ~k \in \mI_K, ~m \in \mI_M,
  \nonumber \\
  \{ \mu_{g \b k} \}_{k=0}^{K-1} &\in \Prob, &\quad &\forall g \in \mG, ~\mu \in \Prob
  \label{eq:sym_eval_mm}
 \end{alignat}
 hold.
\end{define}

For any group $\mG$ and $\mu \in \Prob$, let
\begin{alignat}{1}
 \mu^\d &\coloneqq \left\{ \mu^\d_k \coloneqq \frac{1}{|\mG|} \sum_{g \in \mG} \mu_{g \b k}
 \right\}_{k=0}^{K-1}.
 \label{eq:mub}
\end{alignat}
We can easily verify $\mu^\d \in \Prob$ and
\begin{alignat}{1}
 \mu^\d_k &= \mu^\d_{g \b k}, \quad \forall g \in \mG, ~k \in \mI_K.
 \label{eq:sym_mu}
\end{alignat}

Analogously to Theorem~\ref{thm:sym_Phi}, the following theorem can be proved.
\begin{thm} \label{thm:sym_minimax}
 Let $\mG$ be a group.
 Assume that Problem~\eqref{prob:Pm} is $\mG$-symmetric;
 then, for any minimax solution $(\mu,\Phi)$ to $Q$,
 $(\mu^\d, \Phi^\d)$ defined by Eqs.~\eqref{eq:Phib} and \eqref{eq:mub} is
 also a minimax solution to $Q$.
\end{thm}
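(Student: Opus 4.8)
The plan is to verify directly that $(\mu^\d, \Phi^\d)$ satisfies the saddle-point inequalities~\eqref{eq:minimax_saddle}, in the same spirit as the proofs of Theorems~\ref{thm:sym_Phi} and~\ref{thm:sym_chi}. Throughout, for $\mu \in \Prob$ write $\mu^\g \coloneqq \{ \mu_{\inv{g} \b k} \}_{k=0}^{K-1}$ (the analogue of $q^\g$ in Eq.~\eqref{eq:chib}), so that Eq.~\eqref{eq:mub} reads $\mu^\d = |\mG|^{-1} \sumg \mu^\g$. By Eq.~\eqref{eq:sym_eval_mm} we have $\mu^\g \in \Prob$ for every $g \in \mG$, hence $\mu^\d \in \Prob$ by convexity, while Lemma~\ref{lemma:sym_kappa} gives $\Phi^\g, \Phi^\d \in \P$ whenever $\Phi \in \P$. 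Since $Q$ is linear in each of its two arguments, it will suffice to exhibit, for each $g \in \mG$, a minimax solution of the form $(\mu^\g, \Phi^{(\inv{g})})$ and then average over $\mG$.

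First I would establish the key intertwining identity $Q(\mu, \Phi^\g) = Q(\mu^\g, \Phi)$ for all $g \in \mG$, $\mu \in \Prob$ and $\Phi \in \P$. This is a direct computation: by the definition~\eqref{eq:Phig} of $\Phi^\g$, the isometry property~\eqref{eq:isometry}, and the relation $g \b c_{k,m} = c_{g \b k, g \b m}$ of Eq.~\eqref{eq:sym_eval_mm}, one gets $\braket{\Phi^\g_m, c_{k,m}} = \braket{\Phi_{g \b m}, c_{g \b k, g \b m}}$; summing over $m$ and using that $g \b \Endash : \mI_M \to \mI_M$ is a bijection yields $Q_k(\Phi^\g) = Q_{g \b k}(\Phi)$, and then reindexing the sum over $k \in \mI_K$ (again a bijection) gives the claimed identity. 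Replacing $g$ by $\inv{g}$, this also reads $Q(\nu, \Psi^{(\inv{g})}) = Q(\nu^{(\inv{g})}, \Psi)$.

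Next, fixing $g \in \mG$, I would check that $(\mu^\g, \Phi^{(\inv{g})})$ is a minimax solution with value $Q(\mu, \Phi)$. Using $(\mu^\g)^{(\inv{g})} = \mu$ (a routine consequence of the group-action axioms), the identity gives $Q(\mu^\g, \Phi^{(\inv{g})}) = Q(\mu, \Phi)$. For any $\Psi \in \P$, writing $\Psi = \Xi^{(\inv{g})}$ with $\Xi \coloneqq \Psi^{(g)} \in \P$ (Lemma~\ref{lemma:sym_kappa}, and $(\Psi^{(g)})^{(\inv{g})} = \Psi$), the identity gives $Q(\mu^\g, \Psi) = Q(\mu, \Xi) \le Q(\mu, \Phi)$ since $(\mu, \Phi)$ is a saddle point; for any $\nu \in \Prob$, the identity gives $Q(\nu, \Phi^{(\inv{g})}) = Q(\nu^{(\inv{g})}, \Phi) \ge Q(\mu, \Phi)$ since $\nu^{(\inv{g})} \in \Prob$ by Eq.~\eqref{eq:sym_eval_mm}. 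These are precisely the inequalities~\eqref{eq:minimax_saddle} for $(\mu^\g, \Phi^{(\inv{g})})$. Averaging $Q(\mu^\g, \Psi) \le Q(\mu, \Phi)$ over $g \in \mG$ then yields $Q(\mu^\d, \Psi) \le Q(\mu, \Phi)$ for all $\Psi \in \P$, and averaging $Q(\nu, \Phi^{(\inv{g})}) \ge Q(\mu, \Phi)$ over $g$ yields $Q(\nu, \Phi^\d) \ge Q(\mu, \Phi)$ for all $\nu \in \Prob$ (using $|\mG|^{-1} \sumg \Phi^{(\inv{g})} = \Phi^\d$); taking $\Psi = \Phi^\d$ and $\nu = \mu^\d$ forces $Q(\mu^\d, \Phi^\d) = Q(\mu, \Phi)$, whence $(\mu^\d, \Phi^\d)$ is a minimax solution to $Q$.

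The only real difficulty I foresee is bookkeeping: three group actions (on $\mI_M$, $\mI_K$ and $\Her_\tV$) act simultaneously, and one must keep the induced action on probability vectors consistent so that $|\mG|^{-1}\sumg \mu^\g$ is exactly $\mu^\d$ of Eq.~\eqref{eq:mub}, and verify the composition identities $(\mu^\g)^{(\inv{g})} = \mu$ and $(\Psi^{(g)})^{(\inv{g})} = \Psi$. Once the intertwining identity $Q(\mu, \Phi^\g) = Q(\mu^\g, \Phi)$ is in hand, everything else is the same averaging/convexity argument as in Theorem~\ref{thm:sym_Phi}. (Alternatively, one could invoke the characterization of minimax solutions recalled before Remark~\ref{remark:minimax} and show separately that $\Phi^\d$ is optimal for Problem~\eqref{prob:Pm}, via concavity of $\Phi \mapsto \inf_{\nu \in \Prob} Q(\nu,\Phi)$, and that $\mu^\d \in \argmin_{\nu \in \Prob} Q^\opt(\nu)$, via convexity of $Q^\opt$; but the direct route above is shorter.)
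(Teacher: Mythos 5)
Your proof is correct, and it is organized differently from the paper's. The paper's proof goes through the value function $Q^\opt$ and the characterization in Proposition~\ref{pro:minimax_nas}: it reduces the claim to showing $Q^\opt(\mu^\d) \le Q(\mu',\Phi^\d)$ for all $\mu' \in \Prob$, and establishes this via two separate bounds, $Q(\mu',\Phi^\d) \ge Q^\opt(\mu)$ (the same reindexing computation you perform) and $Q^\opt(\mu) \ge Q^\opt(\mu^\d)$ (which requires bounding $Q^\opt[\mu^\g]$ by $Q^\opt(\mu)$ via Lemma~\ref{lemma:sym_kappa} and then exchanging a supremum with the group average). You instead isolate the equivariance identity $Q(\mu,\Phi^\g) = Q(\mu^\g,\Phi)$, deduce that every translated pair $(\mu^\g,\Phi^{(\inv{g})})$ is itself a saddle point with the same value, and then average the two defining inequalities of Eq.~\eqref{eq:minimax_saddle} directly, using bilinearity of $Q$, convexity of $\Prob$ and $\P$, Eq.~\eqref{eq:sym_eval_mm}, and Lemma~\ref{lemma:sym_kappa}; plugging in $\Psi = \Phi^\d$ and $\nu = \mu^\d$ pins the common value. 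The ingredients are the same, but your route avoids any mention of $Q^\opt$ and the sup--average manipulation, and it makes explicit the covariance of the saddle-point property under the group action, which is a slightly more structural statement; the paper's route, in exchange, reuses Proposition~\ref{pro:minimax_nas} as a ready-made optimality criterion and mirrors the bookkeeping of Theorem~\ref{thm:sym_chi}. The composition identities you flag, $(\mu^\g)^{(\inv{g})} = \mu$ and $(\Psi^{(g)})^{(\inv{g})} = \Psi$, do follow routinely from the group-action axioms, and your convention $\mu^\g_k = \mu_{\inv{g}\b k}$ is consistent with Eq.~\eqref{eq:mub} after reindexing $g \mapsto \inv{g}$, so there is no gap.
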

\begin{proof}
 $\Phi^\d \in \P$ holds from Lemma~\ref{lemma:sym_kappa}.
 From Proposition~\ref{pro:minimax_nas},
 it suffices to show $Q^\opt(\mu^\d) \le Q(\mu',\Phi^\d)$ for any $\mu' \in \Prob$.
 In what follows, we show $Q(\mu',\Phi^\d) \ge Q^\opt(\mu)$ $~(\forall \mu' \in \Prob)$
 and $Q^\opt(\mu) \ge Q^\opt(\mu^\d)$.

 First, we show $Q(\mu',\Phi^\d) \ge Q^\opt(\mu)$ $~(\forall \mu' \in \Prob)$.
 We have that for any $\mu' \in \Prob$,

 \begin{alignat}{1}
  Q(\mu',\Phi^\d) &= \sumk \mu'_k \summ \frac{1}{|\mG|} \sumg
  \braket{\inv{g} \b \Phi_{g \b m}, c_{k,m}} \nonumber \\
  &= \frac{1}{|\mG|} \sumg \sumk \mu'_k \summ
  \braket{\Phi_{g \b m}, g \b c_{k,m}} \nonumber \\
  &= \frac{1}{|\mG|} \sumg \sum_{k'=0}^{K-1} \mu'_{\inv{g} \b k'} \sum_{m'=0}^{M-1}
  \braket{\Phi_{m'}, c_{k',m'}} \nonumber \\
  &= \frac{1}{|\mG|} \sumg \sum_{k'=0}^{K-1} \mu'_{\inv{g} \b k'} Q_{k'}(\Phi) \ge Q^\opt(\mu),
 \end{alignat}
 where $m' \coloneqq g \b m$ and $k' \coloneqq g \b k$.
 The inequality follows from
 $\{ \mu'_{\inv{g} \b k'} \}_{k'=0}^{K-1} \in \Prob$
 and $Q(\mu',\Phi) \ge Q^\opt(\mu)$ $~(\forall \mu' \in \Prob)$
 (see Proposition~\ref{pro:minimax_nas}).

 Next, we show $Q^\opt(\mu) \ge Q^\opt(\mu^\d)$.
 Let $\mu^\g \coloneqq \{ \mu^\g_k \coloneqq \mu_{g \b k} \}_{k=0}^{K-1}$; then,
 we have that for any $g \in \mG$,
 \begin{alignat}{1}
  Q^\opt[\mu^\g] &= \sup_{\Phi' \in \P} \sumk \mu_{g \b k}
  \summ \braket{\Phi'_m, c_{k,m}} \nonumber \\
  &= \sup_{\Phi' \in \P} \sum_{k'=0}^{K-1} \mu_{k'}
  \summ \braket{\Phi'_m, c_{\inv{g} \b k',m}} \nonumber \\
  &= \sup_{\Phi' \in \P} \sum_{k'=0}^{K-1} \mu_{k'}
  \sum_{m'=0}^{M-1} \braket{g \b \Phi'_{\inv{g} \b m'}, c_{k',m'}} \nonumber \\
  &\le \sup_{\Phi'' \in \P} \sum_{k'=0}^{K-1} \mu_{k'}
  \sum_{m'=0}^{M-1} \braket{\Phi''_{m'}, c_{k',m'}} \nonumber \\
  &= Q^\opt(\mu),
  \label{eq:sym_minimax_Qopt_eta}
 \end{alignat}
 where $k' \coloneqq g \b k$ and $m' \coloneqq g \b m$.
 The inequality follows from
 $\{ g \b \Phi'_{\inv{g} \b m'} \}_{m'=0}^{M-1} = \Phi^{\prime(\inv{g})} \in \P$
 (see Lemma~\ref{lemma:sym_kappa}).
 Thus, we have
 \begin{alignat}{1}
  Q^\opt(\mu) &\ge \frac{1}{|\mG|} \sumg Q^\opt[\mu^\g]
  = \frac{1}{|\mG|} \sumg \sup_{\Phi' \in \P} \sumk \mu^\g_k Q_k(\Phi') \nonumber \\
  &\ge \sup_{\Phi' \in \P} \frac{1}{|\mG|} \sumg \sumk \mu^\g_k Q_k(\Phi') = Q^\opt(\mu^\d).
 \end{alignat}
\end{proof}

\section{Examples} \label{sec:example}

Using several examples, we show how our approach can be used to
 extract some non-trivial properties of optimal discrimination.

\subsection{Some cases in which a tester with maximally entangled pure states can be optimal}
\label{subsec:example_input}

In this subsection, as applications of Theorem~\ref{thm:opt_two} and
Proposition~\ref{pro:input_entangle_symT},
we provide some examples in which there exists a tester with maximally entangled pure states that is optimal
for Problem~\eqref{prob:P}.

\begin{cor} \label{cor:input_entangle_binary}
 Let us consider the problem of finding minimum-error discrimination of two combs
 $\hcE_0$ and $\hcE_1$ with prior probabilities $p_0$ and $p_1$, respectively.
 There exists a tester with maximally entangled pure states that is optimal for Problem~\eqref{prob:PG}
 if and only if $|\Delta| \in \Lin(\Chn_\tV)$ holds,
 where $\Delta \coloneqq p_0 \cE_0 - p_1 \cE_1$
 and $|\Delta| \coloneqq \sqrt{\Delta^\dagger \Delta}$.
\end{cor}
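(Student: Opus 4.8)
The plan is to recognize this as an instance of Theorem~\ref{thm:opt_two} and then to solve the relevant dual problem in closed form.

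First I would reformulate. Minimum-error discrimination of the two combs $\hcE_0,\hcE_1$ is Problem~\eqref{prob:PG} with $M=2$, $J=0$, and $c_r=p_r\cE_r$ $~(r\in\mI_2)$. By the characterization of testers with maximally entangled pure states together with Eq.~\eqref{eq:SPsi}, the set of such testers (all of which are valid, since $\S_\Psi\subseteq\SG$) is exactly $\mT_1\coloneqq\{\Phi\in\mCG:\Phi_0+\Phi_1\in\S_\Psi\}$, and restricting to it gives Problem~\eqref{prob:P} with $\mC=\mCG$, $\S=\S_\Psi$. Its feasible set equals $\mT_1$, which is nonempty [take $\Phi_0=\Phi_1=\I_\tV/(2\Pro{t=1}{T}N_\Vt)$] and compact (closed and bounded, since $\zero\le\Phi_r\le\I_\tV/\Pro{t=1}{T}N_\Vt$), so its optimal value is attained. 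Hence an optimal tester with maximally entangled pure states exists for Problem~\eqref{prob:PG} if and only if the optimal value of this restricted problem equals that of Problem~\eqref{prob:PG}. I would then invoke Theorem~\ref{thm:opt_two} with $\mT_1$ as above, $\mT_2\coloneqq\mTG$, $\mC_1=\mC_2\coloneqq\mCG$, $\S_1\coloneqq\S_\Psi$, $\S_2\coloneqq\SG$; its hypotheses hold because the $\mT_1$-primal is feasible (just noted) and the $\mT_2$-dual, i.e.\ Problem~\eqref{prob:DG} for this minimum-error problem, attains its optimum. Since $\S_2=\SG$, statements (1)--(4) of Theorem~\ref{thm:opt_two} are equivalent; statement (1) is precisely the equality of optimal values above, while statement (4) says the dual of the restricted problem --- call it $(\mathrm{D}_1)$ --- has an optimal solution $\chi^\opt$ with $\chi^\opt\in\Lin(\Chn_\tV)$ [feasibility for $(\mathrm{D}_2)$ is automatic since $\mC_1=\mC_2$, and there is no $q$-component since $J=0$].

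Next I would solve $(\mathrm{D}_1)$ explicitly. Because $\mCG$ is self-dual, $(\mathrm{D}_1)$ reads: minimize $\lambda_{\S_\Psi}(\chi)=\Tr\chi/\Pro{t=1}{T}N_\Vt$ subject to $\chi\ge p_0\cE_0$ and $\chi\ge p_1\cE_1$. Substituting $\chi=p_1\cE_1+Y$ turns this into minimizing $\Tr Y$ over $Y\ge\zero$ with $Y\ge\Delta$. Writing $\Delta=\Delta_+-\Delta_-$ for the decomposition into positive and negative parts and letting $P_+$ be the projector onto the range of $\Delta_+$, every feasible $Y$ obeys $\Tr Y\ge\Tr(P_+YP_+)\ge\Tr(P_+\Delta P_+)=\Tr\Delta_+$; tracking the equality conditions forces first $Y=P_+YP_+$ and then $P_+YP_+=\Delta_+$, so the minimum is attained only at $Y=\Delta_+$. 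Therefore $(\mathrm{D}_1)$ has the unique optimal solution $\chi^\opt=p_1\cE_1+\Delta_+=\tfrac12(p_0\cE_0+p_1\cE_1+|\Delta|)$.

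Finally I would conclude. Since $\cE_0,\cE_1\in\Chn_\tV\subseteq\Lin(\Chn_\tV)$ and $\Lin(\Chn_\tV)$ is a real vector space, $p_0\cE_0+p_1\cE_1\in\Lin(\Chn_\tV)$; hence $\chi^\opt\in\Lin(\Chn_\tV)$ if and only if $|\Delta|\in\Lin(\Chn_\tV)$. As $(\mathrm{D}_1)$'s optimal solution is unique, statement (4) of Theorem~\ref{thm:opt_two} holds iff $\chi^\opt\in\Lin(\Chn_\tV)$, i.e.\ iff $|\Delta|\in\Lin(\Chn_\tV)$; chaining this with the equivalences from the first paragraph yields the corollary. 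The step I expect to be the main obstacle is the explicit analysis of $(\mathrm{D}_1)$, and in particular the uniqueness of the trace-minimizing $Y$: uniqueness is exactly what the ``only if'' direction needs in order to upgrade ``some dual optimum lies in $\Lin(\Chn_\tV)$'' to ``$|\Delta|\in\Lin(\Chn_\tV)$''.
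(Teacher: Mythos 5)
Your proposal is correct and follows essentially the same route as the paper's proof: invoke Theorem~\ref{thm:opt_two} with $\S_1=\S_\Psi$, $\S_2=\SG$, solve the restricted dual in closed form to get the unique optimum $\chi^\opt=\tfrac12(p_0\cE_0+p_1\cE_1+|\Delta|)$, and observe that $\chi^\opt\in\Lin(\Chn_\tV)$ iff $|\Delta|\in\Lin(\Chn_\tV)$. The only differences are cosmetic (your change of variables $Y=\chi-p_1\cE_1$ versus the paper's $X=2\chi-(p_0\cE_0+p_1\cE_1)$), and your explicit uniqueness and attainment arguments merely spell out steps the paper leaves implicit.
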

\begin{proof}
 In the minimum-error discrimination with $R \coloneqq 2$,
 $\mC \coloneqq \mCG$, and $\S \coloneqq \S_\Psi$ [see Eq.~\eqref{eq:SPsi}],
 Problem~\eqref{prob:D} is rewritten as
 \begin{alignat}{1}
  \begin{array}{ll}
   \mbox{minimize} & \displaystyle \Tr \chi / \Pro{t=1}{T} N_\Vt \\
   \mbox{subject~to} & \chi \ge p_0 \cE_0, \quad \chi \ge p_1 \cE_1 \\
  \end{array}
 \end{alignat}
 with $\chi \in \Her_\tV$.
 Let $\chi^\opt$ be its optimal solution
 and $X \coloneqq 2\chi^\opt - (p_0 \cE_0 + p_1 \cE_1)$;
 then, it follows that $X$ minimizes $\Tr X$ subject to $X \ge \Delta$ and $X \ge -\Delta$.
 Thus, we have $X = |\Delta|$ and $\chi^\opt = \frac{1}{2}(p_0 \cE_0 + p_1 \cE_1 + |\Delta|)$.
 Since $\hcE_0$ and $\hcE_1$ are combs,
 $\chi^\opt \in \Lin(\Chn_\tV)$ holds if and only if
 $|\Delta| \in \Lin(\Chn_\tV)$ holds.
 Therefore, Theorem~\ref{thm:opt_two} completes the proof.
\end{proof}
Note that Corollary~3 of Ref.~\cite{Jen-Pla-2016} states that,
in the problem of finding (single-shot) minimum-error discrimination of two channels
$\hLambda_0, \hLambda_1 \in \Chn(\V,\W)$ with prior probabilities $p_0$ and $p_1$, respectively,
there exists a tester with maximally entangled pure states that is optimal if and only if
$\Trp{\W} |p_0 \Lambda_0 - p_1 \Lambda_1| \propto \I_\V$ holds.
One can immediately verify that this is the special case of
Corollary~\ref{cor:input_entangle_binary} with $T \coloneqq 1$,
$\hcE_0 \coloneqq \hLambda_0$, and $\hcE_1 \coloneqq \hLambda_1$.

\begin{cor} \label{cor:input_entangle_HT}
 Let us consider the direct product, $\mG \coloneqq \mH_1 \times \dots \times \mH_T$, of
 some groups $\mH_1,\dots,\mH_T$.
 Assume that the group action of $\mG$ on $\Her_\tV$ is expressed as
 \begin{alignat}{1}
  (h_1,\dots,h_T) \b \Endash &\coloneqq \Ad_{U_{T,h_T} \ot U'_{T,h_T} \ot \cdots
  \ot U_{1,h_1} \ot U'_{1,h_1}}, \nonumber \\
  &\qquad (h_1,\dots,h_T) \in \mG,
 \end{alignat}
 where, for each $t \in \range{1}{T}$, $\mH_t \ni h_t \mapsto U_{t,h_t} \in \Uni_\Wt$ and
 $\mH_t \ni h_t \mapsto U'_{t,h_t} \in \Uni_\Vt$ are
 projective representations of $\mH_t$.
 Also, assume that $\S = \SG$ holds and that Problem~\eqref{prob:P} is $\mG$-symmetric.
 If $h_t \mapsto U'_{t,h_t}$ is irreducible for any $t \in \range{1}{T}$,
 then the optimal value of Problem~\eqref{prob:P}
 remains unchanged if $\S$ is replaced by $\S_\Psi$.
\end{cor}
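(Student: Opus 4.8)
The plan is to deduce the corollary from Proposition~\ref{pro:input_entangle_symT} by checking its hypotheses. Since the action of $\mG$ on $\Her_\tV$ here has the form of Eq.~\eqref{eq:action_sep}, we automatically have $g \b \I_\tV = \I_\tV$ for all $g \in \mG$, and by assumption $\S = \SG$ and Problem~\eqref{prob:P} is $\mG$-symmetric. It therefore remains only to produce, for each $t \in \range{1}{T}$, a subgroup $\mH^{(t)} \subseteq \mG$ and an irreducible projective representation $\mH^{(t)} \ni h \mapsto U'_{h,t} \in \Uni_\Vt$ satisfying
\[
 \Trp{\WVT \ot \cdots \ot \Wt}(h \b \Endash)
 = \Ad_{U'_{h,t}} \ot \ident_{\W_{t-1} \ot \V_{t-1} \ot \cdots \ot \W_1 \ot \V_1},
 \qquad \forall h \in \mH^{(t)} .
\]

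For each $t$ I would take $\mH^{(t)}$ to be the canonical copy of $\mH_t$ inside the direct product, namely $\mH^{(t)} \coloneqq \{ (e_1,\dots,e_{t-1},h_t,e_{t+1},\dots,e_T) : h_t \in \mH_t \}$ (with $e_s$ the identity of $\mH_s$), which is a subgroup of $\mG$, and set $U'_{h,t} \coloneqq U'_{t,h_t}$ for $h = (e_1,\dots,h_t,\dots,e_T)$; this is an irreducible projective representation of $\mH^{(t)}$ exactly because $h_t \mapsto U'_{t,h_t}$ is assumed to be one. The key step is then the displayed identity. Since each $\mH_s \ni h_s \mapsto U_{s,h_s}$ and $h_s \mapsto U'_{s,h_s}$ is a projective representation, $\Ad_{U_{s,e_s}} = \ident_{\W_s}$ and $\Ad_{U'_{s,e_s}} = \ident_{\V_s}$ for every $s$; hence for $h = (e_1,\dots,h_t,\dots,e_T) \in \mH^{(t)}$ the map $h \b \Endash$ factorizes as
\[
 \ident_{\W_T \ot \V_T \ot \cdots \ot \W_{t+1} \ot \V_{t+1}}
 \ot \Ad_{U_{t,h_t}} \ot \Ad_{U'_{t,h_t}}
 \ot \ident_{\W_{t-1} \ot \V_{t-1} \ot \cdots \ot \W_1 \ot \V_1} .
\]
Applying $\Trp{\WVT \ot \cdots \ot \Wt}$, which traces out $\W_T \ot \V_T \ot \cdots \ot \W_{t+1} \ot \V_{t+1} \ot \W_t$, the conjugations supported on those traced-out factors (the identity on $\W_T,\dots,\V_{t+1}$ and $\Ad_{U_{t,h_t}}$ on $\W_t$) are invisible to the partial trace, since conjugation by a unitary or anti-unitary operator acting only on the traced-out systems does not change $\Trp{\cdot}$; what survives is precisely $\Ad_{U'_{t,h_t}} \ot \ident$ on $\V_t \ot \W_{t-1} \ot \V_{t-1} \ot \cdots \ot \W_1 \ot \V_1$, which is the required identity.

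With all hypotheses verified, Proposition~\ref{pro:input_entangle_symT} applies and gives that the optimal value of Problem~\eqref{prob:P} is unchanged when $\S$ is replaced by $\S_\Psi$, which is the corollary. The only point that needs genuine care is the partial-trace bookkeeping: correctly reading off which tensor factors $\Trp{\WVT \ot \cdots \ot \Wt}$ removes, and observing that conjugation by (possibly anti-unitary) operators supported on those factors leaves the partial trace unaffected; everything else is a direct invocation of the cited proposition.
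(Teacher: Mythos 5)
Your proof is correct and is essentially the paper's own argument: the paper likewise proves this corollary by invoking Proposition~\ref{pro:input_entangle_symT} with $\mH^{(t)}$ taken to be the canonical copy of $\mH_t$ inside the direct product, treating the partial-trace identity and inherited irreducibility as immediate. Your extra bookkeeping (that conjugation by unitary or anti-unitary operators supported on the traced-out factors, including $\W_t$, is invisible to $\Trp{\WVT \ot \cdots \ot \Wt}$) is exactly the verification the paper leaves implicit.
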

\begin{proof}
 Proposition~\ref{pro:input_entangle_symT}
 with $\mH^{(t)} \coloneqq \{ (h_1,\dots,h_T) : h_{t'} = e_{t'} ~(\forall t' \neq t), ~h_t \in \mH_t \}$
 concludes the proof, where $e_{t'}$ is the identity element of $\mH_{t'}$.
\end{proof}

We provide two simple applications of this corollary.
Note that each of them is a special case of Problem~\eqref{prob:PG}.
They can be readily extended to Problem~\eqref{prob:P} with $\S = \SG$
(such as the problem shown in Example~\ref{ex:restricted_testers}).

\subsubsection{$T$-shot discrimination of symmetric channels}

Let us first consider the problem of finding optimal inconclusive discrimination
of $R$ channels, $\{ \hLambda_r \}_{r=0}^{R-1}$, discussed in Example~\ref{ex:sym_self}.
Assume that $\Phi^\g \in \mCG$ $~(\forall \Phi \in \mCG)$ holds; then,
this problem is $\mH^T$-symmetric.
It immediately follows from Corollary~\ref{cor:input_entangle_HT}, with $G \coloneqq \mH^T$, that
there exists a tester with maximally entangled pure states that is optimal for Problem~\eqref{prob:PPiInc}
if $h \mapsto \tU_h$ $~(h \in \mH)$ is irreducible.
In what follows, we present two typical examples.

The first example is the case in which $\hLambda_0,\dots,\hLambda_{R-1}$ are
teleportation-covariant channels \cite{Pir-Lau-Ott-Ban-2017,Pir-Lup-2017}.
Let $\mH$ be a group and $\{ \tU_h \}_{h \in \mH}$ be the set of unitary operators
generated by the Bell detection in a teleportation process.
Assume that a collection of channels $\{ \hLambda_r \}_{r=0}^{R-1}$ is teleportation-covariant,
i.e., there exists a projective unitary representation $h \mapsto U_h$ such that
\begin{alignat}{2}
 \hLambda_r &= \Ad_{U_h} \c \hLambda_r \c \Ad_{\tU_h}, &\quad &\forall r \in \mI_R, ~h \in \mH.
\end{alignat}
It is easily seen that
$\Phi^\g \in \mCG$ $~(\forall \Phi \in \mCG)$ holds and
$h \mapsto \tU_h$ $~(h \in \mH)$ is irreducible, and thus
there exists a tester with maximally entangled pure states that is optimal.
Note that its minimum-error version has been discussed in Ref.~\cite{Zhu-Pir-2020}.

The second example is the case in which $T = 1$ holds and
$\hLambda_0,\dots,\hLambda_{R-1} \in \Chn(\V,\W)$ are
unital qubit channels, i.e., unital channels with $\NV = \NW = 2$%
\footnote{A channel $\hLambda \in \Chn(\V,\W)$ is called unital
if $\hLambda(\I_\V / N_\V) = \I_\W / N_\W$
(or, equivalently, $\Trp{\V} \Lambda / N_\V = \I_\W / N_\W$) holds.
Examples of unital channels are mixed unitary qubit channels
and Schur channels \cite{Wat-2018}.}.
For any unital qubit channel $\hLambda \in \Chn(\V,\W)$,
since $\Trp{\W} \Lambda \propto \I_\V$ and $\Trp{\V} \Lambda \propto \I_\W$ hold,
$\Lambda$ is expressed in the form
\begin{alignat}{1}
 \Lambda &=
 \begin{bmatrix}
  s_0 & s_1 & t_1 & t_0 \\
  s_1^* & s_2 & t_2 & -t_1 \\
  t_1^* & t_2^* & s_2 & -s_1 \\
  t_0^* & -t_1^* & -s_1^* & s_0 \\
 \end{bmatrix}
 \label{eq:CLambda_unital}
\end{alignat}
with $s_0,s_2 \in \Real_+$ and $s_1,t_0,t_1,t_2 \in \Complex$.
We can easily verify that such $\Lambda$ satisfies
$\Ad_{\Sa \ot \Sa}(\Lambda) = \Lambda$, where $\Sa$ is the anti-unitary operator
defined by
\begin{alignat}{1}
 \Ad_{\Sa}(x) &= \Ad_S(x^\T), \quad x \in \Her_\V, \nonumber \\
 S &\coloneqq
 \begin{bmatrix}
  0 & 1 \\
  -1 & 0 \\
 \end{bmatrix}.
 \label{eq:Sa}
\end{alignat}
Let us consider a group $\mH \coloneqq \{ e, \th \}$ and its projective representation
$\mH \ni h \mapsto U_h \in \Uni_\V$ with $U_e \coloneqq \I_\V$ and $U_{\th} \coloneqq \Sa$;
then, we have
\begin{alignat}{2}
 \hLambda_r &= \Ad_{U_h} \c \hLambda_r \c \Ad_{U_h^\T}, &\quad &\forall r \in \mI_R, ~h \in \mH.
\end{alignat}
It follows that
$\Phi^\g \in \mCG$ $~(\forall \Phi \in \mCG)$ holds and
the representation $h \mapsto U_h$ is irreducible,
and thus there exists a tester with maximally entangled pure states that is optimal.

\subsubsection{Determination of the modulo sum of independent rotations}

We next consider the problem of determining the modulo sum of $T$ independent rotations.
Let $\mH \coloneqq \{ g_{j,k} \}_{(j,k)=(0,0)}^{(d-1,d-1)}$ be
the generalized Pauli group (or discrete Heisenberg-Weyl group),
whose projective representation is
\begin{alignat}{1}
 \mH \ni g_{j,k} &\mapsto U'_{g_{j,k}} \coloneqq
 \sum_{i=0}^{\NV-1} \exp\left( \i \frac{2\pi ik}{\NV} \right)
 \ket{i \oplus j}\bra{i} \in \Uni_\V,
\end{alignat}
where $\i \coloneqq \sqrt{-1}$, $\V$ is a system, $\oplus$ is addition modulo $\NV$,
and $\{ \ket{i} \}_{i=0}^{\NV-1}$ is the standard basis of $\V$.
$j$ and $k$ can be, respectively, interpreted as the amounts of $\x$- and $\z$- rotations.
Note that this representation is irreducible.
We consider the following process
\begin{alignat}{1}
 \tLambda_{(h_1,\dots,h_T)} &\coloneqq \hLambda^{(T)}_{h_T} \ast \hLambda^{(T-1)}_{h_{T-1}}
 \ast \cdots \ast \hLambda^{(1)}_{h_1},
 \quad (h_1,\dots,h_T) \in \mH^T,
\end{alignat}
where, for each $t \in \range{1}{T}$,
$\{ \hLambda^{(t)}_h \}_{h \in \mH} \subset \Chn(\V,\Wt)$ is
a collection of channels satisfying
\begin{alignat}{2}
 \hLambda^{(t)}_h &= \Ad_{U_{t,h}} \c \hLambda^{(t)}_e \c \Ad_{U'_h},
 &\quad \forall h &\in \mH,
\end{alignat}
$\Wt$ is a system,
and $\mH \ni h \mapsto U_{t,h} \in \Uni_\Wt$ is a projective representation.
Suppose that a process $\tLambda_{(h_1,\dots,h_T)}$ is given,
where $(h_1,\dots,h_T)$ is uniformly randomly chosen from $\mH^T$,
and that we want to determine the modulo sum of $\z$-rotations $\bigoplus_{t=1}^T \z(h_t)$,
where $\z$ is defined as $\z(g_{j,k}) \coloneqq k$ $~(g_{j,k} \in \mH)$.
This problem is formulated as the problem of finding optimal discrimination of the
processes $\{ \tcE_m \}_{m=0}^{\NV-1}$, where
\begin{alignat}{1}
 \tcE_m &\coloneqq \frac{1}{|\mH^T|} \sum_{\left\{ (h_1,\dots,h_T) \in \mH^T: \bigoplus_{t=1}^T \z(h_t) = m \right\}}
 \tLambda_{(h_1,\dots,h_T)}.
\end{alignat}
To simplify the discussion, we here consider the minimum-error strategy,
which is written as Problem~\eqref{prob:PG} with
\begin{alignat}{5}
 M &\coloneqq \NV, &&\quad
 J &\coloneqq 0, &&\quad
 c_m &\coloneqq \Choi_{\tcE_m}.
 \label{eq:param_rotation}
\end{alignat}
Let the group actions of $\mH^T$ on $\mI_M$
and $\Her_\tV$ be, respectively, defined as
\begin{alignat}{1}
 (h_1,\dots,h_T) \b m &\coloneqq
 \left\{ \bigoplus_{t=1}^T \z(h_t) \right\} \oplus m, \nonumber \\
 (h_1,\dots,h_T) \b \Endash &\coloneqq \Ad_{U_{T,h_T} \ot U'_{h_T} \ot \cdots
 \ot U_{1,h_1} \ot U'_{h_1}}
\end{alignat}
for any $(h_1,\dots,h_T) \in \mH^T$; then,
one can easily verify that this problem is $\mH^T$-symmetric.
Thus, there exists a tester with maximally entangled pure states that is optimal.

\subsection{Single-shot discrimination of cyclic unital qubit channels}

It is known that, in several state discrimination problems for highly symmetric states,
their optimal values can be obtained analytically.
Similarly, it is expected that we can analytically obtain the optimal values in
several process discrimination problems with high symmetry.

In this subsection, let us consider the following two problems:
the problem of obtaining single-shot optimal inconclusive discrimination
[i.e., Problem~\eqref{prob:PPiInc}] for $R$ unital qubit channels
$\{ \hLambda_r \}_{r=0}^{R-1} \subset \Chn(\V,\W)$
and its minimax version [i.e., Problem~\eqref{prob:PPiIncMinimax}].
Let $U$ be a unitary operator on $\W$ satisfying $U^R = \I_\W$
and $U^r \neq \I_\W$ for any $1 \le r < R$.
We choose the eigenvectors of $U$ as the standard basis of $\W$.
Assume
\begin{alignat}{2}
 \hLambda_{r \oplus_R 1} &= \Ad_U \c \hLambda_r, &\quad &\forall r \in \mI_R, \nonumber \\
 \hLambda_0(\rho^\T) &= [\hLambda_0(\rho)]^\T, &\quad &\forall \rho \in \Pos_\V,
 \label{eq:ex_sym_Lambda}
\end{alignat}
which means $\Lambda_{r \oplus_R 1} = \Ad_{U \ot \I_\V}(\Lambda_r)$
$~(\forall r \in \mI_R)$ and $\Lambda_0^\T = \Lambda_0$.
Also, assume that, in Problem~\eqref{prob:PPiInc}, the prior probabilities are all equal.

The symmetry expressed by Eq.~\eqref{eq:ex_sym_Lambda}
can be represented by group actions as follows.
Let
\begin{alignat}{1}
 \mG &\coloneqq \{ h^r h_*^k : r \in \mI_R, ~k \in \mI_2 \}
\end{alignat}
be a dihedral group of order $2R$,
generated by a `rotation' $h$ and a `reflection' $h_*$
that satisfy $h^R = e = h_*^2$ and $h_* h h_* = \inv{h}$.
We consider the actions of $\mG$ on $\mI_M$ and $\Her_\WV$ defined by
\begin{alignat}{1}
 h \b m &\coloneqq
 \begin{dcases}
  m \oplus_R 1, & m < R, \\
  R, & m = R, \\
 \end{dcases} \nonumber \\
 h_* \b m &\coloneqq m, \nonumber \\
 h \b x &\coloneqq \Ad_{U \ot \I_\V}(x), \nonumber \\
 h_* \b x &\coloneqq x^\T
\end{alignat}
for any $m \in \mI_M$ and $x \in \Her_\WV$.
Also, in Problem~\eqref{prob:PPiIncMinimax},
the action of $\mG$ on $\mI_K = \mI_J = \mI_R$ is defined by
$h \b r \coloneqq r \oplus_R 1$ and $h_* \b r \coloneqq r$ $~(\forall r \in \mI_R)$.
One can easily verify that Problems~\eqref{prob:PPiInc} and \eqref{prob:PPiIncMinimax} satisfying
Eq.~\eqref{eq:ex_sym_Lambda} are $\mG$-symmetric.

In Problem~\eqref{prob:PPiIncMinimax}, Theorem~\ref{thm:sym_minimax} guarantees that
there exists a minimax solution $(\mu^\d, \Phi^\d)$
satisfying Eqs.~\eqref{eq:sym_Phi} and \eqref{eq:sym_mu}.
This gives $\mu^\d_{r \oplus_R 1} = \mu^\d_r$, i.e.,
the prior probabilities $\mu^\d_0,\dots,\mu^\d_{R-1}$ are all equal.
Thus, Problem~\eqref{prob:PPiIncMinimax} is essentially the same as Problem~\eqref{prob:PPiInc}.
In what follows, we focus on solving Problem~\eqref{prob:PPiInc}.
Note that since each channel is unital, there exists a tester with maximally entangled
pure states that is optimal for Problem~\eqref{prob:PPiInc},
as shown in the previous subsection.
This fact reduces Problem~\eqref{prob:PPiInc} to the corresponding state discrimination problem.
However, solving this state discrimination problem is as hard as solving Problem~\eqref{prob:PPiInc}.

Let us consider Problem~\eqref{prob:DInc}.
We can see that $\lambdaSG(\chi) = \lambdamax(\Trp{\W} \chi)$
holds for any $\chi \in \Her_\WV$,
where $\lambdamax(X)$ is the largest eigenvalue of $X$.
From Corollary~\ref{cor:chi_comb}, without loss of generality,
we assume that an optimal solution, $\chi$, is in $\Lin(\Chn_\WV)$.
Corollary~\ref{cor:sym_chi} asserts that $(\chi^\d,q^\d)$ is also
an optimal solution.
From Eq.~\eqref{eq:sym_chi}, $h \b \chi^\d = \chi^\d$ and $h_* \b \chi^\d = \chi^\d$ hold.
One can also easily check $\chi^\d \in \Lin(\Chn_\WV)$.
Thus, Problem~\eqref{prob:DInc} can be rewritten as
\begin{alignat}{1}
 \begin{array}{ll}
  \mbox{minimize} & \lambdamax(\Trp{\W} \chi) - q \pinc \\
  \mbox{subject~to} & \chi \ge \zeta_0, ~\chi \ge \zeta_1, ~\chi \in \Lin(\Chn_\WV), \\
  & \Ad_{U \ot \I_\V}(\chi) = \chi, ~\chi^\T = \chi \\
 \end{array}
 \label{eq:ProbIncEx}
\end{alignat}
with $\chi \in \Her_\WV$ and $q \in \Real_+$,
where $\zeta_0 \coloneqq \Lambda_0 / R$ and
$\zeta_1 \coloneqq q \sumr \Lambda_r / R = q \sumr \Ad_{U^r \ot \I_\V}(\Lambda_0) / R$.
One should remember that $\zeta_1$ is a function of $q$.
Note that any feasible solution to Problem~\eqref{eq:ProbIncEx}
satisfies $\chi \ge \Lambda_r / R$ $~(\forall r \in \mI_R)$,
which follows from $\chi \ge \zeta_0$ and $\Ad_{U \ot \I_\V}(\chi) = \chi$.

From $\Trp{\W} \Lambda_0 \propto \I_\V$, $\Trp{\V} \Lambda_0 \propto \I_\W$,
and $\Lambda_0^\T = \Lambda_0$,
$\zeta_0$ and $\zeta_1$ can be expressed in the form
\begin{alignat}{2}
 \zeta_0 &=
 \begin{bmatrix}
  s_0 & s_1 & t_1 & t_0 \\
  s_1 & s_2 & t_2 & -t_1 \\
  t_1 & t_2 & s_2 & -s_1 \\
  t_0 & -t_1 & -s_1 & s_0 \\
 \end{bmatrix}, &\quad
 \zeta_1 &= qR
 \begin{bmatrix}
  s_0 & s_1 & 0 & 0 \\
  s_1 & s_2 & 0 & 0 \\
  0 & 0 & s_2 & -s_1 \\
  0 & 0 & -s_1 & s_0 \\
 \end{bmatrix}
 \label{eq:CLambda0_unital}
\end{alignat}
with some $s_k,t_k \in \Real$ $~(k \in \mI_3)$ [see Eq.~\eqref{eq:CLambda_unital}].
They are rewritten as
\begin{alignat}{2}
 \zeta_l &= \Theta
 \begin{bmatrix}
  A_l & B_l \\
  B_l & A_l \\
 \end{bmatrix}
 \Theta^\dagger, &\quad
 &\forall l \in \mI_2,
 \label{eq:CLambda0_unitalJ}
\end{alignat}
where
\begin{alignat}{3}
 \Theta &\coloneqq
 \begin{bmatrix}
  1 & 0 & 0 & 0 \\
  0 & 1 & 0 & 0 \\
  0 & 0 & 0 & 1 \\
  0 & 0 & -1 & 0 \\
 \end{bmatrix}, &\quad
 A_l &\coloneqq
 \begin{bmatrix}
  \ts_{l,0} & \ts_{l,1} \\
  \ts_{l,1} & \ts_{l,2} \\
 \end{bmatrix}, &\quad
 B_l &\coloneqq
 \begin{bmatrix}
  -\tit_{l,0} & \tit_{l,1} \\
  \tit_{l,1} & \tit_{l,2} \\
 \end{bmatrix}
\end{alignat}
and
\begin{alignat}{5}
 \ts_{0,k} &\coloneqq s_k, &\quad
 \tit_{0,k} &\coloneqq t_k, &\quad
 \ts_{1,k} &\coloneqq qRs_k, &\quad
 \tit_{1,k} &\coloneqq 0, &\quad &k \in \mI_3.
\end{alignat}
From $\chi \in \Lin(\Chn_\WV)$, $\Ad_{U \ot \I_\V}(\chi) = \chi$, and $\chi^\T = \chi$,
$\chi$ is expressed in the form
\begin{alignat}{1}
 \chi &=
 \begin{bmatrix}
  x+z & y & 0 & 0 \\
  y & x-z & 0 & 0 \\
  0 & 0 & x-z & -y \\
  0 & 0 & -y & x+z \\
 \end{bmatrix}
 = \Theta (X \oplus X) \Theta^\dagger
 \label{eq:chi_xyz}
\end{alignat}
with some $x,y,z \in \Real$, where
\begin{alignat}{2}
 X &\coloneqq
 \begin{bmatrix}
  x + z & y \\
  y & x - z \\
 \end{bmatrix}.
 \label{eq:chi_zxy_X}
\end{alignat}

$(\chi, q)$ is a feasible solution to Problem~\eqref{eq:ProbIncEx} if and only if
$\chi$ is expressed in the form of Eq.~\eqref{eq:chi_xyz}
and $\chi \ge \zeta_0$ and $\chi \ge \zeta_1$ hold.
Here, to derive a necessary and sufficient condition for $\chi \ge \zeta_l$,
we obtain the eigenvalues of $\chi - \zeta_l$.
From $\det \Theta = 1$, we have
\begin{alignat}{2}
 \det(\chi - \zeta_l - \lambda \I_\WV)
 &= \det
 \begin{bmatrix}
  X - A_l - \lambda \I_\V & - B_l \\
  - B_l & X - A_l - \lambda \I_\V \\
 \end{bmatrix}.
\end{alignat}
It follows that any squared matrices $C$ and $D$ with the same size satisfy
\begin{alignat}{1}
 \det
 \begin{bmatrix}
  C & D \\
  D & C \\
 \end{bmatrix}
 &= \det
 \begin{bmatrix}
  C + D & D + C \\
  D & C \\
 \end{bmatrix}
 = \det
 \begin{bmatrix}
  C + D & \zero \\
  D & C - D \\
 \end{bmatrix} \nonumber \\
 &= \det(C+D) \cdot \det(C-D).
\end{alignat}
Thus, by solving the equation $\det(X - A_l - \lambda \I_\V \pm B_l) = 0$,
we obtain the eigenvalues of $\chi - \zeta_l$ as follows:
\begin{alignat}{1}
 \lambda_{l,k,\pm} &= x - x_{l,k} \pm \sqrt{(y - y_{l,k})^2 + (z - z_{l,k})^2}, \quad k \in \mI_2,
\end{alignat}
where
\begin{alignat}{1}
 x_{l,k} &\coloneqq \frac{1}{2}\left[ \ts_{l,0} + \ts_{l,2} + (-1)^k (\tit_{l,0} - \tit_{l,2}) \right], \nonumber \\
 y_{l,k} &\coloneqq \ts_{l,1} - (-1)^k \tit_{l,1}, \nonumber \\
 z_{l,k} &\coloneqq \frac{1}{2}\left[ \ts_{l,0} - \ts_{l,2} + (-1)^k (\tit_{l,0} + \tit_{l,2}) \right].
\end{alignat}
Since $\lambda_{l,k,+} \ge \lambda_{l,k,-}$ holds for each $k \in \mI_2$,
$\chi \ge \zeta_l$ holds if and only if $\lambda_{l,0,-} \ge 0$ and $\lambda_{l,1,-} \ge 0$ hold.

For each $v \in \Real^3$, let $v_\x, v_\y, v_\z$ be, respectively,
the $\x$-, $\y$-, and $\z$- components of $v$,
and
\begin{alignat}{1}
 \mN_v &\coloneqq \left\{ v' \in \Real^3 : v'_\x - v_\x \ge \sqrt{(v'_\y - v_\y)^2 + (v'_\z - v_\z)^2} \right\}.
\end{alignat}
It follows that $\mN_v$ is a cone with its apex at the point $v$.
Let $u \coloneqq (x, y, z)$ and $\upsilon^{l,k} \coloneqq (x_{l,k}, y_{l,k}, z_{l,k})$;
then, since $\lambda_{l,k,-} \ge 0$ is equivalent to $u \in \mN_{\upsilon^{l,k}}$,
Problem~\eqref{eq:ProbIncEx} is rewritten as
\begin{alignat}{1}
 \begin{array}{ll}
  \mbox{minimize} & 2u_\x - q \pinc \\
  \mbox{subject~to} & u \in \mN_{\upsilon^{0,0}} \cap \mN_{\upsilon^{0,1}} \cap \mN_{\upsilon^{1,1}(q)},
 \end{array}
 \label{eq:ProbUnital_u}
\end{alignat}
where we use $\upsilon^{1,0} = \upsilon^{1,1}$, which is given by
$\tit_{1,0} = \tit_{1,1} = \tit_{1,2} = 0$.
To emphasize that $\upsilon^{1,1}$ is a function of $q$,
we denote it by $\upsilon^{1,1}(q)$.
It is easily seen that
the optimal value of Problem~\eqref{eq:ProbUnital_u} is equal to
\begin{alignat}{1}
 P^\opt(\pinc) &\coloneqq \inf_{q \in \Real_+} [2u^\opt_\x(q) - q\pinc],
 \label{eq:unital_Popt}
\end{alignat}
where, for each $q$, $u^\opt_\x(q)$ is the $\x$-component of the point
$u^\opt(q) \in \mN_{\upsilon^{0,0}} \cap \mN_{\upsilon^{0,1}} \cap \mN_{\upsilon^{1,1}(q)}$
that has the minimum $\x$-component.
Note that Eq.~\eqref{eq:unital_Popt} implies that
$- P^\opt(\pinc)$ is the Legendre transformation of $2u^\opt_\x(q)$.

We should note that Problem~\eqref{eq:ProbUnital_u} can also be expressed as
\begin{alignat}{1}
 \begin{array}{ll}
  \mbox{minimize} & \Tr X - q \pinc \\
  \mbox{subject~to} & X \ge \nu^{0,0}, ~X \ge \nu^{0,1}, ~X \ge \nu^{1,1}
 \end{array}
 \label{eq:ProbUnital}
\end{alignat}
with two-dimensional symmetric matrix $X$ given by Eq.~\eqref{eq:chi_zxy_X}
and $q \in \Real_+$, where
\begin{alignat}{1}
 \nu^{l,k} &\coloneqq
 \begin{bmatrix}
  x_{l,k} + z_{l,k} & y_{l,k} \\
  y_{l,k} & x_{l,k} - z_{l,k} \\
 \end{bmatrix}.
 \label{eq:X_nu}
\end{alignat}
If $q$ is fixed, then Problem~\eqref{eq:ProbUnital}
can be regarded as the dual of a qubit state discrimination problem
and thus can be analytically solved \cite{Ha-Kwo-2013}.
One can interpret Problem~\eqref{eq:ProbUnital_u} as
the geometrical representation of Problem~\eqref{eq:ProbUnital}
(see \cite{Dec-Ter-2010,Nak-Kat-Usu-2015-inc}).

As a simple example, we now consider the case $s_1 = t_1 = 0$.
Note that this case is equivalent to the case in which
$\hLambda_0$ is a Pauli channel.
Since $\upsilon^{l,k}_\y = y_{l,k} = 0$ holds, we need only to consider
in the plane $\y = 0$.
Figure~\ref{fig:result_xz} shows a geometrical representation of
Problem~\eqref{eq:ProbUnital_u} in the case of
$R = 3$, $s_0 = t_0 = 0.3 / R$, $s_2 = 0.7 / R$, and $t_2 = 0.1 / R$.
\begin{figure}[tb]
 \centering
 \InsertPDF{0.6}{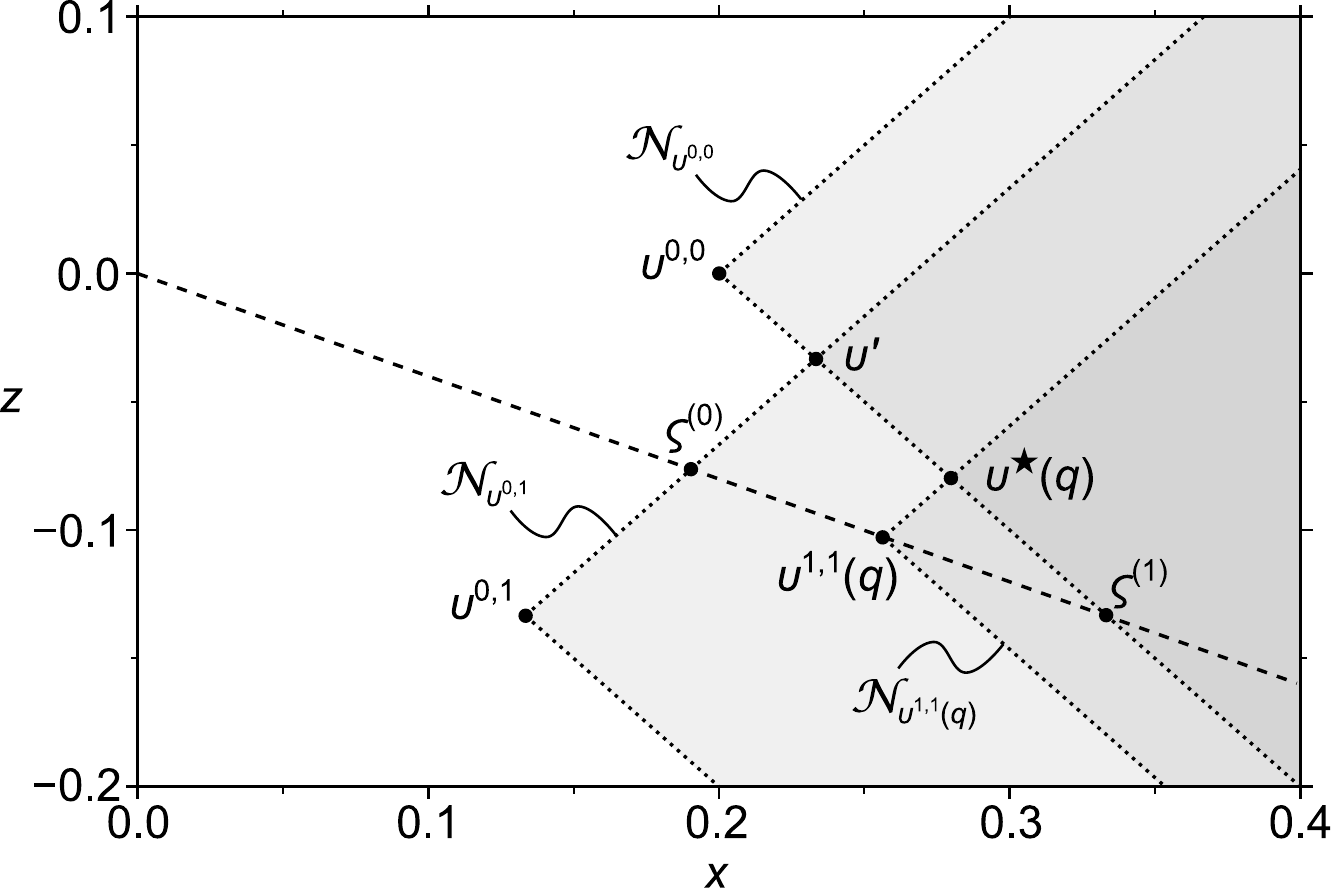}
 \caption{Geometrical representation of Problem~\eqref{eq:ProbUnital_u}
 in the case of $R = 3$, $s_0 = t_0 = 0.3 / R$, $s_1 = t_1 = 0$, $s_2 = 0.7 / R$,
 and $t_2 = 0.1 / R$.
 $\upsilon^{0,0} = (0.2,0,0)$, $\upsilon^{0,1} = (2/15,0,-2/15)$,
 and $\upsilon^{1,1}(q) = (0.5q,0,-0.2q)$ are in the plane $\y = 0$, and so is $u^\opt(q)$.
 $\upsilon^{1,1}(q)$ lies on the dashed straight line.
 The three cones $\mN_{\upsilon^{0,0}}$, $\mN_{\upsilon^{0,1}}$, and $\mN_{\upsilon^{1,1}(q)}$
 are shaded in gray.
 Note that for any $v \in \Real^3$ with $v_\y = 0$,
 $\mN_v$ is represented as the set $\{ (x',z') : x' - v_\x \ge |z' - v_\z| \}$
 in the plane $\y = 0$.}
 \label{fig:result_xz}
\end{figure}
Let $\upsilon'$ be the element of $\mN_{\upsilon^{0,0}} \cap \mN_{\upsilon^{0,1}}$
that has the minimum $\x$-component,
$q_0$ be the maximum value of $q$ satisfying $\upsilon' \in \mN_{\upsilon^{1,1}(q)}$,
and $q_1$ be the minimum value of $q$ satisfying $\upsilon^{1,1}(q) \in \mN_{\upsilon'}$.
Also, let $\varsigma^{(0)} \coloneqq \upsilon^{1,1}(q_0)$
and $\varsigma^{(1)} \coloneqq \upsilon^{1,1}(q_1)$.
Then, we can easily verify
\begin{alignat}{1}
 u^\opt(q) &=
 \begin{dcases}
  \upsilon', & q \le q_0, \\
  \upsilon' + \frac{q - q_0}{q_1 - q_0} [\varsigma^{(1)} - \upsilon'], & q_0 < q < q_1, \\
  \upsilon^{1,1}(q), & q \ge q_1. \\
 \end{dcases}
 \label{eq:unital_uopt}
\end{alignat}
Note that $\upsilon'$ and $\varsigma^{(1)}$ can be easily obtained from
$s_0$, $s_2$, $t_0$, and $t_2$.
Moreover, from Eq.~\eqref{eq:unital_Popt}, we have
\begin{alignat}{1}
 P^\opt(\pinc) &=
 \begin{dcases}
  2\upsilon'_\x - q_0 \pinc, & \pinc < p_0, \\
  2\varsigma^{(1)}_\x - q_1 \pinc, & \mbox{otherwise}, \\
 \end{dcases} \nonumber \\
 p_0 &\coloneqq
 \begin{dcases}
  \frac{2\varsigma^{(1)}_\x - 2\upsilon'_\x}{q_1 - q_0}, & q_1 \neq q_0, \\
  1, & \mbox{otherwise}. \\
 \end{dcases}
 \label{eq:unital_Popt_ex}
\end{alignat}
$2u^\opt_\x(q)$ and $P^\opt(\pinc)$ are shown in Fig.~\ref{fig:resultP}.
\begin{figure}[tb]
 \centering
 \InsertPDF{0.6}{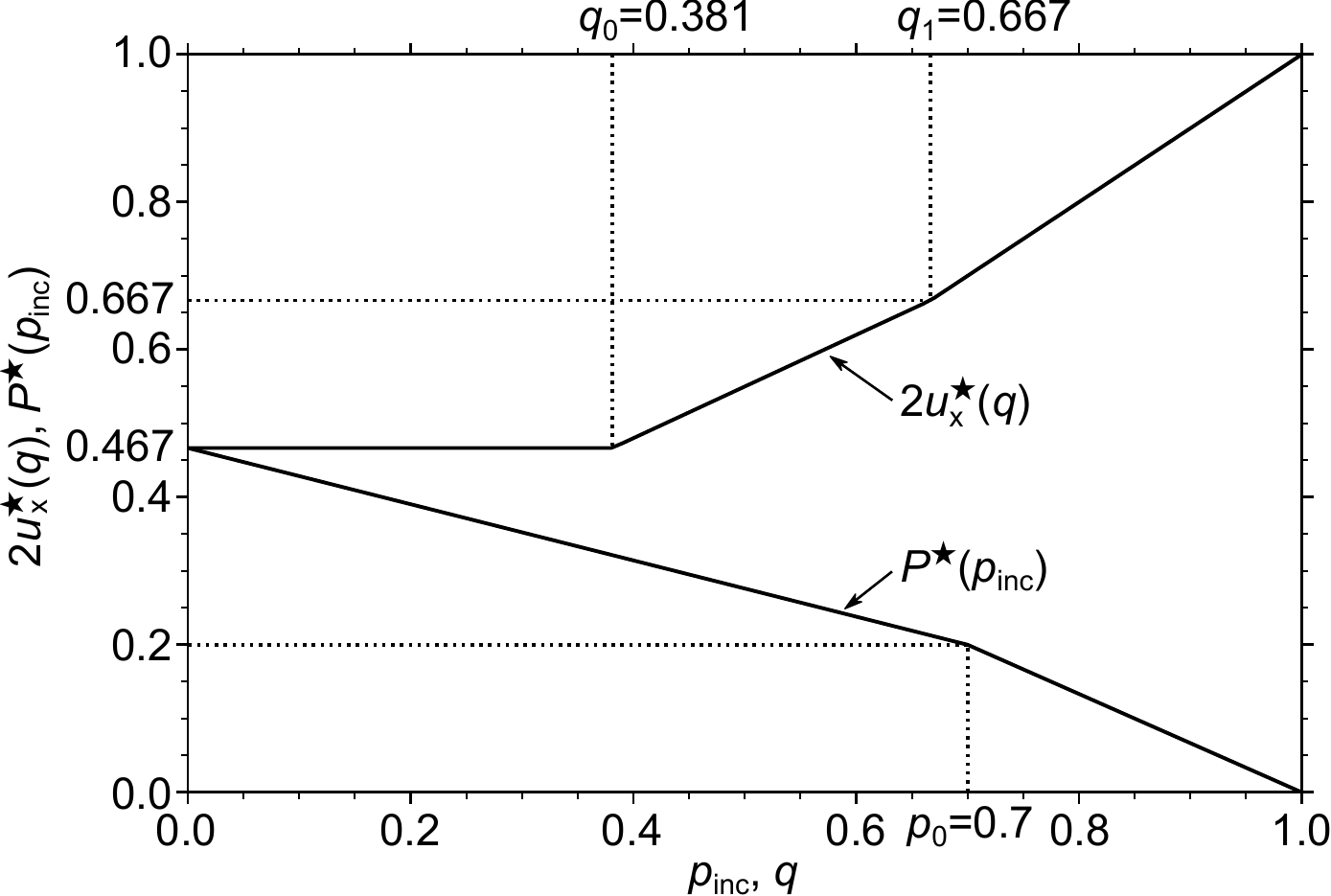}
 \caption{$2u^\opt_\x(q)$ and $P^\opt(\pinc)$ in the same conditions of Fig.~\ref{fig:result_xz}.}
 \label{fig:resultP}
\end{figure}
As seen in Eq.~\eqref{eq:unital_uopt}, $2u^\opt_\x(q)$ can be generally represented
with three line segments corresponding to $q \le q_0$, $q_0 < q < q_1$,
and $q \ge q_1$.
Note that since $\upsilon^{1,1}_\x(q) = q/2$ holds,
$2u^\opt_\x(q) = 2\upsilon^{1,1}_\x(q) = q$ holds when $q \ge q_1$.
Also, as seen in Eq.~\eqref{eq:unital_Popt_ex},
$P^\opt(\pinc)$ can be generally represented with two line segments
corresponding to $\pinc < p_0$ and $\pinc \ge p_0$.

\section{Conclusion}

We have studied a generalized problem of discriminating quantum processes
each of which can consist of several time steps and can have an internal memory.
This problem can be formulated as a convex problem with a quantum tester.
We first showed that the optimal values of this problem and its Lagrange dual problem
coincide (i.e., the strong duality holds).
Based on this result, necessary and sufficient conditions for a tester to be optimal were provided.
Necessary and sufficient conditions that the optimal value
remain unchanged even when a certain additional constraint is imposed were also given.
We next showed that for a problem that is symmetric with respect to given group actions,
there exists an optimal solution having the same type of symmetry.
Moreover, we discussed a minimax strategy for a generalized process discrimination problem.

Process discrimination problems can be interpreted as an extension of
state discrimination problems.
In state discrimination, the formulation of the problem as a convex problem
is useful for developing analytical and numerical techniques,
such as deriving analytical expressions for optimal measurements,
developing numerical algorithms for efficiently obtaining optimal solutions,
finding near-optimal measurements (e.g., a square root measurement),
and obtaining upper and lower bounds on optimal values.
We expect that our results will allow us to extend these techniques to
a broad class of process discrimination problems.

\begin{acknowledgments}
 We are grateful to O.~Hirota and T.~S.~Usuda for useful discussions.
 This work was supported by JSPS KAKENHI Grant Number JP19K03658.
\end{acknowledgments}

\appendix

\section{Proof of Theorem~\ref{thm:dual}} \label{append:dual}

Let $P^\opt$ and $D^\opt$ be, respectively, the optimal values of Problems~\eqref{prob:P}
and \eqref{prob:D}.
We consider the following Lagrangian associated with Problem~\eqref{prob:P}:
\begin{alignat}{1}
 L(\Phi,\varphi,\chi,q) &\coloneqq \summ \braket{\Phi_m,c_m} + \Braket{\varphi - \summ \Phi_m,\chi}
 - \sumj q_j \eta_j(\Phi) \nonumber \\
 &= \braket{\varphi,\chi} + \sumj q_j b_j - \summ \braket{\Phi_m,\chi - z_m(q)},
 \label{eq:L}
\end{alignat}
where $\Phi \in \mC$, $\varphi \in \S$, $\chi \in \Her_\tV$,
and $q \coloneqq \{q_j\}_{j=0}^{J-1} \in \Real_+^J$.
It follows that
\begin{alignat}{1}
 \inf_{\chi,q} L(\Phi,\varphi,\chi,q) &=
 \begin{dcases}
  \summ \braket{\Phi_m,c_m}, & \Phi \in \ol{\P}, ~\varphi = \summ \Phi_m \\
  -\infty, & \mathrm{otherwise},
 \end{dcases} \nonumber \\
 \sup_{\Phi,\varphi} L(\Phi,\varphi,\chi,q) &=
 \begin{dcases}
  D_\S(\chi,q), & (\chi,q) \in \D, \\
  \infty, & \mathrm{otherwise}
 \end{dcases}
\end{alignat}
holds.
Thus, from the max-min inequality, we have
\begin{alignat}{1}
 P^\opt &= \sup_{\Phi,\varphi} \inf_{\chi,q} L(\Phi,\varphi,\chi)
 \le \inf_{\chi,q} \sup_{\Phi,\varphi} L(\Phi,\varphi,\chi) = D^\opt.
 \label{eq:L_maxmin}
\end{alignat}

It remains to show the strong duality.
In the case of $D^\opt = -\infty$, the strong duality obviously holds from $P^\opt = D^\opt = -\infty$.
Now, we consider the other case.
It suffices to show that there exists $\Phi^\opt \in \ol{\P}$ such that $P(\Phi^\opt) \ge D^\opt$,
in which case, from $P^\opt \ge P(\Phi^\opt)$, we have $P^\opt = D^\opt$.
We consider the set
\begin{alignat}{1}
 \mZ &\coloneqq \left\{ \left( \{y_m + z_m(q) - \chi\}_{m=0}^{M-1}, D_\S(\chi,q) - d \right) :
 (\chi,y,d,q) \in \mZ_0 \right\} \nonumber \\
 &\qquad \subset \Her_\tV^M \times \Real,
\end{alignat}
where $y \coloneqq \{ y_m \}_{m=0}^{M-1}$ and
\begin{alignat}{1}
 \mZ_0 &\coloneqq \left\{ \left( \chi,y,d,q \right)
 \in \Her_\tV \times \mC^* \times \Real \times \Real_+^J : d < D^\opt \right\}.
\end{alignat}
One can easily verify that $\mZ$ is a nonempty convex set.
We can show $(\{\zero\},0) \not\in \mZ$.
Indeed, for any $(\chi,y,d,q) \in \mZ_0$ such that
$y_m + z_m(q) - \chi = \zero$ $~(\forall m)$,
since $\{\chi-z_m(q)\}_m = y \in \mC^*$ [i.e., $(\chi,q) \in \D$] holds,
$D_\S(\chi,q) - d \ge D^\opt - d > 0$ must hold.
From the separating hyperplane theorem \cite{Dha-Dut-2011},
there exists $(\{\Psi_m\}_{m=0}^{M-1},\alpha) \neq (\{\zero\},0) \in \Her_\tV^M \times \Real$
such that
\begin{alignat}{1}
 &\summ \braket{\Psi_m, y_m + z_m(q) - \chi} + \alpha [D_\S(\chi,q) - d] \ge 0, \nonumber \\
 &\qquad \forall (\chi,y,d,q) \in \mZ_0.
 \label{eq:separation0}
\end{alignat}
By substituting $y_m = \kappa y'_m$ $~(\kappa \in \Real_+, \{y'_m\}_m \in \mC^*)$
into Eq.~\eqref{eq:separation0} and taking the limit $\kappa \to \infty$,
we obtain $\{\Psi_m\}_m \in \mC$.
Also, we have $\alpha \ge 0$ in the limit $d \to -\infty$.
We can show $\alpha > 0$.
[Indeed, assume by contradiction that $\alpha = 0$.
Substituting $\chi = \kappa \I_\tV$ $~(\kappa \in \Real_+)$ into Eq.~\eqref{eq:separation0}
and taking the limit $\kappa \to \infty$ gives $\summ \Tr \Psi_m \le 0$.
From $\{\Psi_m\}_m \in \mC \subseteq \Pos_\tV^M$, $\Psi_m = \zero$ holds for any $m \in \mI_M$.
This contradicts $(\{\Psi_m\}_m,\alpha) \neq (\{\zero\},0)$.]
Let $\Phi^\opt_m \coloneqq \Psi_m / \alpha$; then, Eq.~\eqref{eq:separation0} is rewritten by
\begin{alignat}{1}
 &\summ \braket{\Phi^\opt_m, y_m + z_m(q) - \chi} + D_\S(\chi,q) - d \ge 0, \nonumber \\
 &\qquad \forall (\chi,y,d,q) \in \mZ_0.
 \label{eq:separation}
\end{alignat}
Substituting $\chi = \kappa \chi'$ $~(\kappa \in \Real_+, \chi' \in \Her_\tV)$
and $q_j = 0$ into Eq.~\eqref{eq:separation} and taking the limit $\kappa \to \infty$
yields $\lambdaS(\chi') \ge \summ \braket{\Phi^\opt_m,\chi'}$ $~(\forall \chi' \in \Her_\tV)$.
This implies $\summ \Phi^\opt_m \in \S$.
[Indeed, assume by contradiction that $\summ \Phi^\opt_m$ is not in $\S$;
then, from separating hyperplane theorem, there exists $\chi' \in \Her_\tV$ such that
$\braket{\phi,\chi'} < \braket{\summ \Phi^\opt_m,\chi'}$ $~(\forall \phi \in \S)$,
which contradicts $\lambdaS(\chi') \ge \summ \braket{\Phi^\opt_m,\chi'}$.]
Thus, $\Phi^\opt \in \ol{\T}$ holds [see Eq.~\eqref{eq:T}].
By substituting $q_j = \kappa \delta_{j,j'}$ $~(j' \in \mI_J)$ into Eq.~\eqref{eq:separation},
we have $\eta_{j'}(\Phi^\opt) \le 0$ in the limit $\kappa \to \infty$.
Thus, $\Phi^\opt \in \ol{\P}$ holds from Eq.~\eqref{eq:TestercC2}.
By substituting $y_m = \zero$, $\chi = \zero$, and $q_j = 0$ into Eq.~\eqref{eq:separation}
and taking the limit $d \to D^\opt$,
we have $P(\Phi^\opt) = \summ \braket{\Phi^\opt_m,c_m} \ge D^\opt$.
\QED

\section{Proof of Proposition~\ref{pro:chi_comb_feasible}} \label{append:chi_comb}

Before proving Proposition~\ref{pro:chi_comb_feasible}, we first show the following lemma.
\begin{lemma} \label{lemma:dual}
 For any $\chi \in \Her_\tV$,
 $\lambdaSG(\chi)$ is equal to the optimal value of the following optimization problem:
 \begin{alignat}{1}
  \begin{array}{ll}
   \mbox{minimize} & \omega_0 \\
   \mbox{subject~to} & \Trp{\WT} \chi \le \I_\VT \ot \omega_{T-1}, \\
   & \Trp{\Wt} \omega_t \le \I_\Vt \ot \omega_{t-1} ~(\forall t \in \range{1}{T-1}) \\
  \end{array}
  \label{prob:D0}
 \end{alignat}
 with $\left\{ \omega_t \in \Her_{\WVt \ot \cdots \ot \W_1 \ot \V_1} \right\}_{t=0}^{T-1}$
 (note that $\omega_0 \in \Real$ holds).
\end{lemma}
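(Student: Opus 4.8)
The plan is to recast $\lambdaSG(\chi)$ as the optimal value of a semidefinite \emph{maximization} over quantum combs, and then to peel off the comb-normalization constraints one layer at a time; each peeling step is an elementary Lagrangian duality together with a min-max exchange, and iterating them assembles exactly the single minimization of Problem~\eqref{prob:D0}.

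First I would use $\SG = \left\{ \I_\WT \ot \tau : \tau \in \OtT \Chn_{\Vt \ot \W_{t-1}} \right\}$ and $\braket{\I_\WT \ot \tau, \chi} = \braket{\tau, \Trp{\WT}\chi}$ to write $\lambdaSG(\chi) = \sup\!\left\{ \braket{\tau, \Trp{\WT}\chi} : \tau \in \OtT \Chn_{\Vt \ot \W_{t-1}} \right\}$. By the comb characterization of Eq.~\eqref{eq:Comb}, with the input and output systems interchanged, such a $\tau$ is exactly an operator $\tau_T \ge 0$ admitting $\tau_{T-1}, \dots, \tau_1 \ge 0$ (on the evident subsystems, with $\tau_0 \coloneqq 1$) such that $\Trp{\Vt}\tau_t = \I_{\W_{t-1}} \ot \tau_{t-1}$ for $t \in \range{2}{T}$ and $\Trp{\V_1}\tau_1 = 1$. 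For Hermitian $Y$ on $\Vt \ot \W_{t-1} \ot \cdots \ot \W_1 \ot \V_1$ I then define $f_t(Y)$ to be the supremum of $\braket{\tau_t, Y}$ over all such $t$-step combs $\tau_t$, so that $\lambdaSG(\chi) = f_T(\Trp{\WT}\chi)$, while the base case is immediate: $f_1(Y) = \lambdamax(Y) = \min\{\omega_0 \in \Real : Y \le \I_{\V_1} \ot \omega_0\}$.

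The heart of the proof is the recursion
\[
 f_t(Y) = \inf\bigl\{ f_{t-1}(\Trp{\W_{t-1}}\omega) : \omega \in \Her_{\W_{t-1} \ot \V_{t-1} \ot \cdots \ot \V_1}, ~ \I_\Vt \ot \omega \ge Y \bigr\}, \qquad t \in \range{2}{T}.
\]
To establish it, I would split $f_t(Y) = \sup_{\tau_{t-1}} \sup_{\tau_t}$, where the inner supremum runs over $\tau_t \ge 0$ subject only to the marginal constraint $\Trp{\Vt}\tau_t = \I_{\W_{t-1}} \ot \tau_{t-1}$. For a fixed comb $\tau_{t-1}$ this inner problem is an SDP whose Lagrangian dual is $\inf\{ \braket{\I_{\W_{t-1}} \ot \tau_{t-1}, \omega} : \I_\Vt \ot \omega \ge Y \} = \inf\{ \braket{\tau_{t-1}, \Trp{\W_{t-1}}\omega} : \I_\Vt \ot \omega \ge Y \}$, and there is no duality gap because the dual admits the strictly feasible point $\omega = c\,\I$ for $c > \lambdamax(Y)$ while the primal feasible set is nonempty and compact. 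Substituting this and exchanging the outer supremum over combs $\tau_{t-1}$ with the infimum over $\omega$ — which is legitimate because the objective is bilinear, the set of $(t-1)$-step combs is compact and convex, and $\{\omega : \I_\Vt \ot \omega \ge Y\}$ is convex, so a minimax theorem applies — turns $\sup_{\tau_{t-1}} \braket{\tau_{t-1}, \Trp{\W_{t-1}}\omega}$ into $f_{t-1}(\Trp{\W_{t-1}}\omega)$, which is the recursion.

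Finally I would unroll the recursion from $t = T$ down to $t = 1$: introducing the successive multipliers $\omega_{T-1}, \dots, \omega_1$ and the scalar $\omega_0$, and merging the nested infima into one joint infimum over $(\omega_0, \dots, \omega_{T-1})$, reproduces precisely the constraints $\Trp{\WT}\chi \le \I_\VT \ot \omega_{T-1}$ and $\Trp{\Wt}\omega_t \le \I_\Vt \ot \omega_{t-1}$ $(t \in \range{1}{T-1})$ with objective $\omega_0$, i.e.\ Problem~\eqref{prob:D0}; the degenerate case $T = 1$ is just $f_1 = \lambdamax$. I expect the main obstacle to be the careful justification of the layer-by-layer min-max exchange — confirming at each level the compactness and convexity needed for the elementary SDP strong-duality fact and for the minimax theorem — together with the bookkeeping that matches the traced-out systems and identity factors to the precise form of Problem~\eqref{prob:D0}.
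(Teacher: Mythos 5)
Your proposal is correct, but it is organized differently from the paper's proof. The paper treats Problem~\eqref{prob:D0} directly as the primal of a single convex program: it writes one global Lagrangian $L_0(\tau,\omega)$ coupling all multipliers $\tau_1,\dots,\tau_T$ at once, observes via the max-min inequality that $\inf_\omega\sup_\tau L_0$ is the optimal value of \eqref{prob:D0} while $\sup_\tau\inf_\omega L_0$ is the value of the comb-maximization problem (whose constraints are exactly $\I_\WT\ot\tau_T\in\SG$, hence value $\lambdaSG(\chi)$), and then closes the gap in one stroke by exhibiting the explicit strictly feasible comb $\tau'_t$ built from maximally mixed states (Slater's condition). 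You instead proceed by induction over the time steps: you define $f_t$ as the value over $t$-step combs, peel one layer per step using an elementary one-constraint SDP duality (Slater on the $\omega$-side, which is clearly strictly feasible via $\omega = c\,\I$) followed by a Sion-type minimax exchange justified by compactness and convexity of the $(t-1)$-step comb set, and then unroll the recursion into the joint infimum that is \eqref{prob:D0}. Both routes are sound: the paper's one-shot argument avoids any minimax theorem but requires guessing a global interior point for the multi-layer constraint set, whereas your recursion needs only the trivial strictly feasible point at each layer, at the cost of invoking a minimax exchange $T-1$ times and slightly more index bookkeeping; your layered structure also makes the operational meaning of each $\omega_t$ (a ``value function'' for the remaining steps) more transparent. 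The only points requiring care in your write-up — finiteness of each inner value (which follows from nonemptiness and compactness of the layer's comb set, as you note) and compactness of the $(t-1)$-step comb set for the exchange (bounded trace plus positivity) — are exactly the ones you flag, and they do hold.
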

\begin{proof}
 Let us consider the following Lagrangian associated with Problem~\eqref{prob:D0}:
 \begin{alignat}{1}
  L_0(\tau,\omega) &\coloneqq \omega_0 + \sum_{t=1}^{T-1}
  \Braket{\tau_t, \Trp{\Wt} \omega_t - \I_\Vt \ot \omega_{t-1}} \nonumber \\
  &\quad + \Braket{\tau_T, \Trp{\WT} \chi - \I_\VT \ot \omega_{T-1}} \nonumber \\
  &= \Braket{1 - \Tr \tau_1, \omega_0}
  + \sum_{t=1}^{T-1} \Braket{\I_\Wt \ot \tau_t - \Trp{\V_{t+1}} \tau_{t+1}, \omega_t}
  \nonumber \\
  &\quad + \Braket{\I_\WT \ot \tau_T, \chi},
  \label{eq:L0}
 \end{alignat}
 where $\tau \coloneqq \{ \tau_t \in \Pos_{\Vt \ot \W_{t-1} \ot \V_{t-1} \ot \cdots \ot \W_1 \ot \V_1} \}_{t=1}^T$
 and $\omega \coloneqq \left\{ \omega_t \in \Her_{\WVt \ot \cdots \ot \W_1 \ot \V_1}
 \right\}_{t=0}^{T-1}$.
 Due to the max-min inequality, we have
 \begin{alignat}{1}
  \sup_{\tau} \inf_{\omega} L_0(\tau,\omega) &\le \inf_{\omega} \sup_{\tau} L_0(\tau,\omega).
  \label{eq:L0_maxmin}
 \end{alignat}
 From the second equation of Eq.~\eqref{eq:L0},
 it is straightforward to derive that if $\omega$ is a feasible solution to
 Problem~\eqref{prob:D0},
 then $\sup_{\tau} L_0(\tau,\omega) = \omega_0$, otherwise $\infty$.
 Thus, the right-hand side of Eq.~\eqref{eq:L0_maxmin} is equal to
 the optimal value of Problem~\eqref{prob:D0}, denoted by $D_0^\opt$.
 Similarly, it follows from the last equation of Eq.~\eqref{eq:L0}
 that the left-hand side of Eq.~\eqref{eq:L0_maxmin} is equal to
 the optimal value of the following problem:
 \begin{alignat}{1}
  \begin{array}{ll}
   \mbox{maximize} & \braket{\I_\WT \ot \tau_T, \chi} \\
   \mbox{subject~to} & \Trp{\Vt} \tau_t = \I_{\W_{t-1}} \ot \tau_{t-1} ~(\forall t \in \range{2}{T}), \\
   & \Tr \tau_1 = 1 \\
  \end{array}
  \label{prob:P0}
 \end{alignat}
 with $\tau$.
 The constraint is equivalent to $\tau_T \in \OtT \Chn_{\Vt \ot \W_{t-1}}$
 (with $\W_0 \coloneqq \Complex$),
 or, equivalently, $\I_\WT \ot \tau_T \in \SG$.
 Thus, the optimal value is $\sup_{\varphi \in \SG} \braket{\varphi, \chi} = \lambdaSG(\chi)$.
 To prove $D_0^\opt = \lambdaSG(\chi)$, it suffices to show that Slater's condition holds.
 Let $\tau' \coloneqq \{ \tau'_t \}_{t=1}^T$ with
 $\tau'_1 \coloneqq \I_{\V_1} / N_{\V_1}$ and
 $\tau'_t \coloneqq \I_\Vt / N_\Vt \ot \I_{\W_{t-1}} \ot \tau'_{t-1}$ $~(t \in \range{2}{T})$;
 then, $\tau'$ is a feasible solution to
 Problem~\eqref{prob:P0} and $\tau'_t$ is positive definite for each $t \in \range{1}{T}$,
 which implies that Slater's condition holds.
\end{proof}

We are now ready to prove Proposition~\ref{pro:chi_comb_feasible}.
Arbitrarily choose $(\chi', q) \in \D$.
Let $\{ \omega'_t \}_{t=0}^{T-1}$ be an optimal solution to
Problem~\eqref{prob:D0} with $\chi \coloneqq \chi'$.
Also, let
\begin{alignat}{2}
 \omega_0 &\coloneqq \omega'_0, \nonumber \\
 \omega_t &\coloneqq \omega'_t + \frac{\I_\Wt}{N_\Wt}
 \ot (\I_\Vt \ot \omega_{t-1} - \Trp{\Wt} \omega'_t),
 &~ &\forall t \in \range{1}{T-1}, \nonumber \\
 \chi &\coloneqq \chi' + \frac{\I_\WT}{N_\WT} \ot (\I_\VT \ot \omega_{T-1} - \Trp{\WT} \chi').
 \label{eq:chi_I}
\end{alignat}
We have $\omega_t \ge \omega'_t$ $~(t \in \range{1}{T-1})$ and $\chi \ge \chi'$,
which follows from
$\I_\Vt \ot \omega_{t-1} \ge \I_\Vt \ot \omega'_{t-1} \ge \Trp{\Wt} \omega'_t$ and
$\I_\VT \ot \omega_{T-1} \ge \I_\VT \ot \omega'_{T-1} \ge \Trp{\WT} \chi'$.
Thus,
\begin{alignat}{1}
 \summ \braket{\Phi_m, \chi - z_m(q)} &\ge \summ \braket{\Phi_m, \chi' - z_m(q)} \ge 0,
 \quad \forall \Phi \in \mC,
\end{alignat}
which gives $\{\chi - z_m(q)\}_{m=0}^{M-1} \in \mC^*$
[i.e., $(\chi, q) \in \D$].
From Eq.~\eqref{eq:chi_I}, we have
\begin{alignat}{1}
 \Trp{\Wt} \omega_t &= \I_\Vt \ot \omega_{t-1}, \quad \forall t \in \range{1}{T-1}, \nonumber \\
 \Trp{\WT} \chi &= \I_\VT \ot \omega_{T-1}.
 \label{eq:chi_I2}
\end{alignat}
Assume now that $\chi \in \Lin(\Chn_\tV)$ holds, i.e.,
$\chi$ is expressed in the form $\chi = \beta_+ \chi_+ - \beta_- \chi_-$
$~(\beta_\pm \in \Real_+, ~\chi_\pm \in \Chn_\tV)$;
then, $\omega_0 = \beta_+ - \beta_-$ obviously holds.
From Eq.~\eqref{eq:Phim_c_1}, we have
$\lambdaSG(\chi) = \beta_+ - \beta_- = \omega_0 = \omega'_0 = \lambdaSG(\chi')$.

To complete the proof, we have to show $\chi \in \Lin(\Chn_\tV)$.
Let $u_t \coloneqq \I_{\WVt \ot \cdots \ot \W_1 \ot \V_1} / \prod_{t'=1}^t N_{\W_{t'}}$,
$\chi^+ \coloneqq \chi + p u_T$, and $\omega_t^+ \coloneqq \omega_t + p u_t$
$~(t \in \range{0}{T-1})$,
where $p \in \Real_+$ is taken to be sufficiently large such that
$\chi^+ \ge \zero$, $\omega_t^+ \ge \zero$ $~(\forall t \in \range{1}{T-1})$,
and $\omega_0^+ > 0$.
From Eq.~\eqref{eq:chi_I2} and $\Trp{\Wt} u_t = \I_\Vt \ot u_{t-1}$,
we have $\Trp{\Wt} \omega_t^+ = \I_\Vt \ot \omega_{t-1}^+$
$~(\forall t \in \range{1}{T-1})$
and $\Trp{\WT} \chi^+ = \I_\VT \ot \omega_{T-1}^+$,
which gives $\chi^+ / \omega_0^+ \in \Chn_\tV$ [see Eq.~\eqref{eq:Comb}].
From $u_T \in \Chn_\tV$, $\chi = \chi^+ - pu_T \in \Lin(\Chn_\tV)$ holds.
\QED

\section{Proof of Theorem~\ref{thm:Phi_nas}} \label{append:Phi_nas}

We will prove it using the proof of Theorem~\ref{thm:dual} in Appendix~\ref{append:dual}.
Arbitrarily choose $\Phi \in \P$ and $(\chi, q) \in \D$.
We consider a sequence $\{ \varphi_n \in \S \}_{n=1,2,\dots}$ such that
$\lim_{n \to \infty} \braket{\varphi_n,\chi} = \lambdaS(\chi)$.
From Eq.~\eqref{eq:L}, we have
\begin{alignat}{1}
 D_\S(\chi, q) - P(\Phi) &= - \sumj q_j \eta_j(\Phi)
 + \summ \braket{\Phi_m, \chi - z_m(q)} \nonumber \\
 &\quad + \left[ \lambdaS(\chi) - \summ \braket{\Phi_m, \chi} \right]
 \label{eq:P_DP_gap}
\end{alignat}
in the limit $n \to \infty$.
Since each term on the right-hand side of Eq.~\eqref{eq:P_DP_gap} is always nonnegative,
$P(\Phi) = D_\S(\chi, q)$ holds
[i.e., $\Phi$ and $(\chi, q)$ are, respectively, optimal for
Problems~\eqref{prob:P} and \eqref{prob:D}] if and only if
Eq.~\eqref{eq:channel_nas} holds.
\QED

\bibliographystyle{apsrev4-1}
%


\end{document}